\numberwithin{equation}{section}
\newlength\figureheight
\newlength\figurewidth
\newtheorem{thm}{Theorem}
\numberwithin{thm}{section}
\numberwithin{lmm}{section}
\newtheorem{crll}{Corollary}
\numberwithin{crll}{section}
\newtheorem{prop}{Proposition}
\numberwithin{prop}{section}
\theoremstyle{definition}
\newtheorem{defn}{Definition}
\numberwithin{defn}{section}
\newtheorem{expl}{Example}
\numberwithin{expl}{section}
\theoremstyle{remark}
\numberwithin{rmk}{section}
\newcommand{\Sact}{\mathcal{S}}
\newcommand{\R}{\mathbb{R}}
\newcommand{\N}{\mathbb{N}}
\newcommand{\C}{\mathbb{C}}
\newcommand{\Z}{\mathbb{Z}}
\newcommand{\G}{G}
\newcommand{\risefac}[2]{{#1}^{\overline{#2}}}
\newcommand{\bigO}{\mathcal{O}}
\newcommand{\Fop}{\mathcal{F}}
\newcommand{\asyOp}{\mathop{\mathcal{A}}}
\newcommand{\asyOpV}[3]{\sideset{_{}}{^{#1}_{#3}}{\asyOp}}
\newcommand*\fring[3]{\R[[#1]]^{#2}}
\newcommand{\hopffg}{\mathcal{H}^\text{\normalfont{fg}}}
\newcommand{\unit}{\text{\normalfont{u}}}
\newcommand{\counit}{\epsilon}
\newcommand{\sdsubdiags}{\mathcal{P}^\text{s.d.}}
\DeclareMathOperator{\id}{id}
\DeclareMathOperator{\Aut}{Aut}
\DeclareMathOperator{\res}{res}
\author{Michael Borinsky\footnote{borinsky@physik.hu-berlin.de}}
\date{}
\title{Renormalized asymptotic enumeration of Feynman diagrams}
\begin{document}

\ifmmode
\usebox{\fgtadpoletwo}
\else
\newsavebox{\fgtadpoletwo}
\savebox{\fgtadpoletwo}{%
\begin{tikzpicture}[x=1ex,y=1ex,baseline={([yshift=-.5ex]current bounding box.center)}] \coordinate (vm); \coordinate [left=.7 of vm] (v0); \coordinate [right=.7 of vm] (v1); \draw (v0) circle(.7); \draw (v1) circle(.7); \filldraw (vm) circle (1pt); \end{tikzpicture}%
}
\fi
\ifmmode
\usebox{\fgbananathree}
\else
\newsavebox{\fgbananathree}
\savebox{\fgbananathree}{%
\begin{tikzpicture}[x=1ex,y=1ex,baseline={([yshift=-.5ex]current bounding box.center)}] \coordinate (vm); \coordinate [left=1 of vm] (v0); \coordinate [right=1 of vm] (v1); \draw (v0) -- (v1); \draw (vm) circle(1); \filldraw (v0) circle (1pt); \filldraw (v1) circle (1pt); \end{tikzpicture}%
}
\fi
\ifmmode
\usebox{\fgbananathreeandhandle}
\else
\newsavebox{\fgbananathreeandhandle}
\savebox{\fgbananathreeandhandle}{%
\begin{tikzpicture}[x=1ex,y=1ex,baseline={([yshift=-.5ex]current bounding box.center)}] \coordinate (v); \coordinate [above=1.2 of v] (vm1); \coordinate [left=1 of vm1] (v01); \coordinate [right=1 of vm1] (v11); \draw (v01) -- (v11); \draw (vm1) circle(1); \filldraw (v01) circle (1pt); \filldraw (v11) circle (1pt); \coordinate [below=1.2 of v] (vm2); \coordinate [left=.75 of vm2] (v02); \coordinate [right=.75 of vm2] (v12); \coordinate [left=.7 of v02] (i02); \coordinate [right=.7 of v12] (o02); \draw (v02) -- (v12); \filldraw (v02) circle (1pt); \filldraw (v12) circle (1pt); \draw (i02) circle(.7); \draw (o02) circle(.7); \end{tikzpicture}%
}
\fi
\ifmmode
\usebox{\fgcloseddunce}
\else
\newsavebox{\fgcloseddunce}
\savebox{\fgcloseddunce}{%
\begin{tikzpicture}[x=1ex,y=1ex,baseline={([yshift=-.5ex]current bounding box.center)}] \coordinate (v0); \coordinate[right=.5 of v0] (v3); \coordinate[right=1.5 of v3] (v4); \coordinate[above left=1.5 of v0] (v1); \coordinate[below left=1.5 of v0] (v2); \coordinate[right=.7 of v4] (o); \draw (v3) to[bend left=20] (v2); \draw (v3) to[bend right=20] (v1); \draw (v1) to[bend right=80] (v2); \draw (v1) to[bend left=80] (v2); \draw (v3) -- (v4); \filldraw (v1) circle(1pt); \filldraw (v2) circle(1pt); \filldraw (v3) circle(1pt); \filldraw (v4) circle(1pt); \draw (o) circle(.7); \end{tikzpicture}%
}
\fi
\ifmmode
\usebox{\fgdbleye}
\else
\newsavebox{\fgdbleye}
\savebox{\fgdbleye}{%
\begin{tikzpicture}[x=1ex,y=1ex,baseline={([yshift=-.5ex]current bounding box.center)}] \coordinate (v0); \coordinate[above left=1.5 of v0] (v1); \coordinate[below left=1.5 of v0] (v2); \coordinate[above right=1.5 of v0] (v3); \coordinate[below right=1.5 of v0] (v4); \draw (v1) to[bend left=80] (v2); \draw (v1) to[bend right=80] (v2); \draw (v3) to[bend right=80] (v4); \draw (v3) to[bend left=80] (v4); \draw (v1) -- (v3); \draw (v2) -- (v4); \filldraw (v1) circle(1pt); \filldraw (v2) circle(1pt); \filldraw (v3) circle(1pt); \filldraw (v4) circle(1pt); \end{tikzpicture}%
}
\fi
\ifmmode
\usebox{\fgdblhandle}
\else
\newsavebox{\fgdblhandle}
\savebox{\fgdblhandle}{%
\begin{tikzpicture}[x=1ex,y=1ex,baseline={([yshift=-.5ex]current bounding box.center)}] \coordinate (v) ; \def \n {2}; \def \rad {.7}; \def \rud {2}; \def \rid {2.7}; \foreach \s in {1,...,\n} { \def \angle {360/\n*(\s - 1)}; \coordinate (s) at ([shift=({\angle}:\rad)]v); \coordinate (u) at ([shift=({\angle}:\rud)]v); \coordinate (t) at ([shift=({\angle}:\rid)]v); \draw (s) -- (u); \filldraw (u) circle (1pt); \filldraw (s) circle (1pt); \draw (t) circle (.7); } \draw (v) circle(\rad); \end{tikzpicture}%
}
\fi
\ifmmode
\usebox{\fghandle}
\else
\newsavebox{\fghandle}
\savebox{\fghandle}{%
\begin{tikzpicture}[x=1ex,y=1ex,baseline={([yshift=-.5ex]current bounding box.center)}] \coordinate (v0); \coordinate [right=1.5 of v0] (v1); \coordinate [left=.7 of v0] (i0); \coordinate [right=.7 of v1] (o0); \draw (v0) -- (v1); \filldraw (v0) circle (1pt); \filldraw (v1) circle (1pt); \draw (i0) circle(.7); \draw (o0) circle(.7); \end{tikzpicture}%
}
\fi
\ifmmode
\usebox{\fgpropellerthree}
\else
\newsavebox{\fgpropellerthree}
\savebox{\fgpropellerthree}{%
\begin{tikzpicture}[x=1ex,y=1ex,baseline={([yshift=-.5ex]current bounding box.center)}] \coordinate (v) ; \def \n {3}; \def \rad {1.2}; \def \rud {1.9}; \foreach \s in {1,...,\n} { \def \angle {360/\n*(\s - 1)}; \coordinate (s) at ([shift=({\angle}:\rad)]v); \coordinate (u) at ([shift=({\angle}:\rud)]v); \draw (v) -- (s); \filldraw (s) circle (1pt); \draw (u) circle (.7); } \filldraw (v) circle (1pt); \end{tikzpicture}%
}
\fi
\ifmmode
\usebox{\fgtwohandles}
\else
\newsavebox{\fgtwohandles}
\savebox{\fgtwohandles}{%
\begin{tikzpicture}[x=1ex,y=1ex,baseline={([yshift=-.5ex]current bounding box.center)}] \coordinate (v); \coordinate [above=1.2 of v] (v01); \coordinate [right=1.5 of v01] (v11); \coordinate [left=.7 of v01] (i01); \coordinate [right=.7 of v11] (o01); \draw (v01) -- (v11); \filldraw (v01) circle (1pt); \filldraw (v11) circle (1pt); \draw (i01) circle(.7); \draw (o01) circle(.7); \coordinate [below=1.2 of v] (v02); \coordinate [right=1.5 of v02] (v12); \coordinate [left=.7 of v02] (i02); \coordinate [right=.7 of v12] (o02); \draw (v02) -- (v12); \filldraw (v02) circle (1pt); \filldraw (v12) circle (1pt); \draw (i02) circle(.7); \draw (o02) circle(.7); \end{tikzpicture}%
}
\fi
\ifmmode
\usebox{\fgtwobananasthree}
\else
\newsavebox{\fgtwobananasthree}
\savebox{\fgtwobananasthree}{%
\begin{tikzpicture}[x=1ex,y=1ex,baseline={([yshift=-.5ex]current bounding box.center)}] \coordinate (v); \coordinate [above=1.2 of v](v01); \coordinate [right=2 of v01] (v11); \coordinate [right=1 of v01] (vm1); \draw (v01) -- (v11); \draw (vm1) circle(1); \filldraw (v01) circle (1pt); \filldraw (v11) circle (1pt); \coordinate [below=1.2 of v](v02); \coordinate [right=2 of v02] (v12); \coordinate [right=1 of v02] (vm2); \draw (v02) -- (v12); \draw (vm2) circle(1); \filldraw (v02) circle (1pt); \filldraw (v12) circle (1pt); \end{tikzpicture}%
}
\fi
\ifmmode
\usebox{\fgtadpoletwoclosed}
\else
\newsavebox{\fgtadpoletwoclosed}
\savebox{\fgtadpoletwoclosed}{%
\begin{tikzpicture}[x=1ex,y=1ex,baseline={([yshift=-.5ex]current bounding box.center)}] \coordinate (vm); \coordinate [left=.7 of vm] (v0); \coordinate [right=.7 of vm] (v1); \coordinate [above=.7 of v0] (v2); \coordinate [above=.7 of v1] (v3); \draw (v0) circle(.7); \draw (v1) circle(.7); \draw (v3) arc(0:180:.7) (v2); \filldraw (vm) circle (1pt); \filldraw (v2) circle (1pt); \filldraw (v3) circle (1pt); \end{tikzpicture}%
}
\fi
\ifmmode
\usebox{\fgbananafour}
\else
\newsavebox{\fgbananafour}
\savebox{\fgbananafour}{%
\begin{tikzpicture}[x=1ex,y=1ex,baseline={([yshift=-.5ex]current bounding box.center)}] \coordinate (vm); \coordinate [left=1 of vm] (v0); \coordinate [right=1 of vm] (v1); \draw (v0) to[bend left=45] (v1); \draw (v0) to[bend right=45] (v1); \draw (vm) circle(1); \filldraw (v0) circle (1pt); \filldraw (v1) circle (1pt); \end{tikzpicture}%
}
\fi
\ifmmode
\usebox{\fgthreebubble}
\else
\newsavebox{\fgthreebubble}
\savebox{\fgthreebubble}{%
\begin{tikzpicture}[x=1ex,y=1ex,baseline={([yshift=-.5ex]current bounding box.center)}] \coordinate (vm); \coordinate [left=.7 of vm] (v0); \coordinate [right=.7 of vm] (v1); \coordinate [left=.7 of v0] (vc1); \coordinate [right=.7 of v1] (vc2); \draw (vc1) circle(.7); \draw (vc2) circle(.7); \draw (vm) circle(.7); \filldraw (v0) circle (1pt); \filldraw (v1) circle (1pt); \end{tikzpicture}%
}
\fi
\ifmmode
\usebox{\fgcontractedpropellerthree}
\else
\newsavebox{\fgcontractedpropellerthree}
\savebox{\fgcontractedpropellerthree}{%
\begin{tikzpicture}[x=1ex,y=1ex,baseline={([yshift=-.5ex]current bounding box.center)}] \coordinate (v) ; \def \n {3}; \def \rad {1.2}; \def \rud {1.9}; \foreach \s in {2,...,\n} { \def \angle {360/\n*(\s - 1)}; \coordinate (s) at ([shift=({\angle}:\rad)]v); \coordinate (u) at ([shift=({\angle}:\rud)]v); \draw (v) -- (s); \filldraw (s) circle (1pt); \draw (u) circle (.7); } \filldraw (v) circle (1pt); \coordinate (s) at ([shift=({0}:.7)]v); \draw (s) circle (.7); \end{tikzpicture}%
}
\fi
\ifmmode
\usebox{\fgcontracteddblhandle}
\else
\newsavebox{\fgcontracteddblhandle}
\savebox{\fgcontracteddblhandle}{%
\begin{tikzpicture}[x=1ex,y=1ex,baseline={([yshift=-.5ex]current bounding box.center)}] \coordinate (v0); \coordinate [right=1.5 of v0] (v1); \coordinate [left=.7 of v0] (i0); \coordinate [right=.7 of v1] (o0); \coordinate [right=.7 of o0] (v2); \coordinate [right=.7 of v2] (o1); \draw (v0) -- (v1); \filldraw (v0) circle (1pt); \filldraw (v1) circle (1pt); \filldraw (v2) circle (1pt); \draw (i0) circle(.7); \draw (o0) circle(.7); \draw (o1) circle(.7); \end{tikzpicture}%
}
\fi
\ifmmode
\usebox{\fgcontractedcloseddunce}
\else
\newsavebox{\fgcontractedcloseddunce}
\savebox{\fgcontractedcloseddunce}{%
\begin{tikzpicture}[x=1ex,y=1ex,baseline={([yshift=-.5ex]current bounding box.center)}] \coordinate (v0); \coordinate[right=.5 of v0] (v3); \coordinate[above left=1.5 of v0] (v1); \coordinate[below left=1.5 of v0] (v2); \coordinate[right=.7 of v3] (o); \draw (v3) to[bend left=20] (v2); \draw (v3) to[bend right=20] (v1); \draw (v1) to[bend right=80] (v2); \draw (v1) to[bend left=80] (v2); \filldraw (v1) circle(1pt); \filldraw (v2) circle(1pt); \filldraw (v3) circle(1pt); \draw (o) circle(.7); \end{tikzpicture}%
}
\fi
\ifmmode
\usebox{\fgthreerose}
\else
\newsavebox{\fgthreerose}
\savebox{\fgthreerose}{%
\begin{tikzpicture}[x=1ex,y=1ex,baseline={([yshift=-.5ex]current bounding box.center)}] \coordinate (v) ; \def \rad {1.5}; \coordinate (s1) at ([shift=(0:\rad)]v); \coordinate (s2) at ([shift=(120:\rad)]v); \coordinate (s3) at ([shift=(240:\rad)]v); \draw (v) to[out=60,in=90] (s1) to[out=-90,in=0-60] (v); \draw (v) to[out=180,in=210] (s2) to[out=30,in=60] (v); \draw (v) to[out=300,in=330] (s3) to[out=150,in=180] (v); \filldraw (v) circle (1pt); \end{tikzpicture}%
}
\fi
\ifmmode
\usebox{\fgdblcontractedpropellerthree}
\else
\newsavebox{\fgdblcontractedpropellerthree}
\savebox{\fgdblcontractedpropellerthree}{%
\begin{tikzpicture}[x=1ex,y=1ex,baseline={([yshift=-.5ex]current bounding box.center)}] \coordinate (v) ; \def \rad {1.5}; \coordinate (s1) at ([shift=(0:1.2)]v); \coordinate (s2) at ([shift=(120:\rad)]v); \coordinate (s3) at ([shift=(240:\rad)]v); \coordinate [right=.7 of s1] (o); \draw (v) to[out=180,in=210] (s2) to[out=30,in=60] (v); \draw (v) to[out=300,in=330] (s3) to[out=150,in=180] (v); \draw (v) -- (s1); \filldraw (v) circle (1pt); \filldraw (s1) circle (1pt); \draw (o) circle(.7); \end{tikzpicture}%
}
\fi
\ifmmode
\usebox{\fgbananathreewithbubble}
\else
\newsavebox{\fgbananathreewithbubble}
\savebox{\fgbananathreewithbubble}{%
\begin{tikzpicture}[x=1ex,y=1ex,baseline={([yshift=-.5ex]current bounding box.center)}] \coordinate (vm); \coordinate [left=1 of vm] (v0); \coordinate [right=1 of vm] (v1); \coordinate [right=.7 of v1] (o); \draw (v0) -- (v1); \draw (vm) circle(1); \draw (o) circle(.7); \filldraw (v0) circle (1pt); \filldraw (v1) circle (1pt); \end{tikzpicture}%
}
\fi
\ifmmode
\usebox{\fgtwotadpoletwos}
\else
\newsavebox{\fgtwotadpoletwos}
\savebox{\fgtwotadpoletwos}{%
\begin{tikzpicture}[x=1ex,y=1ex,baseline={([yshift=-.5ex]current bounding box.center)}] \coordinate (v); \coordinate [above=1.2 of v] (vm1); \coordinate [left=.7 of vm1] (v01); \coordinate [right=.7 of vm1] (v11); \draw (v01) circle(.7); \draw (v11) circle(.7); \filldraw (vm1) circle (1pt); \coordinate [below=1.2 of v] (vm2); \coordinate [left=.7 of vm2] (v02); \coordinate [right=.7 of vm2] (v12); \draw (v02) circle(.7); \draw (v12) circle(.7); \filldraw (vm2) circle (1pt); \end{tikzpicture}%
}
\fi
\ifmmode
\usebox{\fgbananathreeandtadpoletwo}
\else
\newsavebox{\fgbananathreeandtadpoletwo}
\savebox{\fgbananathreeandtadpoletwo}{%
\begin{tikzpicture}[x=1ex,y=1ex,baseline={([yshift=-.5ex]current bounding box.center)}] \coordinate (v); \coordinate [above=1.2 of v](vm1); \coordinate [left=1 of vm1] (v01); \coordinate [right=1 of vm1] (v11); \draw (v01) -- (v11); \draw (vm1) circle(1); \filldraw (v01) circle (1pt); \filldraw (v11) circle (1pt); \coordinate [below=1.2 of v] (vm2); \coordinate [left=.7 of vm2] (v02); \coordinate [right=.7 of vm2] (v12); \draw (v02) circle(.7); \draw (v12) circle(.7); \filldraw (vm2) circle (1pt); \end{tikzpicture}%
}
\fi
\ifmmode
\usebox{\fghandleandtadpoletwo}
\else
\newsavebox{\fghandleandtadpoletwo}
\savebox{\fghandleandtadpoletwo}{%
\begin{tikzpicture}[x=1ex,y=1ex,baseline={([yshift=-.5ex]current bounding box.center)}] \coordinate (v); \coordinate [above=1.2 of v] (vm1); \coordinate [left=.75 of vm1] (v01); \coordinate [right=.75 of vm1] (v11); \coordinate [left=.7 of v01] (i01); \coordinate [right=.7 of v11] (o01); \draw (v01) -- (v11); \filldraw (v01) circle (1pt); \filldraw (v11) circle (1pt); \draw (i01) circle(.7); \draw (o01) circle(.7); \coordinate [below=1.2 of v] (vm2); \coordinate [left=.7 of vm2] (v02); \coordinate [right=.7 of vm2] (v12); \draw (v02) circle(.7); \draw (v12) circle(.7); \filldraw (vm2) circle (1pt); \end{tikzpicture}%
}
\fi
\ifmmode
\usebox{\fgbananathreehandle}
\else
\newsavebox{\fgbananathreehandle}
\savebox{\fgbananathreehandle}{%
\begin{tikzpicture}[x=1ex,y=1ex,baseline={([yshift=-.5ex]current bounding box.center)}] \coordinate (vm); \coordinate [left=1 of vm] (v0); \coordinate [right=1 of vm] (v1); \coordinate [right=1.5 of v1] (v2); \coordinate [right=.7 of v2] (o); \draw (v0) -- (v1); \draw (v1) -- (v2); \draw (vm) circle(1); \draw (o) circle(.7); \filldraw (v0) circle (1pt); \filldraw (v1) circle (1pt); \filldraw (v2) circle (1pt); \end{tikzpicture}%
}
\fi

\ifmmode
\usebox{\fgsimpleprop}
\else
\newsavebox{\fgsimpleprop}
\savebox{\fgsimpleprop}{%
    \begin{tikzpicture}[x=1ex,y=1ex,baseline={([yshift=-.5ex]current bounding box.center)}] \coordinate (v) ; \coordinate [right=1.2 of v] (u); \draw (v) -- (u); \filldraw (v) circle (1pt); \filldraw (u) circle (1pt); \end{tikzpicture}%
}
\fi
\ifmmode
\usebox{\fgsimplethreevtx}
\else
\newsavebox{\fgsimplethreevtx}
\savebox{\fgsimplethreevtx}{%
\begin{tikzpicture}[x=1ex,y=1ex,baseline={([yshift=-.5ex]current bounding box.center)}] \coordinate (v) ; \def \n {3}; \def \rad {1.2}; \foreach \s in {1,...,5} { \def \angle {360/\n*(\s - 1)}; \coordinate (u) at ([shift=({\angle}:\rad)]v); \draw (v) -- (u); \filldraw (u) circle (1pt); } \filldraw (v) circle (1pt); \end{tikzpicture}%
}
\fi
\ifmmode
\usebox{\fgonetadpolephithree}
\else
\newsavebox{\fgonetadpolephithree}
\savebox{\fgonetadpolephithree}{%
    \begin{tikzpicture}[x=2ex,y=2ex,baseline={([yshift=-.5ex]current bounding box.center)}] \coordinate (v0) ; \coordinate [right=1 of v0] (v1); \coordinate [left=.7 of v0] (i0); \coordinate [left=.5 of v1] (vm); \draw (vm) circle(.5); \draw (i0) -- (v0); \filldraw (v0) circle(1pt); \filldraw (i0) circle (1pt); \end{tikzpicture}%
}
\fi
\ifmmode
\usebox{\fgtwojoneloopbubblephithree}
\else
\newsavebox{\fgtwojoneloopbubblephithree}
\savebox{\fgtwojoneloopbubblephithree}{%
    \begin{tikzpicture}[x=2ex,y=2ex,baseline={([yshift=-.5ex]current bounding box.center)}] \coordinate (v0) ; \coordinate [right=1 of v0] (v1); \coordinate [left=.7 of v0] (i0); \coordinate [right=.7 of v1] (o0); \coordinate [left=.5 of v1] (vm); \draw (vm) circle(.5); \draw (i0) -- (v0); \draw (o0) -- (v1); \filldraw (v0) circle(1pt); \filldraw (v1) circle(1pt); \filldraw (i0) circle (1pt); \filldraw (o0) circle (1pt); \end{tikzpicture}%
}
\fi
\ifmmode
\usebox{\fgthreejoneltrianglephithree}
\else
\newsavebox{\fgthreejoneltrianglephithree}
\savebox{\fgthreejoneltrianglephithree}{%
\begin{tikzpicture}[x=1ex,y=1ex,baseline={([yshift=-.5ex]current bounding box.center)}] \coordinate (v) ; \def \n {3}; \def \rad {1}; \def \rud {2.2}; \foreach \s in {1,...,5} { \def \angle {360/\n*(\s - 1)}; \def \ungle {360/\n*\s}; \coordinate (s) at ([shift=({\angle}:\rad)]v); \coordinate (t) at ([shift=({\ungle}:\rad)]v); \coordinate (u) at ([shift=({\angle}:\rud)]v); \draw (s) -- (u); \filldraw (u) circle (1pt); \filldraw (s) circle (1pt); } \draw (v) circle(\rad); \end{tikzpicture}%
}
\fi
\ifmmode
\usebox{\fgthreejonelpropinsphithree}
\else
\newsavebox{\fgthreejonelpropinsphithree}
\savebox{\fgthreejonelpropinsphithree}{%
    \begin{tikzpicture}[x=2ex,y=2ex,baseline={([yshift=-.5ex]current bounding box.center)}] \coordinate (v0) ; \coordinate [right=1 of v0] (v1); \coordinate [right=.7 of v1] (v2); \coordinate [left=.7 of v0] (i0); \coordinate [above right=.7 of v2] (o0); \coordinate [below right=.7 of v2] (o1); \coordinate [left=.5 of v1] (vm); \draw (vm) circle(.5); \draw (i0) -- (v0); \draw (v1) -- (v2); \draw (o0) -- (v2); \draw (o1) -- (v2); \filldraw (v0) circle(1pt); \filldraw (v1) circle(1pt); \filldraw (v2) circle(1pt); \filldraw (i0) circle (1pt); \filldraw (o0) circle (1pt); \filldraw (o1) circle (1pt); \end{tikzpicture}%
}
\fi

\maketitle
\begin{abstract}
A method to obtain all-order asymptotic results for the coefficients of perturbative expansions in zero-dimensional quantum field is described. The focus is on the enumeration of the number of skeleton or primitive diagrams of a certain QFT and its asymptotics. The procedure heavily applies techniques from singularity analysis. To utilize singularity analysis, a representation of the zero-dimensional path integral as a generalized hyperelliptic curve is deduced. As applications the full asymptotic expansions of the number of disconnected, connected, 1PI and skeleton Feynman diagrams in various theories are given.
\end{abstract}
\tableofcontents

\section{Introduction}

Perturbation expansions in quantum field theory (QFT) turn out to be divergent \cite{dyson1952divergence}, that means perturbative expansions of observables in those theories usually have a vanishing radius of convergence. The large growth of the coefficients, causing the divergence, is believed to be governed by the proliferation of Feynman diagrams with increasing loop number. The analysis of this large order behavior lead to many important results reaching far beyond the scope of quantum field theory \cite{bender1969anharmonic,bender1973anharmonic,le2012large}. Moreover, the divergence of the perturbation expansion in QFT is linked to non-perturbative effects \cite{alvarez2004langer,garoufalidis2012asymptotics,dunne2014generating}.

The extraction of large order results from realistic QFTs becomes most complicated when \textit{renormalization} comes into play. For instance, the relationship between \textit{renormalons}, which are avatars of renormalization at large order, and \textit{instantons} \cite{lautrup1977high,Zichichi:1979gj}, classical field configurations, which are in close correspondence with the large order behavior of the theory \cite{lipatov1977divergence}, remains elusive \cite{suslov2005divergent}.

Zero-dimensional QFT serves as a toy-model for realistic QFT calculations. Especially, the behavior of zero-dimensional QFT at large order is of interest, as calculations in these regimes for realistic QFTs are extremely delicate. The utility of zero-dimensional QFT as a reasonable toy-model comes mainly from the interpretation of observables as \textit{combinatorial generating functions} of the number of Feynman diagrams. It is therefore a reasonable starting point to study the divergence of the perturbation expansions in realistic QFTs. Zero-dimensional QFT has been extensively studied \cite{hurst1952enumeration,cvitanovic1978number,bender1978asymptotic,goldberg1991tree,argyres2001zero,molinari2006enumeration}. 

The purpose of this article is to study the \textit{asymptotics} of the observables in zero-dimensional QFT and the associated graph counting problems. The focus will be on the \textit{renormalization} of zero-dimensional QFT and the asymptotics of the renormalization constants, which will provide the asymptotic number of \textit{skeleton} Feynman diagrams.

As Feynman diagram techniques can also be used in quantum mechanics, the analysis provided here can also give insights into large order expansions of observables in classical theories \cite{bender1969anharmonic,le2012large}.

With techniques from \cite{borinsky2016generating} complete asymptotic expansions for all observables in zero-dimensional field theory and the respective enumeration problems can be obtained. These techniques are related to the theory of resurgence \cite{ecalle1981fonctions}. To establish the connection of renormalization and restricted graph counting, the lattice structure of renormalization can be used \cite{borinsky2016lattices}. In this article, these tools will be combined and applied to various examples in zero-dimensional QFT and graph counting.

In Section \ref{sec:formalint}, basic properties of the formal integrals, which are associated to zero-dimensional QFTs, will be introduced and a representation of these expressions as local expansions of a \textit{generalized hyperelliptic curve} will be derived. These expansions will enable us to rigorously extract complete asymptotic expansions by purely algebraic means. In Section \ref{sec:ring}, the ring of factorially divergent power series will be introduced, which allows very efficient extraction of complete asymptotic expansions from composed formal quantities. A short overview over some observables in zero-dimensional QFT will be given in Section \ref{sec:overview} and, to motivate the renormalization procedure, the structure of the Hopf algebra of Feynman diagrams will be sketched in Section \ref{sec:hopfalgebra}. In Section \ref{sec:applications}, various examples will be given. Explicit asymptotics of the number of skeleton diagrams will be provided for $\varphi^3$, $\varphi^4$, QED, quenched QED and Yukawa theory.
\section{Formal integrals}
\label{sec:formalint}

The starting point for zero-dimensional QFT is the \textit{path integral}, which becomes an ordinary integral in the zero-dimensional case. For instance, in a scalar theory the \textit{partition function} is given by
\begin{align} \label{eqn:formalfuncintegral1} Z(\hbar) &:= \int_\R \frac{dx}{\sqrt{2 \pi \hbar } } e^{\frac{1}{\hbar} \left( -\frac{x^2}{2a} + V(x) \right) }, \end{align}
where $V \in x^3 \R[[x]]$, the \textit{potential}, is some power series with the first three coefficients in $x$ vanishing and $a$ is a strictly positive parameter. The whole exponent $\Sact(x) = -\frac{x^2}{2a} + V(x)$ is the \textit{action}.

The integral \eqref{eqn:formalfuncintegral1} is ill-defined for general $V(x)$. If we substitute, for example, $V(x) = \frac{x^4}{4!}$, it is not integrable over $\R$. Furthermore, the power series expansion makes only limited sense as the actual function $Z(\hbar)$ will have a singularity at $\hbar=0$ - even in cases where the expression is integrable. One way to continue is to modify the integration contour, such that the integrand vanishes fast enough at the border of the integration domain. The disadvantage of this method is that the integration contour must be chosen on a case by case basis. 

This article is mainly concerned with the coefficients of the expansion in $\hbar$ of the integral \eqref{eqn:formalfuncintegral1}. We wish to give meaning to such an expressions in a way that highlights its \textit{power series} nature. Moreover, we like to free ourselves from restrictions in choices of $V(x)$ as far as possible.
We therefore treat the integral \eqref{eqn:formalfuncintegral1} as a \textit{formal} expression, which is not required to yield a proper function in $\hbar$, but a formal power series in this parameter.

The procedure to obtain a power series expansion from this formal integral is well-known and widely used \cite{itzykson2006quantum}: The potential $V(x)$ is treated as a perturbation around the Gaussian kernel and the remaining integrand is expanded. 
The `integration' will be solely performed by applying the identity 
\begin{align*} \int_\R \frac{dx}{\sqrt{2 \pi \hbar}} e^{-\frac{x^2}{2 a \hbar}} x^{2n} &= \sqrt{a} (a\hbar)^n (2n-1)!! & & n \geq 0. \end{align*}
This procedure mimics the calculation of amplitudes in higher dimensions, as the above identity is the zero-dimensional version of Wick's theorem \cite{itzykson2006quantum}. This way, it directly incorporates the interpretation of the coefficients of the power series as \textit{Feynman diagrams}. 
Unfortunately, these \textit{formal integrals} seem not to have been studied in detail as isolated mathematical entities. We will expand on these notions in further detail in a future publication \cite{borinsky2017graph}. For know, we will give a translation of the formal integral to a well defined formal power series. This will serve as a definition of a formal integral in the scope of this article.

We expand the exponent of $V(x)$ and exchange integration and summation and thereby \textit{define} the zero-dimensional path integral as the following expression:
\begin{defn}
\label{def:formalintegral}
Let $\mathcal{F} : x^2\R[[x]] \rightarrow \R[[\hbar]]$ be the operator that maps $\Sact(x) \in x^2 \R[[x]]$, a power series with vanishing constant and linear terms as well as a strictly negative quadratic term, $\Sact(x) = - \frac{x^2}{2a} + V(x)$, to $\mathcal{F}[\Sact(x)]\in\R[[\hbar]]$ a power series in $\hbar$, such that 
\begin{align} \label{eqn:formalintegralpwrsrs} \mathcal{F}[\Sact(x)](\hbar) = \sqrt{a} \sum_{n=0}^\infty \left(a \hbar\right)^{n} (2n-1)!! [x^{2n}] e^{\frac{1}{\hbar}V(x)}. \end{align}
\end{defn}
This gives a well defined power series in $\hbar$, because $[x^{2n}] e^{\frac{1}{\hbar}V(x)}$ is a polynomial in $\hbar^{-1}$ of order smaller than $n$.

An advantage of applying this definition rather then using the integral itself is that Definition \ref{def:formalintegral} gives an unambiguous procedure to obtain result for a given potential, whereas the integration depends heavily on the choice of the integration contour. 

The most important property of $\Fop$, and another motivation to define it, is that it enumerates \textit{multigraphs}. 
\subsection{Diagrammatic interpretation}
\begin{prop}
\label{prop:diagraminterpretation}
If $\Sact(x) = -\frac{x^2}{2 a} + \sum_{k=3}^\infty \frac{\lambda_k}{k!} x^k$ with $a>0$, then 
\begin{align*} \Fop[\Sact(x)](\hbar) = \sqrt{a}\sum_{\Gamma} \hbar^{|E(\Gamma)| - |V(\Gamma)|} \frac{ a^{|E(\Gamma)|} \prod_{v \in V(\Gamma)} \lambda_{n_v}}{|\Aut \Gamma|} \end{align*}
where the sum is over \textit{all} multigraphs $\Gamma$ without two or one-valent vertices, where $E(\Gamma)$ and $V(\Gamma)$ are the sets of edges and vertices of $\Gamma$ and $n_v$ is the valency of vertex $v$.
\end{prop}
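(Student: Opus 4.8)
The plan is to run the standard (zero-dimensional) Feynman-diagram expansion and to pin down the symmetry factor by an orbit--stabiliser argument. First I would insert $V(x) = \sum_{k\geq 3}\frac{\lambda_k}{k!}x^k$ and expand the exponential in Definition \ref{def:formalintegral} as a product over vertex types,
\[
e^{\frac1\hbar V(x)} = \prod_{k\geq 3} e^{\frac{\lambda_k}{\hbar k!}x^k} = \sum_{(m_k)_{k\geq 3}}\ \prod_{k\geq 3}\frac{1}{m_k!}\left(\frac{\lambda_k}{\hbar k!}\right)^{m_k} x^{k m_k},
\]
the outer sum over finitely supported sequences $(m_k)_{k\geq 3}$ of nonnegative integers. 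Taking $[x^{2n}]$ restricts to sequences with $\sum_k k m_k = 2n$; I will read such a sequence as prescribing a multiset of vertices, $m_k$ of them $k$-valent, so that $|V| := \sum_k m_k$ and, since the valencies sum to $2n$, the number of edges of any graph built on them is $n = |E|$. Together with the explicit prefactor $(a\hbar)^n$ this already yields the announced monomial $\sqrt a\,\hbar^{|E(\Gamma)|-|V(\Gamma)|}a^{|E(\Gamma)|}$, so only the purely combinatorial factor $\prod_k \frac{\lambda_k^{m_k}}{m_k!(k!)^{m_k}}\,(2n-1)!!$ needs to be reorganised into a sum over graphs.

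The heart is to read $(2n-1)!!$ as the number of perfect matchings of a set of $2n$ labelled \emph{half-edges}, distributed as $n_v$ half-edges on each of the vertices prescribed by $(m_k)$; a perfect matching of this half-edge set is precisely a multigraph on those labelled vertices, with a half-edge paired to another at the same vertex giving a self-loop, two parallel pairs giving a double edge, and so on. Now let the group $H = \prod_{k\geq 3}\bigl(S_k \wr S_{m_k}\bigr)$, of order $\prod_{k\geq 3}(k!)^{m_k} m_k!$, act on these labelled matched diagrams by permuting, for each vertex, its half-edge labels, and by permuting the vertices of equal valency. The orbits are exactly the isomorphism classes of multigraphs with valency sequence $(m_k)$ (hence without one- or two-valent vertices), and the stabiliser of a diagram representing a class $\Gamma$ is isomorphic to $\Aut\Gamma$, where the automorphism group of a multigraph is taken in the standard sense of automorphisms of its half-edge incidence structure (so that self-loops and multiple edges contribute the usual extra factors). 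Orbit--stabiliser then gives
\[
(2n-1)!! = \sum_{\Gamma}\frac{\prod_{k\geq 3}(k!)^{m_k} m_k!}{|\Aut\Gamma|},
\]
summed over isomorphism classes $\Gamma$ with valency sequence $(m_k)$.

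Substituting this identity, the factor $\prod_k(k!)^{m_k}m_k!$ cancels against $\prod_k\frac{1}{m_k!(k!)^{m_k}}$ and $\prod_k\lambda_k^{m_k}$ becomes $\prod_{v\in V(\Gamma)}\lambda_{n_v}$; summing over all $n$ and all valency sequences then merges into a single sum over all multigraphs $\Gamma$ without one- or two-valent vertices, which is the claimed formula. I would finish by checking that this rearrangement is a valid identity of formal power series in $\hbar$: at a fixed power $\ell = |E(\Gamma)|-|V(\Gamma)|$ the absence of $\leq 2$-valent vertices forces $2|E(\Gamma)| = \sum_{v}n_v \geq 3|V(\Gamma)|$, hence $|V(\Gamma)|\leq 2\ell$ and $|E(\Gamma)|\leq 3\ell$, so only finitely many graphs (and finitely many $n$) contribute to each coefficient. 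The one genuinely delicate step I expect is the orbit--stabiliser bookkeeping, specifically the identification of the stabiliser with $\Aut\Gamma$ so that self-loops and parallel edges are weighted correctly; the rest is routine power-series manipulation.
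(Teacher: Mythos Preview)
Your argument is correct and complete; the orbit--stabiliser reduction of $(2n-1)!!$ to a sum over isomorphism classes is exactly the standard route, and your identification of the stabiliser with $\Aut\Gamma$ in the half-edge model is the right convention to match the paper's later remarks about self-loops and multi-edges contributing factors $2^kk!$ and $k!$.

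The paper itself does not prove this proposition: it simply asserts that the diagrammatic interpretation ``is well-known'' and cites \cite{cvitanovic1978number}, then describes the computational recipe as a list of steps. So rather than comparing two proofs, you are supplying the proof the paper omits. Your finiteness check at the end (bounding $|V(\Gamma)|\le 2\ell$ and $|E(\Gamma)|\le 3\ell$ at fixed excess $\ell$) is a nice touch that the paper implicitly relies on when it says $[x^{2n}]e^{V(x)/\hbar}$ is a polynomial in $\hbar^{-1}$ of degree at most $n$, but never makes explicit in graph-theoretic terms.
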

This identity can also be used as definition of $\Fop$.
It is well-known that the terms in the expansion of the integral \eqref{eqn:formalfuncintegral1} and therefore also the terms of $\Fop[\Sact(x)](\hbar)$ can be interpreted as a sum over Feynman diagrams weighted by their symmetry factor \cite{cvitanovic1978number}. To calculate the $n$-th coefficient of \eqref{eqn:formalfuncintegral1} or \eqref{eqn:formalintegralpwrsrs} with $\Sact(x) = - \frac{x^2}{2a} + V(x)$ and $V(x) = \sum_{k=3} \frac{\lambda_k}{k!} x^k$, 
\begin{enumerate}
\item
draw all graphs with \textit{excess} $n$. The excess of a diagram $\Gamma$ is given by $|E(\Gamma)|-|V(\Gamma)|$, the number of edges minus the number of vertices. 
For connected graphs the excess is equal to the number of \textit{loops} minus 1. We say a graph has $n$ loops if it has $n$ independent cycles. The number of loops is also the first \textit{Betti number} of the graph.
\item
Apply Feynman rules: For each individual graph $\Gamma$ calculate the product $\prod_{v \in V(\Gamma)} \lambda_{|v|}$, where each vertex contributes with a factor $\lambda_{|v|}$ with $|v|$ the valency of the vertex. Subsequently, multiply by $a^{|E(\Gamma)|}$.
\item
Calculate the cardinality of the automorphism group of the graph. $k$-fold double-edges give additional automorphisms of order $k!$ and $k$-fold self-loops give additional automorphisms of order $2^k k!$, which both commute with the rest of the automorphism group. Divide the result of the previous calculation by this cardinality. 
\item
Sum all monomials and multiply the obtained polynomial by a normalization factor of $\sqrt{a}$.
\end{enumerate}
We may write the power series expansion of $\Fop[\Sact(x)](\hbar)$ in a diagrammatic way as follows:
\begin{gather} \begin{gathered} \label{eqn:generalexpansion} \Fop[\Sact(x)](\hbar) = \phi_{\Sact} \Big( 1 + \frac18 {  \ifmmode \usebox{\fghandle} \else \newsavebox{\fghandle} \savebox{\fghandle}{ \begin{tikzpicture}[x=1ex,y=1ex,baseline={([yshift=-.5ex]current bounding box.center)}] \coordinate (v0); \coordinate [right=1.5 of v0] (v1); \coordinate [left=.7 of v0] (i0); \coordinate [right=.7 of v1] (o0); \draw (v0) -- (v1); \filldraw (v0) circle (1pt); \filldraw (v1) circle (1pt); \draw (i0) circle(.7); \draw (o0) circle(.7); \end{tikzpicture} } \fi } + \frac{1}{12} {  \ifmmode \usebox{\fgbananathree} \else \newsavebox{\fgbananathree} \savebox{\fgbananathree}{ \begin{tikzpicture}[x=1ex,y=1ex,baseline={([yshift=-.5ex]current bounding box.center)}] \coordinate (vm); \coordinate [left=1 of vm] (v0); \coordinate [right=1 of vm] (v1); \draw (v0) -- (v1); \draw (vm) circle(1); \filldraw (v0) circle (1pt); \filldraw (v1) circle (1pt); \end{tikzpicture} } \fi } + \frac{1}{8} {  \ifmmode \usebox{\fgtadpoletwo} \else \newsavebox{\fgtadpoletwo} \savebox{\fgtadpoletwo}{ \begin{tikzpicture}[x=1ex,y=1ex,baseline={([yshift=-.5ex]current bounding box.center)}] \coordinate (vm); \coordinate [left=.7 of vm] (v0); \coordinate [right=.7 of vm] (v1); \draw (v0) circle(.7); \draw (v1) circle(.7); \filldraw (vm) circle (1pt); \end{tikzpicture} } \fi } \\ + \frac{1}{128} {  \ifmmode \usebox{\fgtwohandles} \else \newsavebox{\fgtwohandles} \savebox{\fgtwohandles}{ \begin{tikzpicture}[x=1ex,y=1ex,baseline={([yshift=-.5ex]current bounding box.center)}] \coordinate (v); \coordinate [above=1.2 of v] (v01); \coordinate [right=1.5 of v01] (v11); \coordinate [left=.7 of v01] (i01); \coordinate [right=.7 of v11] (o01); \draw (v01) -- (v11); \filldraw (v01) circle (1pt); \filldraw (v11) circle (1pt); \draw (i01) circle(.7); \draw (o01) circle(.7); \coordinate [below=1.2 of v] (v02); \coordinate [right=1.5 of v02] (v12); \coordinate [left=.7 of v02] (i02); \coordinate [right=.7 of v12] (o02); \draw (v02) -- (v12); \filldraw (v02) circle (1pt); \filldraw (v12) circle (1pt); \draw (i02) circle(.7); \draw (o02) circle(.7); \end{tikzpicture} } \fi } + \frac{1}{288} {  \ifmmode \usebox{\fgtwobananasthree} \else \newsavebox{\fgtwobananasthree} \savebox{\fgtwobananasthree}{ \begin{tikzpicture}[x=1ex,y=1ex,baseline={([yshift=-.5ex]current bounding box.center)}] \coordinate (v); \coordinate [above=1.2 of v](v01); \coordinate [right=2 of v01] (v11); \coordinate [right=1 of v01] (vm1); \draw (v01) -- (v11); \draw (vm1) circle(1); \filldraw (v01) circle (1pt); \filldraw (v11) circle (1pt); \coordinate [below=1.2 of v](v02); \coordinate [right=2 of v02] (v12); \coordinate [right=1 of v02] (vm2); \draw (v02) -- (v12); \draw (vm2) circle(1); \filldraw (v02) circle (1pt); \filldraw (v12) circle (1pt); \end{tikzpicture} } \fi } + \frac{1}{96} {  \ifmmode \usebox{\fgbananathreeandhandle} \else \newsavebox{\fgbananathreeandhandle} \savebox{\fgbananathreeandhandle}{ \begin{tikzpicture}[x=1ex,y=1ex,baseline={([yshift=-.5ex]current bounding box.center)}] \coordinate (v); \coordinate [above=1.2 of v] (vm1); \coordinate [left=1 of vm1] (v01); \coordinate [right=1 of vm1] (v11); \draw (v01) -- (v11); \draw (vm1) circle(1); \filldraw (v01) circle (1pt); \filldraw (v11) circle (1pt); \coordinate [below=1.2 of v] (vm2); \coordinate [left=.75 of vm2] (v02); \coordinate [right=.75 of vm2] (v12); \coordinate [left=.7 of v02] (i02); \coordinate [right=.7 of v12] (o02); \draw (v02) -- (v12); \filldraw (v02) circle (1pt); \filldraw (v12) circle (1pt); \draw (i02) circle(.7); \draw (o02) circle(.7); \end{tikzpicture} } \fi } + \frac{1}{48} {  \ifmmode \usebox{\fgpropellerthree} \else \newsavebox{\fgpropellerthree} \savebox{\fgpropellerthree}{ \begin{tikzpicture}[x=1ex,y=1ex,baseline={([yshift=-.5ex]current bounding box.center)}] \coordinate (v) ; \def \n {3}; \def \rad {1.2}; \def \rud {1.9}; \foreach \s in {1,...,\n} { \def \angle {360/\n*(\s - 1)}; \coordinate (s) at ([shift=({\angle}:\rad)]v); \coordinate (u) at ([shift=({\angle}:\rud)]v); \draw (v) -- (s); \filldraw (s) circle (1pt); \draw (u) circle (.7); } \filldraw (v) circle (1pt); \end{tikzpicture} } \fi } \\ + \frac{1}{16} {  \ifmmode \usebox{\fgdblhandle} \else \newsavebox{\fgdblhandle} \savebox{\fgdblhandle}{ \begin{tikzpicture}[x=1ex,y=1ex,baseline={([yshift=-.5ex]current bounding box.center)}] \coordinate (v) ; \def \n {2}; \def \rad {.7}; \def \rud {2}; \def \rid {2.7}; \foreach \s in {1,...,\n} { \def \angle {360/\n*(\s - 1)}; \coordinate (s) at ([shift=({\angle}:\rad)]v); \coordinate (u) at ([shift=({\angle}:\rud)]v); \coordinate (t) at ([shift=({\angle}:\rid)]v); \draw (s) -- (u); \filldraw (u) circle (1pt); \filldraw (s) circle (1pt); \draw (t) circle (.7); } \draw (v) circle(\rad); \end{tikzpicture} } \fi } + \frac{1}{16} {  \ifmmode \usebox{\fgdbleye} \else \newsavebox{\fgdbleye} \savebox{\fgdbleye}{ \begin{tikzpicture}[x=1ex,y=1ex,baseline={([yshift=-.5ex]current bounding box.center)}] \coordinate (v0); \coordinate[above left=1.5 of v0] (v1); \coordinate[below left=1.5 of v0] (v2); \coordinate[above right=1.5 of v0] (v3); \coordinate[below right=1.5 of v0] (v4); \draw (v1) to[bend left=80] (v2); \draw (v1) to[bend right=80] (v2); \draw (v3) to[bend right=80] (v4); \draw (v3) to[bend left=80] (v4); \draw (v1) -- (v3); \draw (v2) -- (v4); \filldraw (v1) circle(1pt); \filldraw (v2) circle(1pt); \filldraw (v3) circle(1pt); \filldraw (v4) circle(1pt); \end{tikzpicture} } \fi } + \frac{1}{8} {  \ifmmode \usebox{\fgcloseddunce} \else \newsavebox{\fgcloseddunce} \savebox{\fgcloseddunce}{ \begin{tikzpicture}[x=1ex,y=1ex,baseline={([yshift=-.5ex]current bounding box.center)}] \coordinate (v0); \coordinate[right=.5 of v0] (v3); \coordinate[right=1.5 of v3] (v4); \coordinate[above left=1.5 of v0] (v1); \coordinate[below left=1.5 of v0] (v2); \coordinate[right=.7 of v4] (o); \draw (v3) to[bend left=20] (v2); \draw (v3) to[bend right=20] (v1); \draw (v1) to[bend right=80] (v2); \draw (v1) to[bend left=80] (v2); \draw (v3) -- (v4); \filldraw (v1) circle(1pt); \filldraw (v2) circle(1pt); \filldraw (v3) circle(1pt); \filldraw (v4) circle(1pt); \draw (o) circle(.7); \end{tikzpicture} } \fi } + \frac{1}{24} {  \begin{tikzpicture}[x=1ex,y=1ex,baseline={([yshift=-.5ex]current bounding box.center)}] \coordinate (v) ; \def \n {3}; \def \rad {1.2}; \foreach \s in {1,...,\n} { \def \angle {360/\n*(\s - 1)}; \coordinate (s) at ([shift=({\angle}:\rad)]v); \draw (v) -- (s); \filldraw (s) circle (1pt); } \draw (v) circle (\rad); \filldraw (v) circle (1pt); \end{tikzpicture} } \\ + \frac{1}{96} {  \ifmmode \usebox{\fgbananathreeandtadpoletwo} \else \newsavebox{\fgbananathreeandtadpoletwo} \savebox{\fgbananathreeandtadpoletwo}{ \begin{tikzpicture}[x=1ex,y=1ex,baseline={([yshift=-.5ex]current bounding box.center)}] \coordinate (v); \coordinate [above=1.2 of v](vm1); \coordinate [left=1 of vm1] (v01); \coordinate [right=1 of vm1] (v11); \draw (v01) -- (v11); \draw (vm1) circle(1); \filldraw (v01) circle (1pt); \filldraw (v11) circle (1pt); \coordinate [below=1.2 of v] (vm2); \coordinate [left=.7 of vm2] (v02); \coordinate [right=.7 of vm2] (v12); \draw (v02) circle(.7); \draw (v12) circle(.7); \filldraw (vm2) circle (1pt); \end{tikzpicture} } \fi } + \frac{1}{64} {  \ifmmode \usebox{\fghandleandtadpoletwo} \else \newsavebox{\fghandleandtadpoletwo} \savebox{\fghandleandtadpoletwo}{ \begin{tikzpicture}[x=1ex,y=1ex,baseline={([yshift=-.5ex]current bounding box.center)}] \coordinate (v); \coordinate [above=1.2 of v] (vm1); \coordinate [left=.75 of vm1] (v01); \coordinate [right=.75 of vm1] (v11); \coordinate [left=.7 of v01] (i01); \coordinate [right=.7 of v11] (o01); \draw (v01) -- (v11); \filldraw (v01) circle (1pt); \filldraw (v11) circle (1pt); \draw (i01) circle(.7); \draw (o01) circle(.7); \coordinate [below=1.2 of v] (vm2); \coordinate [left=.7 of vm2] (v02); \coordinate [right=.7 of vm2] (v12); \draw (v02) circle(.7); \draw (v12) circle(.7); \filldraw (vm2) circle (1pt); \end{tikzpicture} } \fi } + \frac{1}{8} {  \ifmmode \usebox{\fgtadpoletwoclosed} \else \newsavebox{\fgtadpoletwoclosed} \savebox{\fgtadpoletwoclosed}{ \begin{tikzpicture}[x=1ex,y=1ex,baseline={([yshift=-.5ex]current bounding box.center)}] \coordinate (vm); \coordinate [left=.7 of vm] (v0); \coordinate [right=.7 of vm] (v1); \coordinate [above=.7 of v0] (v2); \coordinate [above=.7 of v1] (v3); \draw (v0) circle(.7); \draw (v1) circle(.7); \draw (v3) arc(0:180:.7) (v2); \filldraw (vm) circle (1pt); \filldraw (v2) circle (1pt); \filldraw (v3) circle (1pt); \end{tikzpicture} } \fi } + \frac{1}{16} {  \ifmmode \usebox{\fgcontractedpropellerthree} \else \newsavebox{\fgcontractedpropellerthree} \savebox{\fgcontractedpropellerthree}{ \begin{tikzpicture}[x=1ex,y=1ex,baseline={([yshift=-.5ex]current bounding box.center)}] \coordinate (v) ; \def \n {3}; \def \rad {1.2}; \def \rud {1.9}; \foreach \s in {2,...,\n} { \def \angle {360/\n*(\s - 1)}; \coordinate (s) at ([shift=({\angle}:\rad)]v); \coordinate (u) at ([shift=({\angle}:\rud)]v); \draw (v) -- (s); \filldraw (s) circle (1pt); \draw (u) circle (.7); } \filldraw (v) circle (1pt); \coordinate (s) at ([shift=({0}:.7)]v); \draw (s) circle (.7); \end{tikzpicture} } \fi } + \frac{1}{8} {  \ifmmode \usebox{\fgcontracteddblhandle} \else \newsavebox{\fgcontracteddblhandle} \savebox{\fgcontracteddblhandle}{ \begin{tikzpicture}[x=1ex,y=1ex,baseline={([yshift=-.5ex]current bounding box.center)}] \coordinate (v0); \coordinate [right=1.5 of v0] (v1); \coordinate [left=.7 of v0] (i0); \coordinate [right=.7 of v1] (o0); \coordinate [right=.7 of o0] (v2); \coordinate [right=.7 of v2] (o1); \draw (v0) -- (v1); \filldraw (v0) circle (1pt); \filldraw (v1) circle (1pt); \filldraw (v2) circle (1pt); \draw (i0) circle(.7); \draw (o0) circle(.7); \draw (o1) circle(.7); \end{tikzpicture} } \fi } + \frac{1}{12} {  \ifmmode \usebox{\fgbananathreehandle} \else \newsavebox{\fgbananathreehandle} \savebox{\fgbananathreehandle}{ \begin{tikzpicture}[x=1ex,y=1ex,baseline={([yshift=-.5ex]current bounding box.center)}] \coordinate (vm); \coordinate [left=1 of vm] (v0); \coordinate [right=1 of vm] (v1); \coordinate [right=1.5 of v1] (v2); \coordinate [right=.7 of v2] (o); \draw (v0) -- (v1); \draw (v1) -- (v2); \draw (vm) circle(1); \draw (o) circle(.7); \filldraw (v0) circle (1pt); \filldraw (v1) circle (1pt); \filldraw (v2) circle (1pt); \end{tikzpicture} } \fi } + \frac{1}{8} {  \ifmmode \usebox{\fgcontractedcloseddunce} \else \newsavebox{\fgcontractedcloseddunce} \savebox{\fgcontractedcloseddunce}{ \begin{tikzpicture}[x=1ex,y=1ex,baseline={([yshift=-.5ex]current bounding box.center)}] \coordinate (v0); \coordinate[right=.5 of v0] (v3); \coordinate[above left=1.5 of v0] (v1); \coordinate[below left=1.5 of v0] (v2); \coordinate[right=.7 of v3] (o); \draw (v3) to[bend left=20] (v2); \draw (v3) to[bend right=20] (v1); \draw (v1) to[bend right=80] (v2); \draw (v1) to[bend left=80] (v2); \filldraw (v1) circle(1pt); \filldraw (v2) circle(1pt); \filldraw (v3) circle(1pt); \draw (o) circle(.7); \end{tikzpicture} } \fi } \\ + \frac{1}{128} {  \ifmmode \usebox{\fgtwotadpoletwos} \else \newsavebox{\fgtwotadpoletwos} \savebox{\fgtwotadpoletwos}{ \begin{tikzpicture}[x=1ex,y=1ex,baseline={([yshift=-.5ex]current bounding box.center)}] \coordinate (v); \coordinate [above=1.2 of v] (vm1); \coordinate [left=.7 of vm1] (v01); \coordinate [right=.7 of vm1] (v11); \draw (v01) circle(.7); \draw (v11) circle(.7); \filldraw (vm1) circle (1pt); \coordinate [below=1.2 of v] (vm2); \coordinate [left=.7 of vm2] (v02); \coordinate [right=.7 of vm2] (v12); \draw (v02) circle(.7); \draw (v12) circle(.7); \filldraw (vm2) circle (1pt); \end{tikzpicture} } \fi } + \frac{1}{48} {  \ifmmode \usebox{\fgbananafour} \else \newsavebox{\fgbananafour} \savebox{\fgbananafour}{ \begin{tikzpicture}[x=1ex,y=1ex,baseline={([yshift=-.5ex]current bounding box.center)}] \coordinate (vm); \coordinate [left=1 of vm] (v0); \coordinate [right=1 of vm] (v1); \draw (v0) to[bend left=45] (v1); \draw (v0) to[bend right=45] (v1); \draw (vm) circle(1); \filldraw (v0) circle (1pt); \filldraw (v1) circle (1pt); \end{tikzpicture} } \fi } + \frac{1}{16} {  \ifmmode \usebox{\fgthreebubble} \else \newsavebox{\fgthreebubble} \savebox{\fgthreebubble}{ \begin{tikzpicture}[x=1ex,y=1ex,baseline={([yshift=-.5ex]current bounding box.center)}] \coordinate (vm); \coordinate [left=.7 of vm] (v0); \coordinate [right=.7 of vm] (v1); \coordinate [left=.7 of v0] (vc1); \coordinate [right=.7 of v1] (vc2); \draw (vc1) circle(.7); \draw (vc2) circle(.7); \draw (vm) circle(.7); \filldraw (v0) circle (1pt); \filldraw (v1) circle (1pt); \end{tikzpicture} } \fi } + \frac{1}{12} {  \ifmmode \usebox{\fgbananathreewithbubble} \else \newsavebox{\fgbananathreewithbubble} \savebox{\fgbananathreewithbubble}{ \begin{tikzpicture}[x=1ex,y=1ex,baseline={([yshift=-.5ex]current bounding box.center)}] \coordinate (vm); \coordinate [left=1 of vm] (v0); \coordinate [right=1 of vm] (v1); \coordinate [right=.7 of v1] (o); \draw (v0) -- (v1); \draw (vm) circle(1); \draw (o) circle(.7); \filldraw (v0) circle (1pt); \filldraw (v1) circle (1pt); \end{tikzpicture} } \fi } + \frac{1}{16} {  \ifmmode \usebox{\fgdblcontractedpropellerthree} \else \newsavebox{\fgdblcontractedpropellerthree} \savebox{\fgdblcontractedpropellerthree}{ \begin{tikzpicture}[x=1ex,y=1ex,baseline={([yshift=-.5ex]current bounding box.center)}] \coordinate (v) ; \def \rad {1.5}; \coordinate (s1) at ([shift=(0:1.2)]v); \coordinate (s2) at ([shift=(120:\rad)]v); \coordinate (s3) at ([shift=(240:\rad)]v); \coordinate [right=.7 of s1] (o); \draw (v) to[out=180,in=210] (s2) to[out=30,in=60] (v); \draw (v) to[out=300,in=330] (s3) to[out=150,in=180] (v); \draw (v) -- (s1); \filldraw (v) circle (1pt); \filldraw (s1) circle (1pt); \draw (o) circle(.7); \end{tikzpicture} } \fi } + \frac{1}{48} {  \ifmmode \usebox{\fgthreerose} \else \newsavebox{\fgthreerose} \savebox{\fgthreerose}{ \begin{tikzpicture}[x=1ex,y=1ex,baseline={([yshift=-.5ex]current bounding box.center)}] \coordinate (v) ; \def \rad {1.5}; \coordinate (s1) at ([shift=(0:\rad)]v); \coordinate (s2) at ([shift=(120:\rad)]v); \coordinate (s3) at ([shift=(240:\rad)]v); \draw (v) to[out=60,in=90] (s1) to[out=-90,in=0-60] (v); \draw (v) to[out=180,in=210] (s2) to[out=30,in=60] (v); \draw (v) to[out=300,in=330] (s3) to[out=150,in=180] (v); \filldraw (v) circle (1pt); \end{tikzpicture} } \fi } + \cdots \Big) \\ = \sqrt{a} \Big( 1 + \left( \left( \frac{1}{8} + \frac{1}{12} \right)\lambda_3^2 a^3 + \frac{1}{8} \lambda_4 a^2 \right) \hbar \\ + \left( \frac{385}{1152} \lambda_3^4 a^6 + \frac{35}{64} \lambda_3^2 \lambda_4 a^5 + \frac{35}{384} \lambda_4^2 a^4 + \frac{7}{48} \lambda_3 \lambda_5 a^4 + \frac{1}{48} \lambda_6 a^3       \right) \hbar^2 + \cdots \Big) \end{gathered} \end{gather}
where $\phi_{\Sact}$ is a linear map which applies the Feynman rules, encoded in $\Sact$, to every graph. 
The expression $\Fop[\Sact(x)](\hbar) = \sum_{n=0}^\infty \hbar^n P_n(\lambda_3 a^{\frac32},\lambda_4 a^{\frac42},\ldots)$ is a sequence of polynomials $P_n$ of degree $2n$.

Of course, drawing all diagrams for a specific model and applying the zero-dimensional Feynman rules is not a very convenient way to calculate the power series $\Fop[\Sact(x)](\hbar)$ order by order. 
A more efficient way is to derive differential equations from the formal integral expression and solve these recursively \cite{cvitanovic1978number, argyres2001zero}. In some cases these differential equations can be solved exactly \cite{argyres2001zero} or sufficiently simple closed forms for the respective coefficients can be found. 
For example, this is possible for the zero-dimensional version of $\varphi^3$-theory:

\begin{expl}[The partition function of $\varphi^3$-theory]
\label{expl:phi3theoryexpansion}
In $\varphi^3$-theory the potential takes the form $V(x)= \frac{x^3}{3!}$, that means $\Sact(x)= -\frac{x^2}{2}+\frac{x^3}{3!}$. From Definition \ref{def:formalintegral} it follows that,
\begin{align*} Z^{\varphi^3}(\hbar) &= \Fop\left[-\frac{x^2}{2}+\frac{x^3}{3!}\right](\hbar) = \sum_{n=0}^\infty \hbar^n (2n-1)!! [x^{2n}] e^{\frac{x^3}{3!\hbar }} = \sum_{n=0}^\infty \hbar^{n} \frac{(6n-1)!!}{(3!)^{2n} (2n)!}, \end{align*}
where we were able to expand the expression, because
\begin{align*} [x^{6n}] e^{\frac{x^3}{3!\hbar }} &= \frac{1}{(3!)^{2n} \hbar^{2n} (2n)!} && \\ [x^{6n+k}] e^{\frac{x^3}{3!\hbar }} &= 0 && k=1,2,3,4,5. \end{align*}
The diagrammatic expansion starts with
\begin{gather*} Z^{\varphi^3}(\hbar) = \phi_{\Sact} \Big( 1 + \frac18 {  \ifmmode \usebox{\fghandle} \else \newsavebox{\fghandle} \savebox{\fghandle}{ \begin{tikzpicture}[x=1ex,y=1ex,baseline={([yshift=-.5ex]current bounding box.center)}] \coordinate (v0); \coordinate [right=1.5 of v0] (v1); \coordinate [left=.7 of v0] (i0); \coordinate [right=.7 of v1] (o0); \draw (v0) -- (v1); \filldraw (v0) circle (1pt); \filldraw (v1) circle (1pt); \draw (i0) circle(.7); \draw (o0) circle(.7); \end{tikzpicture} } \fi } + \frac{1}{12} {  \ifmmode \usebox{\fgbananathree} \else \newsavebox{\fgbananathree} \savebox{\fgbananathree}{ \begin{tikzpicture}[x=1ex,y=1ex,baseline={([yshift=-.5ex]current bounding box.center)}] \coordinate (vm); \coordinate [left=1 of vm] (v0); \coordinate [right=1 of vm] (v1); \draw (v0) -- (v1); \draw (vm) circle(1); \filldraw (v0) circle (1pt); \filldraw (v1) circle (1pt); \end{tikzpicture} } \fi } \\ + \frac{1}{128} {  \ifmmode \usebox{\fgtwohandles} \else \newsavebox{\fgtwohandles} \savebox{\fgtwohandles}{ \begin{tikzpicture}[x=1ex,y=1ex,baseline={([yshift=-.5ex]current bounding box.center)}] \coordinate (v); \coordinate [above=1.2 of v] (v01); \coordinate [right=1.5 of v01] (v11); \coordinate [left=.7 of v01] (i01); \coordinate [right=.7 of v11] (o01); \draw (v01) -- (v11); \filldraw (v01) circle (1pt); \filldraw (v11) circle (1pt); \draw (i01) circle(.7); \draw (o01) circle(.7); \coordinate [below=1.2 of v] (v02); \coordinate [right=1.5 of v02] (v12); \coordinate [left=.7 of v02] (i02); \coordinate [right=.7 of v12] (o02); \draw (v02) -- (v12); \filldraw (v02) circle (1pt); \filldraw (v12) circle (1pt); \draw (i02) circle(.7); \draw (o02) circle(.7); \end{tikzpicture} } \fi } + \frac{1}{288} {  \ifmmode \usebox{\fgtwobananasthree} \else \newsavebox{\fgtwobananasthree} \savebox{\fgtwobananasthree}{ \begin{tikzpicture}[x=1ex,y=1ex,baseline={([yshift=-.5ex]current bounding box.center)}] \coordinate (v); \coordinate [above=1.2 of v](v01); \coordinate [right=2 of v01] (v11); \coordinate [right=1 of v01] (vm1); \draw (v01) -- (v11); \draw (vm1) circle(1); \filldraw (v01) circle (1pt); \filldraw (v11) circle (1pt); \coordinate [below=1.2 of v](v02); \coordinate [right=2 of v02] (v12); \coordinate [right=1 of v02] (vm2); \draw (v02) -- (v12); \draw (vm2) circle(1); \filldraw (v02) circle (1pt); \filldraw (v12) circle (1pt); \end{tikzpicture} } \fi } + \frac{1}{96} {  \ifmmode \usebox{\fgbananathreeandhandle} \else \newsavebox{\fgbananathreeandhandle} \savebox{\fgbananathreeandhandle}{ \begin{tikzpicture}[x=1ex,y=1ex,baseline={([yshift=-.5ex]current bounding box.center)}] \coordinate (v); \coordinate [above=1.2 of v] (vm1); \coordinate [left=1 of vm1] (v01); \coordinate [right=1 of vm1] (v11); \draw (v01) -- (v11); \draw (vm1) circle(1); \filldraw (v01) circle (1pt); \filldraw (v11) circle (1pt); \coordinate [below=1.2 of v] (vm2); \coordinate [left=.75 of vm2] (v02); \coordinate [right=.75 of vm2] (v12); \coordinate [left=.7 of v02] (i02); \coordinate [right=.7 of v12] (o02); \draw (v02) -- (v12); \filldraw (v02) circle (1pt); \filldraw (v12) circle (1pt); \draw (i02) circle(.7); \draw (o02) circle(.7); \end{tikzpicture} } \fi } + \frac{1}{48} {  \ifmmode \usebox{\fgpropellerthree} \else \newsavebox{\fgpropellerthree} \savebox{\fgpropellerthree}{ \begin{tikzpicture}[x=1ex,y=1ex,baseline={([yshift=-.5ex]current bounding box.center)}] \coordinate (v) ; \def \n {3}; \def \rad {1.2}; \def \rud {1.9}; \foreach \s in {1,...,\n} { \def \angle {360/\n*(\s - 1)}; \coordinate (s) at ([shift=({\angle}:\rad)]v); \coordinate (u) at ([shift=({\angle}:\rud)]v); \draw (v) -- (s); \filldraw (s) circle (1pt); \draw (u) circle (.7); } \filldraw (v) circle (1pt); \end{tikzpicture} } \fi } \\ + \frac{1}{16} {  \ifmmode \usebox{\fgdblhandle} \else \newsavebox{\fgdblhandle} \savebox{\fgdblhandle}{ \begin{tikzpicture}[x=1ex,y=1ex,baseline={([yshift=-.5ex]current bounding box.center)}] \coordinate (v) ; \def \n {2}; \def \rad {.7}; \def \rud {2}; \def \rid {2.7}; \foreach \s in {1,...,\n} { \def \angle {360/\n*(\s - 1)}; \coordinate (s) at ([shift=({\angle}:\rad)]v); \coordinate (u) at ([shift=({\angle}:\rud)]v); \coordinate (t) at ([shift=({\angle}:\rid)]v); \draw (s) -- (u); \filldraw (u) circle (1pt); \filldraw (s) circle (1pt); \draw (t) circle (.7); } \draw (v) circle(\rad); \end{tikzpicture} } \fi } + \frac{1}{16} {  \ifmmode \usebox{\fgdbleye} \else \newsavebox{\fgdbleye} \savebox{\fgdbleye}{ \begin{tikzpicture}[x=1ex,y=1ex,baseline={([yshift=-.5ex]current bounding box.center)}] \coordinate (v0); \coordinate[above left=1.5 of v0] (v1); \coordinate[below left=1.5 of v0] (v2); \coordinate[above right=1.5 of v0] (v3); \coordinate[below right=1.5 of v0] (v4); \draw (v1) to[bend left=80] (v2); \draw (v1) to[bend right=80] (v2); \draw (v3) to[bend right=80] (v4); \draw (v3) to[bend left=80] (v4); \draw (v1) -- (v3); \draw (v2) -- (v4); \filldraw (v1) circle(1pt); \filldraw (v2) circle(1pt); \filldraw (v3) circle(1pt); \filldraw (v4) circle(1pt); \end{tikzpicture} } \fi } + \frac{1}{8} {  \ifmmode \usebox{\fgcloseddunce} \else \newsavebox{\fgcloseddunce} \savebox{\fgcloseddunce}{ \begin{tikzpicture}[x=1ex,y=1ex,baseline={([yshift=-.5ex]current bounding box.center)}] \coordinate (v0); \coordinate[right=.5 of v0] (v3); \coordinate[right=1.5 of v3] (v4); \coordinate[above left=1.5 of v0] (v1); \coordinate[below left=1.5 of v0] (v2); \coordinate[right=.7 of v4] (o); \draw (v3) to[bend left=20] (v2); \draw (v3) to[bend right=20] (v1); \draw (v1) to[bend right=80] (v2); \draw (v1) to[bend left=80] (v2); \draw (v3) -- (v4); \filldraw (v1) circle(1pt); \filldraw (v2) circle(1pt); \filldraw (v3) circle(1pt); \filldraw (v4) circle(1pt); \draw (o) circle(.7); \end{tikzpicture} } \fi } + \frac{1}{24} {  \begin{tikzpicture}[x=1ex,y=1ex,baseline={([yshift=-.5ex]current bounding box.center)}] \coordinate (v) ; \def \n {3}; \def \rad {1.2}; \foreach \s in {1,...,\n} { \def \angle {360/\n*(\s - 1)}; \coordinate (s) at ([shift=({\angle}:\rad)]v); \draw (v) -- (s); \filldraw (s) circle (1pt); } \draw (v) circle (\rad); \filldraw (v) circle (1pt); \end{tikzpicture} } + \ldots \Big) \\ = 1 +\left( \frac{1}{8} + \frac{1}{12} \right) \hbar + \frac{385}{1152} \hbar^2 + \ldots                    \end{gather*}
which is the same as the expansion in \eqref{eqn:generalexpansion} with $a= \lambda_3 = 1$ and all other $\lambda_k=0$.
\end{expl}

\begin{expl}[Generating function of all multigraphs with given excess]
\label{expl:allgraphs}
The generating function of all graphs without one or two-valent vertices is given by the partition function of the `theory' with the potential $V(x)=\sum_{k=3}\frac{x^k}{k!} = e^x - 1 - x - \frac{x^2}{2}$. Therefore,
\begin{align*} Z^\text{all}(\hbar) = \Fop\left[ -x^2 - x - 1 + e^x \right](\hbar) &= \sum_{n=0}^\infty \hbar^n (2n-1)!! [x^{2n}] e^{\frac{1}{\hbar} \left( e^x - 1 - x - \frac{x^2}{2} \right)}. \end{align*}
Here, as in many cases where $V(x)$ is not merely a monomial, the extraction of coefficients is more difficult. 
Still, the power series expansion in $\hbar$ can be computed conveniently with the methods which will be established in the next section.
The diagrammatic expansion is equivalent to the one given in \eqref{eqn:generalexpansion} with $a=1$ and all the $\lambda_k=1$:
\begin{align*} Z^\text{all}(\hbar) &= 1 + \frac{1}{3} \hbar + \frac{41}{36} \hbar^2 + \cdots \end{align*}
Whereas this example has no direct interpretation in QFT, except maybe for the case of gravity, where vertices with arbitrary valency appear, it shows that formal integrals are quite powerful at enumerating general graphs. Hence, the techniques of zero-dimensional QFT and formal integrals can be applied to a much broader class of topics, which evolve around graph enumeration. Especially promising is the application to the theory of complex networks \cite{albert2002statistical}. We will elaborate on applications of formal integrals to these problems in a future publication \cite{borinsky2017graph}.
\end{expl}

\begin{expl}[Zero-dimensional sine-Gordon model]
\label{expl:sinegordon}
For a more exotic zero-dimensional QFT take $\Sact(x)= -\frac{\sin^2(x)}{2}$ or $V(x) = \frac{x^2}{2} -\frac{\sin^2(x)}{2} = 4\frac{x^4}{4!} - 16 \frac{x^6}{6!}+ 64 \frac{x^8}{8!} +\cdots$. This can be seen as the potential of a zero-dimensional version of the sine-Gordon model \cite{cherman2014decoding}.
\begin{align*} \Fop\left[ -\frac{\sin^2(x)}{2} \right](\hbar) &= \sum_{n=0}^\infty \hbar^n (2n-1)!! [x^{2n}] e^{\frac{1}{\hbar} \left( \frac{x^2}{2} -\frac{\sin^2(x)}{2} \right)}. \end{align*}
The diagrammatic expansion starts with
\begin{gather*} Z^{\text{sine-Gordon}}(\hbar) = \phi_{\Sact} \Big( 1 + \frac{1}{8} {  \ifmmode \usebox{\fgtadpoletwo} \else \newsavebox{\fgtadpoletwo} \savebox{\fgtadpoletwo}{ \begin{tikzpicture}[x=1ex,y=1ex,baseline={([yshift=-.5ex]current bounding box.center)}] \coordinate (vm); \coordinate [left=.7 of vm] (v0); \coordinate [right=.7 of vm] (v1); \draw (v0) circle(.7); \draw (v1) circle(.7); \filldraw (vm) circle (1pt); \end{tikzpicture} } \fi } + \frac{1}{128} {  \ifmmode \usebox{\fgtwotadpoletwos} \else \newsavebox{\fgtwotadpoletwos} \savebox{\fgtwotadpoletwos}{ \begin{tikzpicture}[x=1ex,y=1ex,baseline={([yshift=-.5ex]current bounding box.center)}] \coordinate (v); \coordinate [above=1.2 of v] (vm1); \coordinate [left=.7 of vm1] (v01); \coordinate [right=.7 of vm1] (v11); \draw (v01) circle(.7); \draw (v11) circle(.7); \filldraw (vm1) circle (1pt); \coordinate [below=1.2 of v] (vm2); \coordinate [left=.7 of vm2] (v02); \coordinate [right=.7 of vm2] (v12); \draw (v02) circle(.7); \draw (v12) circle(.7); \filldraw (vm2) circle (1pt); \end{tikzpicture} } \fi } + \frac{1}{48} {  \ifmmode \usebox{\fgbananafour} \else \newsavebox{\fgbananafour} \savebox{\fgbananafour}{ \begin{tikzpicture}[x=1ex,y=1ex,baseline={([yshift=-.5ex]current bounding box.center)}] \coordinate (vm); \coordinate [left=1 of vm] (v0); \coordinate [right=1 of vm] (v1); \draw (v0) to[bend left=45] (v1); \draw (v0) to[bend right=45] (v1); \draw (vm) circle(1); \filldraw (v0) circle (1pt); \filldraw (v1) circle (1pt); \end{tikzpicture} } \fi } + \frac{1}{16} {  \ifmmode \usebox{\fgthreebubble} \else \newsavebox{\fgthreebubble} \savebox{\fgthreebubble}{ \begin{tikzpicture}[x=1ex,y=1ex,baseline={([yshift=-.5ex]current bounding box.center)}] \coordinate (vm); \coordinate [left=.7 of vm] (v0); \coordinate [right=.7 of vm] (v1); \coordinate [left=.7 of v0] (vc1); \coordinate [right=.7 of v1] (vc2); \draw (vc1) circle(.7); \draw (vc2) circle(.7); \draw (vm) circle(.7); \filldraw (v0) circle (1pt); \filldraw (v1) circle (1pt); \end{tikzpicture} } \fi } + \frac{1}{48} {  \ifmmode \usebox{\fgthreerose} \else \newsavebox{\fgthreerose} \savebox{\fgthreerose}{ \begin{tikzpicture}[x=1ex,y=1ex,baseline={([yshift=-.5ex]current bounding box.center)}] \coordinate (v) ; \def \rad {1.5}; \coordinate (s1) at ([shift=(0:\rad)]v); \coordinate (s2) at ([shift=(120:\rad)]v); \coordinate (s3) at ([shift=(240:\rad)]v); \draw (v) to[out=60,in=90] (s1) to[out=-90,in=0-60] (v); \draw (v) to[out=180,in=210] (s2) to[out=30,in=60] (v); \draw (v) to[out=300,in=330] (s3) to[out=150,in=180] (v); \filldraw (v) circle (1pt); \end{tikzpicture} } \fi } + \ldots \Big) \\ = 1 + \frac{4}{8}\hbar + \left( \frac{4^2}{128} + \frac{4^2}{48} + \frac{4^2}{16} - \frac{16}{48} \right) \hbar^2 + \ldots \\ = 1 + \frac12 \hbar + \frac{9}{8} \hbar^2 + \ldots \end{gather*}
which is equal to the expansion in \eqref{eqn:generalexpansion} with $\lambda_{2k} = (-1)^k 2^{2k-2}$ and $\lambda_{2k+1}=0$.

\end{expl}

\begin{expl}[Stirling's QFT]
The following example is widely used in physics, although we did not find any reference with the interpretation of a zero-dimensional QFT. As a matrix model it is known as Penner's model \cite{penner1988perturbative}. It agrees with Stirling's asymptotic expansion of the $\Gamma$-function \cite[A. D]{kontsevich1992intersection}:

From Euler's integral for the $\Gamma$-function we can deduce with the change of variables $t \rightarrow N e^x$ ,
\begin{align*} \frac{\Gamma(N)}{\sqrt{\frac{2\pi}{N}} \left(\frac{N}{e}\right)^N } &= \frac{1}{\sqrt{\frac{2\pi}{N}} \left(\frac{N}{e}\right)^N } \int_0^\infty dt t^{N-1} e^{-t} = \frac{e^{N}}{\sqrt{\frac{2\pi}{N}}} \int_\R dx e^{-N e^x + Nx } \\ &= \int_\R \frac{dx}{\sqrt{2\pi \frac{1}{N}}} e^{N\left(1-x-e^x\right) } , \end{align*}
This is the correction term of Stirling's formula expressed as a zero-dimensional QFT. Note, that the integral is actually convergent in this case, whereas the expansion in $\frac{1}{N}$ is not.
Therefore,
\begin{align*} Z^\text{Stirling}\left(\frac{1}{N}\right)&:= \Fop[1+x-e^x]\left(\frac{1}{N}\right) \\ &= \sum_{n=0}^\infty \frac{1}{N^n} (2n-1)!! [x^{2n}] e^{N \left( 1+x+\frac{x^2}{2}-e^x\right)}. \end{align*}
We can use the well-known Stirling expansion in terms of the Bernoulli numbers $B_k$ to state the power series more explicitly \cite{whittaker1996course}:
\begin{align*} \Fop[1+x-e^x]\left(\frac{1}{N}\right) = e^{ \sum_{k=1}^\infty \frac{B_{k+1}}{k (k+1)} \frac{1}{N^{k}}}, \end{align*}

Interestingly, Proposition \ref{prop:diagraminterpretation} provides us with a combinatorial interpretation of the Stirling expansion. 
We can directly use expansion \eqref{eqn:generalexpansion} to calculate the first terms:
\begin{gather*} Z^\text{Stirling} \left(\frac{1}{N}\right):= \phi_{\Sact} \Big( 1 + \frac18 {  \ifmmode \usebox{\fghandle} \else \newsavebox{\fghandle} \savebox{\fghandle}{ \begin{tikzpicture}[x=1ex,y=1ex,baseline={([yshift=-.5ex]current bounding box.center)}] \coordinate (v0); \coordinate [right=1.5 of v0] (v1); \coordinate [left=.7 of v0] (i0); \coordinate [right=.7 of v1] (o0); \draw (v0) -- (v1); \filldraw (v0) circle (1pt); \filldraw (v1) circle (1pt); \draw (i0) circle(.7); \draw (o0) circle(.7); \end{tikzpicture} } \fi } + \frac{1}{12} {  \ifmmode \usebox{\fgbananathree} \else \newsavebox{\fgbananathree} \savebox{\fgbananathree}{ \begin{tikzpicture}[x=1ex,y=1ex,baseline={([yshift=-.5ex]current bounding box.center)}] \coordinate (vm); \coordinate [left=1 of vm] (v0); \coordinate [right=1 of vm] (v1); \draw (v0) -- (v1); \draw (vm) circle(1); \filldraw (v0) circle (1pt); \filldraw (v1) circle (1pt); \end{tikzpicture} } \fi } + \frac{1}{8} {  \ifmmode \usebox{\fgtadpoletwo} \else \newsavebox{\fgtadpoletwo} \savebox{\fgtadpoletwo}{ \begin{tikzpicture}[x=1ex,y=1ex,baseline={([yshift=-.5ex]current bounding box.center)}] \coordinate (vm); \coordinate [left=.7 of vm] (v0); \coordinate [right=.7 of vm] (v1); \draw (v0) circle(.7); \draw (v1) circle(.7); \filldraw (vm) circle (1pt); \end{tikzpicture} } \fi } + \ldots \Big) \\ = 1 + \left( \frac18 (-1)^2 + \frac{1}{12} (-1)^2 + \frac{1}{8} (-1)^1 \right) \frac{1}{N} \\ + \left( \frac{385}{1152} (-1)^4 + \frac{35}{64} (-1)^3 + \frac{35}{384} (-1)^2 + \frac{7}{48} (-1)^2 + \frac{1}{48} (-1)^1 \right) \frac{1}{N^2} + \ldots \\ =1 + \frac{1}{12} \frac{1}{N} + \frac{1}{288} \frac{1}{N^2} + \ldots \end{gather*}
which results in the well-known asymptotic expansion of the gamma function \cite{whittaker1996course},
\begin{align*} \Gamma(N) \underset{N\rightarrow \infty}{\sim} \sqrt{\frac{2\pi}{N}} \left(\frac{N}{e}\right)^n \left( 1 +\frac{1}{12 N} + \frac{1}{288 N^2}+\ldots\right). \end{align*}
Moreover, taking the logarithm of $\Fop[1+x-e^x]\left(\frac{1}{N}\right)$ and using the fact that the $n$-th Bernoulli number vanishes if $n$ is odd and greater than $1$, gives us the combinatorial identities,
\begin{align*} \frac{B_{2n}}{2 n (2n-1)} &= \sum_{|L(\Gamma)| = 2n} \frac{(-1)^{|V(\Gamma)|}}{|\Aut\Gamma|} & 0 &= \sum_{|L(\Gamma)| = 2n+1} \frac{(-1)^{|V(\Gamma)|}}{|\Aut\Gamma|}, \end{align*}
where the sum is over all \textit{connected} graphs $\Gamma$ with a fixed number of loops, denoted by $|L(\Gamma)|$.
\end{expl}
\subsection{Representation as affine hyperelliptic curve}

Calculating the coefficients of the power series given in Definition \ref{def:formalintegral} using the expression in eq.\ \eqref{eqn:formalintegralpwrsrs} directly is inconvenient, because an intermediate bivariate quantity $e^{\frac{1}{\hbar}V(x)}$ needs to be expanded in $x$ and in $\hbar^{-1}$.

A form that is computationally more accessible can be achieved by \textit{a formal change of variables}. 
Recall that we set $\Sact(x) = -\frac{x^2}{2a} +V(x)$. Expanding the exponential in eq.\ \eqref{eqn:formalintegralpwrsrs} gives
\begin{align*} \mathcal{F}[\Sact(x)](\hbar) &= \sqrt{a} \sum_{n=0}^\infty \sum_{k=0}^\infty \hbar^{n-k} a^{n} (2n-1)!! [x^{2n}] \frac{V(x)^k}{k!}. \intertext{This can be seen as the zero-dimensional analog of Dyson's series \cite{itzykson2006quantum}. Shifting the summation over $n$ and substituting $V(x) = \frac{x^2}{2a } + \Sact(x)$ results in,} &= \sum_{n=0}^\infty \sum_{k=0}^\infty 2^{-k} a^{n+\frac12} \hbar^{n} \frac{(2(n+k)-1)!!}{k!} [x^{2n}] \left(1 + \frac{2 a}{x^2} \Sact(x) \right)^k \intertext{ Because $2^{-k} \frac{(2(n+k)-1)!!}{(2n-1)!! k!} = { n+k-\frac12 \choose k }$ , it follows that }  &= \sum_{n=0}^\infty a^{n+\frac12} \hbar^{n} (2n-1)!! [x^{2n}] \sum_{k=0}^\infty { n+k-\frac12 \choose k } \left(1 + \frac{2a}{x^2}\Sact(x) \right)^k, \intertext{and using $\sum_{k=0}^\infty { \alpha +k-1 \choose k } x^k = \frac{1}{(1-x)^\alpha} $ gives, } &= \sum_{n=0}^\infty \hbar^{n} (2n-1)!! [x^{2n}] \left( \frac{x}{\sqrt{-2 \Sact(x)}} \right)^{2n+1}. \end{align*}
By the Lagrange inversion formula $[y^n] g(y) = \frac{1}{n} [x^{n-1}] \left(\frac{x}{f(x)}\right)^n$, where $f(g(y)) = y$, this is equivalent to
\begin{prop}
\label{prop:formalchangeofvar}
If $\Sact(x) = -\frac{x^2}{2a} + V(x)$, then
\begin{align} \mathcal{F}[\Sact(x)](\hbar)&= \sum_{n=0}^\infty \hbar^{n} (2n+1)!! [y^{2n+1}] x(y) \end{align}
where $x(y)$ is the unique power series solution of $y = \sqrt{-2 S(x(y))}$, where the positive branch of the square root is taken. 
\end{prop}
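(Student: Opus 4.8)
The plan is to start from the identity derived in the lines immediately preceding the proposition, namely
\[ \mathcal{F}[\mathcal{S}(x)](\hbar) = \sum_{n=0}^\infty \hbar^{n}\, (2n-1)!!\, [x^{2n}] \left( \frac{x}{\sqrt{-2\mathcal{S}(x)}} \right)^{2n+1}, \]
and to recognise the bracketed coefficient as an instance of Lagrange inversion. Writing $f(x) := \sqrt{-2\mathcal{S}(x)}$, the proposition is then equivalent to the term-by-term identity $(2n-1)!!\,[x^{2n}](x/f(x))^{2n+1} = (2n+1)!!\,[y^{2n+1}] x(y)$ for every $n \geq 0$, so the whole proof reduces to justifying that one line.

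First I would verify that $f$ is a genuine formal power series to which Lagrange inversion applies. Since $\mathcal{S}(x) = -\tfrac{x^2}{2a} + V(x)$ with $V \in x^3\mathbb{R}[[x]]$, we have $-2\mathcal{S}(x) = \tfrac{x^2}{a}\bigl(1 - \tfrac{2a}{x^2}V(x)\bigr)$ with $\tfrac{2a}{x^2}V(x) \in x\mathbb{R}[[x]]$, so the formal binomial series $(1+u)^{1/2}$ evaluated at $u = -\tfrac{2a}{x^2}V(x)$ is a well-defined element of $\mathbb{R}[[x]]$ with constant term $1$. Hence $f(x) = \tfrac{x}{\sqrt{a}}\bigl(1 - \tfrac{2a}{x^2}V(x)\bigr)^{1/2} \in x\mathbb{R}[[x]]$ has vanishing constant term and $f'(0) = 1/\sqrt{a} \neq 0$; the phrase ``positive branch of the square root'' amounts precisely to the choice $+1/\sqrt{a}$ for this leading coefficient (recall $a>0$). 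Therefore $f$ admits a unique compositional inverse $g \in y\mathbb{R}[[y]]$, and $x(y) := g(y)$ is the unique power series with $f(x(y)) = y$, i.e.\ with $y = \sqrt{-2\mathcal{S}(x(y))}$ on the positive branch — exactly the series named in the statement.

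Next I would apply the Lagrange inversion formula in the stated form $[y^m]g(y) = \tfrac1m [x^{m-1}](x/f(x))^m$ with $m = 2n+1$, which gives
\[ [y^{2n+1}] x(y) = \frac{1}{2n+1}\, [x^{2n}] \left( \frac{x}{\sqrt{-2\mathcal{S}(x)}} \right)^{2n+1}. \]
Multiplying by $(2n+1)!!$ and using $(2n+1)!! = (2n+1)(2n-1)!!$ (with the convention $(-1)!! = 1$, which also covers the $n=0$ term, where both sides reduce to $\sqrt{a}$, consistent with $\mathcal{F}[\mathcal{S}(x)]$ having constant term $\sqrt{a}$ in Definition~\ref{def:formalintegral}) turns the previous line into $(2n+1)!!\,[y^{2n+1}]x(y) = (2n-1)!!\,[x^{2n}](x/\sqrt{-2\mathcal{S}(x)})^{2n+1}$. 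Substituting this back into the starting identity yields the claim.

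I do not anticipate a real obstacle: the substantive work — the Dyson-series resummation that produces the $(x/\sqrt{-2\mathcal{S}(x)})^{2n+1}$ form — has already been carried out above, and what remains is a single application of Lagrange inversion together with the index shift $(2n+1)!!/(2n+1) = (2n-1)!!$. The only points demanding any care are the formal bookkeeping around the square root (well-definedness of the binomial series, the branch choice, and the factor of $x$ that makes $f$ vanish simply at the origin) and the $n=0$ edge case, both of which are routine.
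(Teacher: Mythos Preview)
Your proposal is correct and follows exactly the paper's approach: the paper derives the $(x/\sqrt{-2\mathcal S(x)})^{2n+1}$ form in the lines preceding the proposition and then invokes Lagrange inversion in precisely the form $[y^n]g(y)=\tfrac1n[x^{n-1}](x/f(x))^n$ to obtain the statement. The extra care you take about well-definedness of the formal square root, the branch choice, and the $n=0$ edge case merely spells out routine details that the paper leaves implicit.
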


Note, that this can be seen as a formal change of variables for the formal integral \eqref{eqn:formalfuncintegral1}. The advantage of using the Lagrange inversion formula is that it makes clear that the formal change of variables in Proposition \ref{prop:formalchangeofvar} does not depend on the analyticity or injectiveness properties of $\Sact(x)$.

Care must be taken to ensure that $x(y)$ is interpreted as a formal power series in $\R[[y]]$, whereas $\Sact(x)$ is in $\R[[x]]$. We hope that the slight abuse of notation, where we interpret $x$ as a power series or as a variable is transparent for the reader. 

If $\Sact(x)$ is a polynomial, the equation $y = \sqrt{-2 S(x(y))}$ can be interpreted as the definition of an \textit{affine hyperelliptic curve}, 
\begin{align} \frac{y^2}{2} = - \Sact(x) \end{align}
with at least one singular point or \textit{ordinary double point} at the origin, because $\Sact(x) = -\frac{x^2}{2a}+\cdots$. If $\Sact(x)$ is not a polynomial, but an entire function, it is a \textit{generalized affine hyperelliptic curve}.

This interpretation shows a surprising similarity to the theory of \textit{topological recursion} \cite{eynard2007invariants}. The affine complex curve is called the spectral curve in this realm, as it is associated to the eigenvalue distribution of a random matrix model. 
In the theory of topological recursion the \textit{branch-cut} singularities of the expansion of the curve play a vital role. They will also be important for the extraction of asymptotics from formal integrals presented in the next section. 
\begin{figure}
\centering
\begin{subfigure}[t]{0.4\textwidth}
\begin{tikzpicture} \begin{axis}[ xlabel={$x$}, ylabel={$y$}, xmin=-5, xmax=5, ymin=-5, ymax=5, axis on top, width=\figurewidth, height=\figureheight, xmajorgrids, ymajorgrids ] \addplot [black] table {%
-5 -8.16496580927726
-4.89795918367347 -7.94716036730067
-4.79591836734694 -7.73116240380083
-4.69387755102041 -7.51698638636357
-4.59183673469388 -7.30464712324638
-4.48979591836735 -7.09415977661449
-4.38775510204082 -6.88553987649383
-4.28571428571429 -6.67880333549125
-4.18367346938776 -6.47396646433575
-4.08163265306122 -6.27104598829997
-3.97959183673469 -6.07005906456591
-3.87755102040816 -5.87102330060463
-3.77551020408163 -5.67395677364627
-3.6734693877551 -5.47887805132332
-3.57142857142857 -5.28580621357825
-3.46938775510204 -5.09476087593504
-3.36734693877551 -4.90576221424379
-3.26530612244898 -4.71883099101839
-3.16326530612245 -4.53398858349917
-3.06122448979592 -4.3512570135858
-2.95918367346939 -4.17065897980091
-2.85714285714286 -3.99221789146155
-2.75510204081633 -3.81595790525484
-2.6530612244898 -3.64190396443527
-2.55102040816327 -3.47008184088574
-2.44897959183673 -3.30051818031135
-2.3469387755102 -3.1332405508664
-2.24489795918367 -2.9682774955503
-2.14285714285714 -2.80565858874847
-2.04081632653061 -2.64541449734037
-1.93877551020408 -2.48757704684978
-1.83673469387755 -2.33217929317319
-1.73469387755102 -2.17925560049212
-1.63265306122449 -2.02884172605635
-1.53061224489796 -1.88097491261865
-1.42857142857143 -1.7356939894113
-1.3265306122449 -1.59303948268164
-1.22448979591837 -1.4530537369536
-1.12244897959184 -1.31578104835766
-1.02040816326531 -1.18126781157854
-0.918367346938775 -1.04956268221574
-0.816326530612245 -0.9207167566435
-0.714285714285714 -0.794783771805981
-0.612244897959184 -0.671820327801919
-0.510204081632653 -0.551886136618102
-0.408163265306122 -0.435044300983467
-0.306122448979592 -0.321361628061933
-0.204081632653061 -0.210908983617208
-0.102040816326531 -0.103761693411383
0 -0
0 0
0.0689655172413793 0.068168201228676
0.137931034482759 0.134722897075969
0.206896551724138 0.199634747866348
0.275862068965517 0.262872955775982
0.344827586206897 0.32440514870615
0.413793103448276 0.384197251400797
0.482758620689655 0.4422133420096
0.551724137931034 0.498415491985822
0.620689655172414 0.552763586831413
0.689655172413793 0.605215124744396
0.758620689655172 0.655724989665408
0.827586206896552 0.704245194535252
0.896551724137931 0.750724589729521
0.96551724137931 0.795108530585403
1.03448275862069 0.837338496621082
1.10344827586207 0.877351653391487
1.17241379310345 0.915080345820703
1.24137931034483 0.950451509158481
1.31034482758621 0.983385980230529
1.37931034482759 1.01379768711839
1.44827586206897 1.04159268943377
1.51724137931034 1.06666803340216
1.58620689655172 1.08891037526024
1.6551724137931 1.10819431185432
1.72413793103448 1.12438033709903
1.79310344827586 1.13731231452828
1.86206896551724 1.14681431554406
1.93103448275862 1.15268661381773
2 1.15470053837925
}; \addplot [black, dashed] table {%
2 1.15470053837925
2.05263157894737 1.15347936794798
2.10526315789474 1.14972353952158
2.15789473684211 1.14328056083509
2.21052631578947 1.13397607335728
2.26315789473684 1.12160889445706
2.31578947368421 1.10594447628423
2.36842105263158 1.08670610780792
2.42105263157895 1.06356281657479
2.47368421052632 1.0361122994584
2.52631578947368 1.00385610219666
2.57894736842105 0.966162205631642
2.63157894736842 0.922206110592714
2.68421052631579 0.870872892386199
2.73684210526316 0.810582674828232
2.78947368421053 0.738949630783656
2.84210526315789 0.652023664684755
2.89473684210526 0.542233890915719
2.94736842105263 0.390388484187592
3 0
}; \addplot [black, dashed] table {%
-5 8.16496580927726
-4.89795918367347 7.94716036730067
-4.79591836734694 7.73116240380083
-4.69387755102041 7.51698638636357
-4.59183673469388 7.30464712324638
-4.48979591836735 7.09415977661449
-4.38775510204082 6.88553987649383
-4.28571428571429 6.67880333549125
-4.18367346938776 6.47396646433575
-4.08163265306122 6.27104598829997
-3.97959183673469 6.07005906456591
-3.87755102040816 5.87102330060463
-3.77551020408163 5.67395677364627
-3.6734693877551 5.47887805132332
-3.57142857142857 5.28580621357825
-3.46938775510204 5.09476087593504
-3.36734693877551 4.90576221424379
-3.26530612244898 4.71883099101839
-3.16326530612245 4.53398858349917
-3.06122448979592 4.3512570135858
-2.95918367346939 4.17065897980091
-2.85714285714286 3.99221789146155
-2.75510204081633 3.81595790525484
-2.6530612244898 3.64190396443527
-2.55102040816327 3.47008184088574
-2.44897959183673 3.30051818031135
-2.3469387755102 3.1332405508664
-2.24489795918367 2.9682774955503
-2.14285714285714 2.80565858874847
-2.04081632653061 2.64541449734037
-1.93877551020408 2.48757704684978
-1.83673469387755 2.33217929317319
-1.73469387755102 2.17925560049212
-1.63265306122449 2.02884172605635
-1.53061224489796 1.88097491261865
-1.42857142857143 1.7356939894113
-1.3265306122449 1.59303948268164
-1.22448979591837 1.4530537369536
-1.12244897959184 1.31578104835766
-1.02040816326531 1.18126781157854
-0.918367346938775 1.04956268221574
-0.816326530612245 0.9207167566435
-0.714285714285714 0.794783771805981
-0.612244897959184 0.671820327801919
-0.510204081632653 0.551886136618102
-0.408163265306122 0.435044300983467
-0.306122448979592 0.321361628061933
-0.204081632653061 0.210908983617208
-0.102040816326531 0.103761693411383
0 0
0 -0
0.0689655172413793 -0.068168201228676
0.137931034482759 -0.134722897075969
0.206896551724138 -0.199634747866348
0.275862068965517 -0.262872955775982
0.344827586206897 -0.32440514870615
0.413793103448276 -0.384197251400797
0.482758620689655 -0.4422133420096
0.551724137931034 -0.498415491985822
0.620689655172414 -0.552763586831413
0.689655172413793 -0.605215124744396
0.758620689655172 -0.655724989665408
0.827586206896552 -0.704245194535252
0.896551724137931 -0.750724589729521
0.96551724137931 -0.795108530585403
1.03448275862069 -0.837338496621082
1.10344827586207 -0.877351653391487
1.17241379310345 -0.915080345820703
1.24137931034483 -0.950451509158481
1.31034482758621 -0.983385980230529
1.37931034482759 -1.01379768711839
1.44827586206897 -1.04159268943377
1.51724137931034 -1.06666803340216
1.58620689655172 -1.08891037526024
1.6551724137931 -1.10819431185432
1.72413793103448 -1.12438033709903
1.79310344827586 -1.13731231452828
1.86206896551724 -1.14681431554406
1.93103448275862 -1.15268661381773
2 -1.15470053837925
2 -1.15470053837925
2.05263157894737 -1.15347936794798
2.10526315789474 -1.14972353952158
2.15789473684211 -1.14328056083509
2.21052631578947 -1.13397607335728
2.26315789473684 -1.12160889445706
2.31578947368421 -1.10594447628423
2.36842105263158 -1.08670610780792
2.42105263157895 -1.06356281657479
2.47368421052632 -1.0361122994584
2.52631578947368 -1.00385610219666
2.57894736842105 -0.966162205631642
2.63157894736842 -0.922206110592714
2.68421052631579 -0.870872892386199
2.73684210526316 -0.810582674828232
2.78947368421053 -0.738949630783656
2.84210526315789 -0.652023664684755
2.89473684210526 -0.542233890915719
2.94736842105263 -0.390388484187592
3 -0
}; \addplot [black, dotted] table {%
-5 1.15470053837925
5 1.15470053837925
}; \end{axis} \end{tikzpicture}
\subcaption{Plot of the elliptic curve $\frac{y^2}{2} = \frac{x^2}{2} - \frac{x^3}{3!}$, which can be associated to the perturbative expansion of zero-dimensional $\varphi^3$-theory. 
The dominant singularity can be found at $(x,y)=\left(2,\frac{2}{\sqrt{3}}\right)$.
}
\label{fig:curve_phi3}
\end{subfigure}
\quad
\begin{subfigure}[t]{0.4\textwidth}
\begin{tikzpicture} \begin{axis}[ xlabel={$x$}, ylabel={$y$}, xmin=-5, xmax=5, ymin=-5, ymax=5, axis on top, width=\figurewidth, height=\figureheight, xmajorgrids, ymajorgrids ] \addplot [black] table {%
-1.5707963267949 -1
-1.46246554563663 -0.99413795715436
-1.35413476447836 -0.976620555710087
-1.24580398332009 -0.947653171182802
-1.13747320216182 -0.907575419670957
-1.02914242100355 -0.856857176167589
-0.920811639845284 -0.796093065705644
-0.812480858687015 -0.725995491923131
-0.704150077528747 -0.647386284781828
-0.595819296370478 -0.561187065362382
-0.487488515212209 -0.46840844069979
-0.37915773405394 -0.370138155339914
-0.270826952895672 -0.267528338529221
-0.162496171737403 -0.161781996552765
-0.0541653905791344 -0.0541389085854175
0.0541653905791344 0.0541389085854175
0.162496171737403 0.161781996552765
0.270826952895672 0.267528338529221
0.379157734053941 0.370138155339915
0.487488515212209 0.46840844069979
0.595819296370478 0.561187065362382
0.704150077528747 0.647386284781828
0.812480858687016 0.725995491923131
0.920811639845284 0.796093065705644
1.02914242100355 0.856857176167589
1.13747320216182 0.907575419670957
1.24580398332009 0.947653171182803
1.35413476447836 0.976620555710087
1.46246554563663 0.99413795715436
1.5707963267949 1
}; \addplot [black, dashed] table {%
-1.5707963267949 1
-1.46246554563663 0.99413795715436
-1.35413476447836 0.976620555710087
-1.24580398332009 0.947653171182802
-1.13747320216182 0.907575419670957
-1.02914242100355 0.856857176167589
-0.920811639845284 0.796093065705644
-0.812480858687015 0.725995491923131
-0.704150077528747 0.647386284781828
-0.595819296370478 0.561187065362382
-0.487488515212209 0.46840844069979
-0.37915773405394 0.370138155339914
-0.270826952895672 0.267528338529221
-0.162496171737403 0.161781996552765
-0.0541653905791344 0.0541389085854175
0.0541653905791344 -0.0541389085854175
0.162496171737403 -0.161781996552765
0.270826952895672 -0.267528338529221
0.379157734053941 -0.370138155339915
0.487488515212209 -0.46840844069979
0.595819296370478 -0.561187065362382
0.704150077528747 -0.647386284781828
0.812480858687016 -0.725995491923131
0.920811639845284 -0.796093065705644
1.02914242100355 -0.856857176167589
1.13747320216182 -0.907575419670957
1.24580398332009 -0.947653171182803
1.35413476447836 -0.976620555710087
1.46246554563663 -0.99413795715436
1.5707963267949 -1
}; \addplot [black, dashed] table {%
-5 0.958924274663138
-4.93001625156724 0.9764125014873
-4.86003250313449 0.989120479855391
-4.79004875470173 0.996985994982286
-4.72006500626897 0.999970539457967
-4.65008125783621 0.998059501769267
-4.58009750940346 0.991262237833841
-4.5101137609707 0.979612025196152
-4.44013001253794 0.963165900109718
-4.37014626410518 0.942004378303225
-4.30016251567243 0.916231060797541
-4.23017876723967 0.885972126703468
-4.16019501880691 0.851375715483339
-4.09021127037416 0.812611201700743
-4.0202275219414 0.769868365808998
-3.95024377350864 0.723356465037984
-3.88026002507588 0.673303208927992
-3.81027627664313 0.619953644526123
-3.74029252821037 0.563568956702987
-3.67030877977761 0.504425189463042
-3.60032503134486 0.442811894508673
-3.5303412829121 0.379030713674278
-3.46035753447934 0.313393902170381
-3.39037378604658 0.246222799867555
-3.32039003761383 0.177846258104333
-3.25040628918107 0.108599029721034
-3.18042254074831 0.0388201302014438
-3.11043879231556 -0.0311488220542363
-3.0404550438828 -0.100965278199608
-2.97047129545004 -0.170287435967506
-2.90048754701728 -0.238775913040378
-2.83050379858453 -0.306095408573256
-2.76052005015177 -0.371916344734953
-2.69053630171901 -0.435916480230948
-2.62055255328625 -0.497782487908603
-2.5505688048535 -0.557211488721169
-2.48058505642074 -0.61391253454073
-2.41060130798798 -0.667608032560636
-2.34061755955523 -0.718035104313931
-2.27063381112247 -0.764946872654404
-2.20065006268971 -0.808113670399547
-2.13066631425695 -0.847324164718247
-2.0606825658242 -0.8823863917585
-1.99069881739144 -0.913128696449913
-1.92071506895868 -0.939400572879963
-1.85073132052593 -0.961073401129759
-1.78074757209317 -0.978041076961984
-1.71076382366041 -0.9902205312782
-1.64078007522765 -0.997552136802433
-1.5707963267949 -1
}; \addplot [black, dashed] table {%
-5 -0.958924274663138
-4.93001625156724 -0.9764125014873
-4.86003250313449 -0.989120479855391
-4.79004875470173 -0.996985994982286
-4.72006500626897 -0.999970539457967
-4.65008125783621 -0.998059501769267
-4.58009750940346 -0.991262237833841
-4.5101137609707 -0.979612025196152
-4.44013001253794 -0.963165900109718
-4.37014626410518 -0.942004378303225
-4.30016251567243 -0.916231060797541
-4.23017876723967 -0.885972126703468
-4.16019501880691 -0.851375715483339
-4.09021127037416 -0.812611201700743
-4.0202275219414 -0.769868365808998
-3.95024377350864 -0.723356465037984
-3.88026002507588 -0.673303208927992
-3.81027627664313 -0.619953644526123
-3.74029252821037 -0.563568956702987
-3.67030877977761 -0.504425189463042
-3.60032503134486 -0.442811894508673
-3.5303412829121 -0.379030713674278
-3.46035753447934 -0.313393902170381
-3.39037378604658 -0.246222799867555
-3.32039003761383 -0.177846258104333
-3.25040628918107 -0.108599029721034
-3.18042254074831 -0.0388201302014438
-3.11043879231556 0.0311488220542363
-3.0404550438828 0.100965278199608
-2.97047129545004 0.170287435967506
-2.90048754701728 0.238775913040378
-2.83050379858453 0.306095408573256
-2.76052005015177 0.371916344734953
-2.69053630171901 0.435916480230948
-2.62055255328625 0.497782487908603
-2.5505688048535 0.557211488721169
-2.48058505642074 0.61391253454073
-2.41060130798798 0.667608032560636
-2.34061755955523 0.718035104313931
-2.27063381112247 0.764946872654404
-2.20065006268971 0.808113670399547
-2.13066631425695 0.847324164718247
-2.0606825658242 0.8823863917585
-1.99069881739144 0.913128696449913
-1.92071506895868 0.939400572879963
-1.85073132052593 0.961073401129759
-1.78074757209317 0.978041076961984
-1.71076382366041 0.9902205312782
-1.64078007522765 0.997552136802433
-1.5707963267949 1
}; \addplot [black, dashed] table {%
1.5707963267949 1
1.64078007522765 0.997552136802433
1.71076382366041 0.9902205312782
1.78074757209317 0.978041076961984
1.85073132052593 0.961073401129759
1.92071506895868 0.939400572879963
1.99069881739144 0.913128696449913
2.0606825658242 0.8823863917585
2.13066631425695 0.847324164718246
2.20065006268971 0.808113670399547
2.27063381112247 0.764946872654404
2.34061755955523 0.718035104313931
2.41060130798798 0.667608032560636
2.48058505642074 0.613912534540729
2.5505688048535 0.557211488721168
2.62055255328626 0.497782487908603
2.69053630171901 0.435916480230948
2.76052005015177 0.371916344734952
2.83050379858453 0.306095408573256
2.90048754701728 0.238775913040378
2.97047129545004 0.170287435967506
3.0404550438828 0.100965278199608
3.11043879231556 0.0311488220542358
3.18042254074831 -0.0388201302014443
3.25040628918107 -0.108599029721034
3.32039003761383 -0.177846258104333
3.39037378604658 -0.246222799867555
3.46035753447934 -0.313393902170381
3.5303412829121 -0.379030713674278
3.60032503134486 -0.442811894508673
3.67030877977761 -0.504425189463043
3.74029252821037 -0.563568956702987
3.81027627664313 -0.619953644526123
3.88026002507588 -0.673303208927992
3.95024377350864 -0.723356465037984
4.0202275219414 -0.769868365808998
4.09021127037416 -0.812611201700743
4.16019501880691 -0.851375715483339
4.23017876723967 -0.885972126703468
4.30016251567243 -0.916231060797541
4.37014626410519 -0.942004378303225
4.44013001253794 -0.963165900109718
4.5101137609707 -0.979612025196152
4.58009750940346 -0.991262237833841
4.65008125783621 -0.998059501769267
4.72006500626897 -0.999970539457967
4.79004875470173 -0.996985994982286
4.86003250313449 -0.989120479855391
4.93001625156724 -0.9764125014873
5 -0.958924274663138
}; \addplot [black, dashed] table {%
1.5707963267949 -1
1.64078007522765 -0.997552136802433
1.71076382366041 -0.9902205312782
1.78074757209317 -0.978041076961984
1.85073132052593 -0.961073401129759
1.92071506895868 -0.939400572879963
1.99069881739144 -0.913128696449913
2.0606825658242 -0.8823863917585
2.13066631425695 -0.847324164718246
2.20065006268971 -0.808113670399547
2.27063381112247 -0.764946872654404
2.34061755955523 -0.718035104313931
2.41060130798798 -0.667608032560636
2.48058505642074 -0.613912534540729
2.5505688048535 -0.557211488721168
2.62055255328626 -0.497782487908603
2.69053630171901 -0.435916480230948
2.76052005015177 -0.371916344734952
2.83050379858453 -0.306095408573256
2.90048754701728 -0.238775913040378
2.97047129545004 -0.170287435967506
3.0404550438828 -0.100965278199608
3.11043879231556 -0.0311488220542358
3.18042254074831 0.0388201302014443
3.25040628918107 0.108599029721034
3.32039003761383 0.177846258104333
3.39037378604658 0.246222799867555
3.46035753447934 0.313393902170381
3.5303412829121 0.379030713674278
3.60032503134486 0.442811894508673
3.67030877977761 0.504425189463043
3.74029252821037 0.563568956702987
3.81027627664313 0.619953644526123
3.88026002507588 0.673303208927992
3.95024377350864 0.723356465037984
4.0202275219414 0.769868365808998
4.09021127037416 0.812611201700743
4.16019501880691 0.851375715483339
4.23017876723967 0.885972126703468
4.30016251567243 0.916231060797541
4.37014626410519 0.942004378303225
4.44013001253794 0.963165900109718
4.5101137609707 0.979612025196152
4.58009750940346 0.991262237833841
4.65008125783621 0.998059501769267
4.72006500626897 0.999970539457967
4.79004875470173 0.996985994982286
4.86003250313449 0.989120479855391
4.93001625156724 0.9764125014873
5 0.958924274663138
}; \addplot [black, dotted] table {%
-5 1
5 1
}; \addplot [black, dotted] table {%
-5 -1
5 -1
}; \end{axis} \end{tikzpicture}
\subcaption{Plot of the generalized hyperelliptic curve $\frac{y^2}{2} = \frac{\sin^2(x)}{2}$ with dominant singularities at 
$(x,y)=\left(\pm \frac{\pi}{2}, \pm 1\right)$.}
\label{fig:curve_sine}
\end{subfigure}
\caption{Examples of curves associated to formal integrals}
\end{figure}
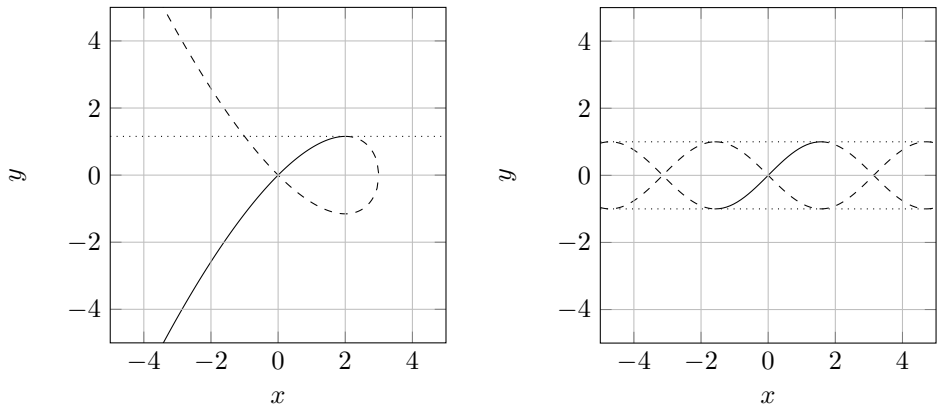
\begin{expl}[$\varphi^3$-theory as the expansion of a complex curve]
\label{expl:phi3elliptic}
For $\varphi^3$-theory the complex curve takes the form
\begin{align*} \frac{y^2}{2} = \frac{x^2}{2} - \frac{x^3}{3!}. \end{align*}
This is the elliptic curve depicted in Figure \ref{fig:curve_phi3}. 
It is clearly visible that solving for $x$ will result in a multivalued function. 
With $x(y)$, we mean the power series expansion at the origin associated to the locally increasing branch. This branch is depicted as solid line.
Moreover, we see that this expansion will have a finite radius of convergence, which is dictated by the location of the branch-cut singularity which is attained at $y=\frac{2}{\sqrt{3}}$.
\end{expl}
\begin{expl}[Sine-Gordon model as expansion of a complex curve]
\label{expl:sinegordoncurve}
Consider again the action $\Sact(x)=-\frac{\sin^2(x)}{2}$ discussed in Example \ref{expl:sinegordon}. The complex curve takes the form,
\begin{align*} \frac{y^2}{2} = \frac{\sin^2(x)}{2}. \end{align*}
This curve is depicted in Figure \ref{fig:curve_sine}.
We may solve for $x(y) = \arcsin(y)$, which is the local solution around $y=0$, which is positive for $y\rightarrow 0^+$. This local solution is drawn as black line in Figure \ref{fig:curve_sine}. Obviously, $x(y)$ has singularities for $y= \pm 1$. 
From Proposition \ref{prop:formalchangeofvar} it follows that,
\begin{align*} \mathcal{F}\left[-\frac{\sin^2(x)}{2}\right](\hbar) &= \sum_{n=0}^\infty \hbar^{n} (2n+1)!! [y^{2n+1}] \arcsin(y) \\ &= \sum_{n=0}^\infty \hbar^{n} (2n-1)!! [y^{2n}] \frac{1}{\sqrt{1-y^2}}. \end{align*}
The last equality follows because $\arcsin'(y)=\frac{1}{\sqrt{1-y^2}}$. We will use this result in Section \ref{sec:QED} to express the partition function of zero-dimensional QED using $\Fop\left[-\frac{\sin^2(x)}{2}\right]$.
\end{expl}

The representation of the coefficients of $\mathcal{F}[\Sact(x)](\hbar)$ as expansion of a generalized hyperelliptic curve can be used to calculate them efficiently. The expansion of $x(y)$ must fulfill the differential equation
\begin{align*} \frac{\partial x}{\partial y} = -\frac{y}{S'(x(y))}. \end{align*}
Using the initial condition $x(0)=0$ and $\frac{\partial x}{\partial y} > 0$, while expanding this as a power series results in the correct branch. 
\begin{expl}
For the coefficients of $Z^\text{all}(\hbar)$, where $\Sact(x) = -x^2-x-1+e^x$, we obtain the differential equation for $x(y)$:
\begin{align*} \frac{\partial x}{\partial y} = \frac{y}{1+2x-e^x} \end{align*}
The coefficients of $x(y)$ can be calculated by basic iterative methods. These coefficients can be translated into coefficients of $\Fop[-x^2-x-1+e^x](\hbar)$ using Proposition \ref{prop:formalchangeofvar}.
\end{expl}

\subsection{Asymptotics from singularity analysis}
One approach to calculate asymptotics of expressions such as the integral \eqref{eqn:formalfuncintegral1} is to perform the coefficient extraction with a Cauchy integral and to approximate the result using the method of steepest decent or saddle point method:
\begin{gather*} [\hbar^n] \int_\R \frac{dx}{\sqrt{2 \pi \hbar } } e^{\frac{1}{\hbar} \left( -\frac{x^2}{2a} + V(x) \right) } = \\ \oint_{|\hbar|=\epsilon} \frac{d\hbar}{\hbar^{n-1}} \int_\R \frac{dx}{\sqrt{2 \pi \hbar } } e^{\frac{1}{\hbar} \left( -\frac{x^2}{2a} + V(x) \right) } = \\ \oint_{|\hbar|=\epsilon} d\hbar \int_\R \frac{dx}{\sqrt{2 \pi } } e^{\frac{1}{\hbar} \left( -\frac{x^2}{2a} + V(x) \right) - (n-\frac32) \log \hbar } \end{gather*}
See for instance \cite{cvitanovic1978number}, where this technique was applied to $\varphi^3$-theory. This method was also applied to higher dimensional path integrals to obtain the asymptotics for realistic QFTs \cite{lipatov1977divergence}. The saddle points are solutions to the classical equations of motion and are called instantons in the realm of QFT.

The approach requires us to manipulate the integrand and to pick the right contour for the integration. The main disadvantage is that this procedure will result in a complicated asymptotic expansion. 

There exists a powerful method called hyperasymptotics \cite{berry1991hyperasymptotics} to obtain large order asymptotics of integrals such as \eqref{eqn:formalfuncintegral1}. This procedure is very general, as it also provides exponentially suppressed contributions as a systematic expansion. The expansion of a specific exponential order results in an expressions involving Dingle's terminants \cite{dingle1973asymptotic}. Unfortunately, these expressions can be quite complicated \cite{berry1991hyperasymptotics}.

We will take a slightly different approach, as we aim to obtain a complete asymptotic expansion in $n$: We will compute the large $n$ asymptotics of the coefficients $a_n$ of $\Fop[\Sact(x)](\hbar) = \sum_{n=0}a_n \hbar^n$ using \textit{singularity analysis} of the function $x(y)$. Singularity analysis has proven itself to be a powerful tool for asymptotics extraction even for implicitly defined power series such as $x(y)$ \cite{flajolet2009analytic}. As $x(y)$ can be interpreted as a variant of the Borel transform of $\Fop[\Sact(x)](\hbar)$, this approach is in the spirit of resurgence \cite{ecalle1981fonctions}, where singularities of the Borel transform are associated to the factorial divergence of expansions.

We will briefly repeat the necessary steps to compute the asymptotics of the implicitly defined power series $x(y)$. For a detailed account on singularity analysis, we refer to \cite[Ch. VI]{flajolet2009analytic}.

By Darboux's method, the asymptotics of the power series $x(y)$ are determined by the behavior of the function $x(y)$ near its \textit{dominant singularities}.
The dominant singularities of a function are the singularities which lie on the boundary of the disc of convergence of its expansion near the origin. 

Finding the actual location of the dominant singularity can be quite complicated. 
In our case we generally would need to calculate the \textit{monodromy} of the complex curve $\frac{y^2}{2}=-\Sact(x)$. 
However, in many examples the location of the dominant singularities is more or less obvious.

We will assume that the locations of the dominant singularities of $x(y)$ are known and that these singularities are of simple \textit{square root} type. Let $(\tau_i, \rho_i)$ be the coordinates of such a singularity. That means that $x(y)$ is non-analytic for $y \rightarrow \rho_i$ and that $\lim_{y\rightarrow \rho_i} x(y)= \tau_i$. The requirement that the singularity is of \textit{square root} type is equivalent to the requirement that the curve $\frac{y^2}{2}=-\Sact(x)$ is regular at $(x,y) = (\tau_i, \rho_i)$. 
\begin{expl}
\label{expl:phi3elliptic_singularity}
Consider the graph of the elliptic curve depicted in Figure \ref{fig:curve_phi3} from $\varphi^3$-theory. It is clear that $x(y)$ has a singularity at a fixed value of $y = \frac{2}{\sqrt{3}}$ indicated by the dotted line. This is in fact the unique dominant singularity in this example. 
The exponential growth of the coefficients of $x(y) = \sum_{n=0}^\infty c_n y^n$ is governed by the radius of convergence, $c_n \sim r^{-n} = \left(\frac{2}{\sqrt{3}} \right)^{-n}$. 
More precise asymptotics of the coefficients are determined by the singular expansion around the dominant singularity. 
In Figure \ref{fig:curve_phi3}, the point $(x_0,y_0) = (2, \frac{2}{\sqrt{3}})$ is the dominant singularity of $x(y)$ as well as a \textit{critical point} or \textit{saddle point} of the function $y(x)$ as expected by the implicit function theorem. This saddle point coincides with a saddle point of $\Sact(x)$. Although $x(y)$ has a singularity at this point, the curve stays regular.
\end{expl}
Having found the dominant singularity, it is surprisingly easy to obtain a complete asymptotic expansion for the large order behavior of the coefficients of $\Fop[\Sact(x)](\hbar)$. 
\begin{thm}
\label{thm:comb_int_asymp}
If $\Sact(x) \in x^2\R[[x]]$, such that the local solution $x(y)$ around the origin of $\frac{y^2}{2} = -\Sact(x)$ has only square-root type dominant singularities at the regular points $(\tau_i, \rho_i)$ with $i \in I$, then the Poincaré asymptotic expansion of the coefficients of $\Fop[\Sact(x)](\hbar)$ is given by
\begin{align} \label{eqn:asymptotic_expantion_combint} [\hbar^n] \Fop[\Sact(x)](\hbar) &= \sum_{k=0}^{R-1} \sum_{i\in I} w_{i,k} A_i^{-(n-k)} \Gamma(n-k) + \bigO\left(\sum_{i\in I} A_i^{-n} \Gamma(n-R)\right), \end{align}
for all $R \geq 0$, where $A_i = - S(\tau_i)$, the $\bigO$-notation refers to the $n\rightarrow \infty$ limit and
\begin{align} \label{eqn:asympgeneral} w_{i,n} &= \frac{1}{2\pi \mathrm{i}}[\hbar^n] \Fop[\Sact(x+\tau_i) - \Sact(\tau_i)](\hbar). \end{align}
\end{thm}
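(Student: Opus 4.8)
The plan is to reduce the statement to the classical singularity analysis of the implicitly defined power series $x(y)$ and then to recognise the resulting expansion as being generated by the shifted formal integral.

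By Proposition~\ref{prop:formalchangeofvar} one has $[\hbar^n]\Fop[\Sact(x)](\hbar)=(2n+1)!!\,[y^{2n+1}]x(y)$, so it suffices to control the Taylor coefficients $c_m:=[y^m]x(y)$ for $m=2n+1\to\infty$. First I would check that $x(y)$ falls under the transfer theorems of singularity analysis \cite[Ch.~VI]{flajolet2009analytic}: since $\tfrac{y^2}{2}=-\Sact(x)$ determines $x$ implicitly, the only finite-distance obstructions to analytically continuing $x(y)$ past its circle of convergence are the branch points $\rho_i$, so $x(y)$ is $\Delta$-analytic. Hence $c_m$ equals, up to the claimed $\bigO$-error, the sum over $i\in I$ of the coefficient contributions of the \emph{singular parts} of the local expansions of $x(y)$ at the $\rho_i$.

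Next I would compute those local expansions. Put $\xi=x-\tau_i$ and $\tilde\Sact_i(\xi):=\Sact(\tau_i+\xi)-\Sact(\tau_i)$; the curve equation becomes $\tilde\Sact_i(\xi)=A_i-\tfrac{y^2}{2}$ with $A_i=-\Sact(\tau_i)=\rho_i^2/2$. The hypotheses force $\Sact'(\tau_i)=0$ (otherwise $\tfrac{dx}{dy}=-y/\Sact'(x)$ would stay finite at $\rho_i$) and $\Sact''(\tau_i)\neq 0$ (this is exactly regularity of the curve, equivalently that the singularity is of square-root type), so $\tilde\Sact_i(\xi)=\tfrac12\Sact''(\tau_i)\xi^2+\cdots$ is an analytic germ invertible up to a square root. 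Inverting it and substituting $s=A_i-\tfrac{y^2}{2}$, which is analytic in $y$ and vanishes simply at $y=\rho_i$, yields a Puiseux expansion $x(y)=\tau_i+\sum_{k\ge 1}e_{i,k}(1-y/\rho_i)^{k/2}$, where one keeps the branch continuously connected to $x(y)$ on $(0,\rho_i)$; its singular part is the sum over odd $k$. Applying the transfer theorem term by term, $[y^m](1-y/\rho_i)^{k/2}=\rho_i^{-m}\binom{k/2}{m}(-1)^m$ has the complete expansion $\rho_i^{-m}m^{-k/2-1}/\Gamma(-k/2)\cdot(1+\cdots)$; combining with $(2n+1)!!=\tfrac{2^{n+1}}{\sqrt\pi}\,\Gamma(n+\tfrac32)$, putting $m=2n+1$, using $\rho_i^2=2A_i$, and regrouping the products $\Gamma(n+\tfrac32)(2n+1)^{-k/2-1}$ onto the scale $\{A_i^{-(n-k)}\Gamma(n-k)\}_{k\ge 0}$ by repeated use of $\Gamma(z+1)=z\Gamma(z)$ produces \eqref{eqn:asymptotic_expantion_combint} with coefficients $w_{i,k}$ that are explicit finite combinations of $e_{i,1},\dots,e_{i,2k+1}$ and of the reflection constants $1/\Gamma(-j-\tfrac12)=(-1)^{j+1}\Gamma(j+\tfrac32)/\pi$.

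The remaining, and hardest, step is to identify these $w_{i,k}$ with $\tfrac{1}{2\pi\mathrm i}[\hbar^k]\Fop[\tilde\Sact_i](\hbar)$. Here I would observe that the curve $\tfrac{\tilde y^2}{2}=-\tilde\Sact_i(\xi)$ attached by Proposition~\ref{prop:formalchangeofvar} to the shifted action is, after the algebraic substitution $\tilde y^2=y^2-\rho_i^2$, exactly the original curve re-centred at $(\tau_i,0)$, so that the branch $\tilde x(\tilde y)$ is obtained from the singular part of $x(y)$ at $\rho_i$ by this substitution, which sends $y=\rho_i$ to $\tilde y=0$. Feeding this back through the coefficient formula of Proposition~\ref{prop:formalchangeofvar} for $\tilde\Sact_i$ — extended to actions with a positive leading quadratic coefficient by analytic continuation in the Gaussian parameter $a=-1/\Sact''(\tau_i)$, whose square root contributes the factor $\mathrm i$ that $\tfrac{1}{2\pi\mathrm i}$ cancels — and using the reflection formula above to turn the singularity-analysis constants into $\Gamma$-values then matches the two expansions term by term. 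I expect the bookkeeping here to be the real obstacle: the branch choices, the orientation of the rotated (Lefschetz-thimble) contour for $\Fop[\tilde\Sact_i]$, and the $\sqrt{2\pi}$-versus-$2\pi$ normalisations all have to be tracked consistently so that precisely $\tfrac{1}{2\pi\mathrm i}$, with the correct sign, survives. Finally, summing the contributions over $i\in I$ is legitimate by linearity of singularity analysis, and the uniform $\bigO$-remainder is the standard error term of the transfer theorem applied to the truncated Puiseux expansions.
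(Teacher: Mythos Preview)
Your overall strategy coincides with the paper's: reduce to coefficients of $x(y)$ via Proposition~\ref{prop:formalchangeofvar}, extract the Puiseux expansion at each $(\tau_i,\rho_i)$, and apply singularity analysis. The divergence is in how the resulting asymptotic coefficients are massaged into the form $\tfrac{1}{2\pi\mathrm i}[\hbar^k]\Fop[\tilde\Sact_i]$.

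The paper does this by pure computation. It writes the Puiseux coefficients $d_{i,k}$ via Lagrange inversion in terms of $\phi_i(v)=-v\big/\sqrt{1-\sqrt{\Sact(\tau_i-v)/\Sact(\tau_i)}}$, then confronts the product $(2n-1)!!\binom{k-\tfrac12}{2n}$ directly. The key technical device, where your ``repeated use of $\Gamma(z+1)=z\Gamma(z)$'' is too vague, is a lemma of Paris \cite[Lemma~1]{paris1992smoothing} giving the full asymptotic expansion of $\Gamma(n+a)\Gamma(n+b)/n!$ on the scale $\Gamma(n+a+b-1-m)$; this is what reorganises the half-integer shifts onto integer shifts of $\Gamma(n-k)$. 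After that, a binomial-series resummation collapses $\phi_i$ to $-v/\sqrt{2(\Sact(\tau_i+v)-\Sact(\tau_i))}$, and one more Lagrange inversion plus Proposition~\ref{prop:formalchangeofvar} yields \eqref{eqn:asympgeneral}.

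Your proposed identification via the substitution $\tilde y^2=y^2-\rho_i^2$ is a genuinely different and more geometric route. It is correct that this rewrites the shifted curve as the original one re-centred, but note that $\tilde y=\sqrt{(y-\rho_i)(y+\rho_i)}$ differs from the Puiseux uniformiser $\sqrt{1-y/\rho_i}$ by the analytic factor $\sqrt{\rho_i(y+\rho_i)}$; unpacking how that factor propagates through the coefficient extraction is exactly the resummation the paper does algebraically. Your approach could be made rigorous, but as written it defers precisely the step where the paper's use of Paris's lemma and the binomial identity carry the actual weight.
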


The fact that the asymptotics of expansions such as $\Fop[\Sact(x)](\hbar)$ are governed by the expansion around the saddle points of $\Sact(x)$ is well-known \cite{flajolet2009analytic}. The exact shape of the asymptotic expansion can be seen as slightly distorted or `resurged' version of the original expansion was also observed for instance in \cite{basar2013resurgence}, but we were not able to find a proof in the literature. Therefore a proof is provided. It is an application of the Legendre inversion formula together with a Lemma from the theory of hypergeometric functions. It is technical and postponed to \ref{sec:singanalysis}.

As was illustrated in Example \ref{expl:phi3elliptic_singularity}, a square-root type singularity of $x(y)$ coincides with a saddle point of $\Sact(x)$. This way Theorem \ref{thm:comb_int_asymp} works in a very similar way to the saddle point method.

To actually find the location of the dominant singularity in non-trivial cases, powerful techniques of singularity analysis of implicitly defined functions can be applied. For instance, if $\Sact(x)$ is a polynomial, a systematic treatment given in \cite[Chap. VII]{flajolet2009analytic} can be used. With minor modifications this can also be applied to an entire function $\Sact(x)$ for the non-degenerate case \cite{banderier2015formulae}.

Note that the quadratic coefficient of $\Sact(x+\tau_i) - \Sact(\tau_i)$ in the argument for $\Fop$ in eq. \eqref{eqn:asympgeneral} is not necessarily negative. The regularity of the complex curve only guarantees that it is non-zero. We need to generalize Definition \ref{def:formalintegral} to also allow positive quadratic coefficients. 
With this generalization the choice of the branch for the square-root in eq.\ \eqref{eqn:formalintegralpwrsrs} becomes ambiguous and we have to determine the correct branch by analytic continuation. In the scope of this article, we will only need a special case of Theorem \ref{thm:comb_int_asymp}, which remedies this ambiguity:
\begin{crll}
\label{crll:comb_int_asymp}
If $\Sact(x) = -\frac{x^2}{2} + \cdots$ is the power series expansion of an entire real function, which has simple critical points only on the real line, then there are not more than two dominant singularities associated with local minima of $\Sact(x)$ at $x= \tau_i$. The minima must have the same ordinate $\Sact(\tau_i) = -A$ to qualify as dominant singularities. 
The coefficients of the asymptotic expansion are given by
\begin{align} w_{i,n} &= \frac{1}{2\pi}[\hbar^n] \Fop[\Sact(\tau_i)- \Sact(x+\tau_i)](-\hbar), \end{align}
where the argument of $\Fop$ has a strictly negative quadratic coefficient. 
\end{crll}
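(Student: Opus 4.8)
The plan is to deduce the corollary from Theorem~\ref{thm:comb_int_asymp}. Under its hypotheses two things have to be done: locate the dominant singularities of the local solution $x(y)$ of $\tfrac{y^2}{2}=-\Sact(x)$ and confirm they are square-root singularities sitting over real local minima of $\Sact$ (so that Theorem~\ref{thm:comb_int_asymp} applies at all), and then rewrite the coefficients $w_{i,n}$ of \eqref{eqn:asympgeneral} — whose literal form \eqref{eqn:asympgeneral} feeds $\Fop$ a series with a \emph{positive} quadratic coefficient — in the manifestly real shape claimed. The second task is the substantive new content.

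\emph{Locating the singularities.} Here I would trace the real branch. Since $\Sact(x)=-\tfrac{x^2}{2}+\cdots$ is real-analytic with $\Sact(0)=0$, the origin is a local maximum and, from $\tfrac{\partial x}{\partial y}=-y/\Sact'(x(y))$ with $x(0)=0$, the power series $x(y)=y+\cdots$ is real and strictly increasing on $[0,\rho^{+})$ until $x(y)$ reaches the first critical point $\tau^{+}>0$ of $\Sact$; as $\Sact$ was decreasing, $\tau^{+}$ is a local minimum, so $\Sact(\tau^{+})<0$, $\Sact''(\tau^{+})>0$ and $\rho^{+}=\sqrt{-2\Sact(\tau^{+})}$. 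At $(\tau^{+},\rho^{+})$ the curve $\tfrac{y^2}{2}+\Sact(x)=0$ is regular (its $y$-derivative is $\rho^{+}\neq0$) and, because $\Sact''(\tau^{+})\neq0$, the inverse $x(y)$ has a square-root singularity there — precisely the situation of Theorem~\ref{thm:comb_int_asymp}, with $A=-\Sact(\tau^{+})>0$. The same argument on the negative axis produces a singularity at $y=-\rho^{-}$ over the first local minimum $\tau^{-}<0$, and real-analyticity of $x(y)$ on $(-\rho^{-},\rho^{+})$ forces the radius of convergence to be $\min(\rho^{+},\rho^{-})$. Hence there are at most two dominant singularities, and both are dominant only when $\rho^{+}=\rho^{-}$, i.e.\ $\Sact(\tau^{+})=\Sact(\tau^{-})=:-A$. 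What remains is to exclude complex singularities inside $\{|y|<\min(\rho^{+},\rho^{-})\}$; since $x(y)$ is the inverse of $\psi(x)=\sqrt{-2\Sact(x)}$, which is locally univalent off the zeros of $\Sact'$ — all of them real by hypothesis — any loop in that disc lifts under $\psi$ to a loop avoiding the critical points of $\Sact$, so $x(y)$ continues single-valuedly. Making this lifting argument precise, i.e.\ showing the hypothesis ``critical points only on the real line'' really keeps the disc free of branch points, is the step I expect to require the most care.

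\emph{Rewriting $w_{i,n}$.} Put $g(x):=\Sact(x+\tau_i)-\Sact(\tau_i)$, so that \eqref{eqn:asympgeneral} reads $w_{i,n}=\tfrac{1}{2\pi\mathrm{i}}[\hbar^n]\Fop[g](\hbar)$. Because $\tau_i$ is a simple local minimum, $g(x)=\tfrac12\Sact''(\tau_i)x^2+\cdots$ has a positive quadratic coefficient and $\Fop[g]$ exists only by continuation; I would therefore pass to $h(x):=-g(x)=\Sact(\tau_i)-\Sact(x+\tau_i)$, which has a strictly negative quadratic coefficient, so that $\Fop[h]$ is literally defined by Definition~\ref{def:formalintegral}. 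By Proposition~\ref{prop:formalchangeofvar}, $\Fop[g](\hbar)=\sum_n\hbar^n(2n+1)!!\,[y^{2n+1}]x_g(y)$ and similarly for $h$, where $x_g$, $x_h$ are the local solutions of $y^2=-2g$ and $y^2=2g$. One checks $x_g(y)=x_h(\mathrm{i}y)$: matching the linear coefficients forces the branch $\sqrt{-\Sact''(\tau_i)}=-\mathrm{i}\sqrt{\Sact''(\tau_i)}$, which is exactly the one selected by analytic continuation from the negative-coefficient side. Since $[y^{2n+1}]x_h(\mathrm{i}y)=\mathrm{i}(-1)^n[y^{2n+1}]x_h(y)$, this gives $\Fop[g](\hbar)=\mathrm{i}\,\Fop[h](-\hbar)$, whence
\[
w_{i,n}=\frac{1}{2\pi\mathrm{i}}[\hbar^n]\Fop[g](\hbar)=\frac{1}{2\pi}[\hbar^n]\Fop\bigl[\Sact(\tau_i)-\Sact(x+\tau_i)\bigr](-\hbar),
\]
as claimed, the argument of $\Fop$ now having a strictly negative quadratic coefficient. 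Finally I would pin down the sign of $\mathrm{i}$ (equivalently the branch) by comparing leading orders: the displayed formula gives $w_{i,0}=\tfrac{1}{2\pi}\,\Sact''(\tau_i)^{-1/2}>0$, which matches both the steepest-descent evaluation through the real minimum $\tau_i$ and the Darboux analysis behind Theorem~\ref{thm:comb_int_asymp}, whereas the opposite branch would give the wrong sign.
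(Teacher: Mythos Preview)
Your proposal is correct and follows essentially the same route as the paper's own proof: reduce to Theorem~\ref{thm:comb_int_asymp}, argue via the real structure that the only candidate dominant singularities of $x(y)$ sit over the nearest real local minima of $\Sact$, and then show that the imaginary unit produced by feeding $\Fop$ an action with a positive quadratic term is absorbed by the simultaneous sign flips $\Sact\to -\Sact$ and $\hbar\to -\hbar$. The paper's proof is terse on both points---it invokes a star-shaped continuation domain and Rolle's theorem for the first, and simply asserts ``a prefactor of $\sqrt{-1}$ \ldots\ flipping the sign in the argument and in the expansion parameter absorbs the imaginary unit'' for the second---whereas you spell out the branch-tracking via $x_g(y)=x_h(\mathrm{i}y)$ and Proposition~\ref{prop:formalchangeofvar}, which is a cleaner and more explicit way to justify the identity $\Fop[g](\hbar)=\mathrm{i}\,\Fop[-g](-\hbar)$ than appealing directly to the $\sqrt{a}$ prefactor in Definition~\ref{def:formalintegral}. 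Your caveat that the exclusion of complex singularities is the delicate step is well placed; the paper does not prove this in detail either.
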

\begin{proof}
If $S(x)$ is a real entire function, whose derivative vanishes on the real line, we can analytically continue $x(y)$ to a star-shaped domain excluding at most two rays on the real line. On the real line $x(y)$ can have singularities. By definition $x=0$ is a local maximum of $S(x)$. It follows from Rolle's theorem that the next critical point encountered on the real line must be a local minimum. These minima are the only candidates for dominant singularities of $x(y)$. Using Theorem \ref{thm:comb_int_asymp}, we obtain $\frac{1}{2\pi i}[\hbar^n] \Fop[\Sact(x+\tau_i) - \Sact(\tau_i)](\hbar)$ as expansions around the minima. The power series $\Sact(x+\tau_i) - \Sact(\tau_i)$ starts with a positive quadratic term, resulting in a prefactor of $\sqrt{-1}$. Taking the square root in the upper half plane results in the correct expansion. Flipping the sign in the argument and in the expansion parameter absorbs the imaginary unit in eq.\ \eqref{eqn:asympgeneral}.
\end{proof}

\begin{expl}
\label{expl:phi3theoryasymptotics}
Let $\Sact(x) = -\frac{x^2}{2} + \frac{x^3}{3!}$ as in Examples \ref{expl:phi3theoryexpansion}, \ref{expl:phi3elliptic} and \ref{expl:phi3elliptic_singularity}. The location of the dominant singularity at $x= \tau = 2$ can be obtained by solving $\Sact'(\tau)=0$ (see Figure \ref{fig:curve_phi3}). There is only one non-trivial solution. Therefore, this is the only dominant singularity of $x(y)$. We have $A= -\Sact(2) = \frac23$. It follows that,
\begin{gather*} [\hbar^n] \Fop\left[-\frac{x^2}{2} + \frac{x^3}{3!}\right](\hbar) = \\ \sum_{k=0}^{R-1} w_{k} \left(\frac{2}{3} \right)^{-(n-k)} \Gamma(n-k) + \bigO\left(\left(\frac{2}{3} \right)^{-n} \Gamma(n-R)\right) \qquad \forall R\in \N_0, \end{gather*}
where
\begin{align*} w_k &= \frac{1}{2\pi}[\hbar^k] \Fop[\Sact(2) - \Sact(x+2)](-\hbar)      = \frac{1}{2\pi}[\hbar^k] \Fop\left[-\frac{x^2}{2}-\frac{x^3}{3!}\right](-\hbar). \end{align*}
Because generally $\Fop[\Sact(x)](\hbar) = \Fop[\Sact(-x)](\hbar)$, the large $n$ asymptotics of the power series
\begin{align*} \Fop\left[ -\frac{x^2}{2} + \frac{x^3}{3!} \right](\hbar) &= \sum_{n=0}z_n \hbar^n = 1 + \frac{5}{24} \hbar + \frac{385}{1152} \hbar^2 + \frac{85085}{82944} \hbar^3 \\ &+ \frac{37182145}{7962624} \hbar^4 + \frac{5391411025}{191102976} \hbar^5 + \ldots \end{align*}
are given by the same sequence with negative expansion parameter:
\begin{align*} z_n &\underset{n\rightarrow \infty}{\sim} \frac{1}{2\pi} \left( \left(\frac{2}{3} \right)^{-n} \Gamma(n) - \frac{5}{24} \left(\frac{2}{3} \right)^{-n+1} \Gamma(n-1) + \frac{385}{1152} \left(\frac{2}{3} \right)^{-n+2} \Gamma(n-2) \right. \\ & \left. - \frac{85085}{82944} \left(\frac{2}{3} \right)^{-n+3} \Gamma(n-3) + \frac{37182145}{7962624} \left(\frac{2}{3} \right)^{-n+4} \Gamma(n-4) + \ldots \right) \end{align*}
This is an occurrence of the quite general self-replicating or resurgent phenomenon of the asymptotics of power series \cite{ecalle1981fonctions}.
\end{expl}

Restricting the dominant singularities in Corollary \ref{crll:comb_int_asymp} to be regular points of the complex curve is necessary. Otherwise, it cannot be guaranteed that a critical point actually coincides with a dominant singularity of $x(y)$. We will illustrate this in
\begin{expl}
\label{expl:counterexpl}
Let $\Sact(x) = -\frac{(1-e^{-x})^2}{2}$. This action has saddle points at $\tau_k = 2\pi i k$ for all $k\in\Z$. Because $\Sact(\tau_k)=0$ using Corollary \ref{crll:comb_int_asymp} naively would imply that, $[\hbar^n] \Fop\left[ -\frac{(1-e^{-x})^2}{2} \right] \sim \left(\frac{1}{0}\right)^n$, which is clearly nonsensical. 
On the other hand, we can solve $\frac{y^2}{2} = \frac{(1-e^{-x})^2}{2}$ for $x(y)= \log\frac{1}{1-y}$. Using Proposition \ref{prop:formalchangeofvar} immediately results in
\begin{align*} \Fop\left[ -\frac{(1-e^{-x})^2}{2} \right] &= \sum_{n=0}^\infty \hbar^n (2n+1)!! [y^{2n+1}] \log\frac{1}{1-y} \\ &= \sum_{n=0}^\infty \hbar^n (2n-1)!! [y^{2n}] \frac{1}{1-y} = \sum_{n=0}^\infty \hbar^n (2n-1)!!, \end{align*}
which naturally has a sound asymptotic description. The dominant singularity of $x(y)$ is obviously at $y=1$. An association of the asymptotics with saddle points of $\frac{(1-e^{-x})^2}{2}$ is not possible in this case, due to the irregularity of the complex curve at the saddle points.
\end{expl}
\section{The ring of factorially divergent power series}
\label{sec:ring}
For the applications to diagram counting and zero-dimensional QFT, the algebraic properties of power series with asymptotics as in eq.\ \eqref{eqn:asymptotic_expantion_combint} will be exploited. They also provide a compact notation for lengthy asymptotic expressions.
These properties can be derived from the broader theory of \textit{resurgence} established by Jean Ecalle \cite{ecalle1981fonctions}. Resurgence has been successfully applied to various physical problems \cite{garoufalidis2012asymptotics} and provides a promising approach to tackle non-perturbative phenomena in QFT \cite{alvarez2004langer,dunne2012resurgence}. A recent review is given by David Sauzin in \cite{mitschi2016divergent}. 
For the present considerations only a toy version of the complete resurgence machinery will be required. We will use the notation of \cite{borinsky2016generating}, where all the necessary tools for the present application were established. We will introduce this mathematical toolbox briefly:
\begin{defn}
Define $\fring{x}{A}{\beta}$ with $A \in \R_{>0}$ to be the subset of the ring of power series $f \in \R[[x]]$, whose coefficients have a Poincaré asymptotic expansion of the form,
\begin{align} \label{eqn:basic_asymp} f_n = \sum_{k=0}^{R-1} c_{k} A^{-n-\beta+k} \Gamma(n+\beta-k) + \bigO(A^{-n} \Gamma(n+\beta-R)), \end{align}
with coefficients $c_{k} \in \R$ and $\beta \in \R$. This subset forms a subring of $\R[[x]]$ as was shown in \cite{borinsky2016generating}. 
\end{defn}

A linear operator can be defined on $\fring{x}{A}{\beta}$ which maps a power series to its asymptotic expansion. This operator is called \textit{alien derivative} in the context of resurgence' alien calculus. 
\begin{defn}
\label{def:asymp}
Let $\asyOpV{A}{}{x}: \fring{x}{A}{\beta} \rightarrow x^{-\beta}\R[[x]]$ be the operator which maps a power series $f(x) = \sum_{n=0}^\infty f_n x^n$ to the generalized power series $\left(\asyOpV{A}{}{x} f\right)(x) = x^{-\beta}\sum_{k=0}^\infty c_k x^{k}$ such that, 
\begin{align} f_n &= \sum_{k=0}^{R-1} c_{k} A^{-n-\beta+k} \Gamma(n+\beta-k) + \bigO(A^{-n} \Gamma(n+\beta-R)) &&\forall R\geq 0. \end{align}
\end{defn}
To adapt to the context of path integrals and to the notation of resurgence two changes were made with respect to the notation of \cite{borinsky2016generating}: The exponent $A$ is given in reciprocal form as is common in resurgence. Moreover, the monomial $x^{-\beta}$ is included into the definition of the $\asyOp$-operator, which maps to power series with a fixed monomial prefactor or equivalently generalized Laurent series. This change simplifies the notation of the chain rule for compositions of power series heavily. 

\begin{expl}
Let $f(x) = \sum_{n=m+1}^\infty \Gamma(n+m) x^n$. It follows that $f \in \fring{x}{1}{m}$ and $\left(\asyOpV{1}{}{x} f\right)(x) = \frac{1}{x^m}$.
\end{expl}
\begin{expl}
For certain QED-type theories, we will need sequences which do not behave as an integer shift of the $\Gamma$-function. If for instance, 
$f(x) = \sum_{n=0}^\infty (2n-1)!! x^n = \frac{1}{\sqrt{2 \pi}} \sum_{n=0}^\infty 2^{n+\frac12} \Gamma\left(n+\frac12 \right) x^n$, then  
$\left(\asyOpV{\frac12}{}{x} f\right)(x) = \frac{1}{\sqrt{2 \pi \hbar}}$ in accord to Definition \ref{def:asymp}.
\end{expl}
For more concrete examples of power series which are elements of $\fring{x}{A}{\beta}$ that count combinatorial objects, we refer to \cite{borinsky2016generating}.

The $\asyOp$-operator is a derivative, which obeys the following identities for $f,g \in \fring{x}{A}{\beta}$. Cited from \cite{borinsky2016generating}:
\begin{align*} &\asyOpV{A}{\beta}{x} (f(x) + g(x))& &=& &\asyOpV{A}{\beta}{x} f(x) + \asyOpV{A}{\beta}{x} g(x) && \textit{Linearity} \\ &\asyOpV{A}{\beta}{x} (f(x) g(x)) & &=& &g(x) \asyOpV{A}{\beta}{x} f(x) + f(x) \asyOpV{A}{\beta}{x} g(x) && \textit{Leibniz rule} \\ &\asyOpV{A}{\beta}{x} f(g(x)) & &=& &f'(g(x)) \asyOpV{A}{\beta}{x} g(x) + e^{A \left( \frac{1}{x} - \frac{1}{\xi}\right)} (\asyOpV{A}{\beta}{\xi} f(\xi)) \big|_{\xi=g(x)} && \textit{Chain rule} \\ &\asyOpV{A}{\beta}{x} g^{-1}(x) & &=& -& {g^{-1}}'(x) e^{A \left( \frac{1}{x} - \frac{1}{\xi}\right)} (\asyOpV{A}{\beta}{\xi} g(\xi)) \big|_{\xi=g^{-1}(x)} && \textit{Inverse} \\ & & &=& -& e^{A \left( \frac{1}{x} - \frac{1}{\xi}\right)} \left. \frac{\asyOpV{A}{\beta}{\xi} g(\xi)} { \partial_\xi g(\xi) } \right|_{\xi=g^{-1}(x)} &&    \end{align*}
where $f'(x)$ denotes the usual derivative of $f(x)$. We require $g_0=0$ and $g_1=1$ for the chain rule and the inverse. 

With this notation at hand, the asymptotics of a formal integral, which fulfills the restrictions of Corollary \ref{crll:comb_int_asymp}, may be written in compact form as,
\begin{align*} \asyOpV{A}{}{\hbar} \Fop\left[\Sact(x)\right](\hbar) = \frac{1}{2\pi} \sum_{i \in I} \Fop\left[\Sact(\tau_i) - \Sact(x+\tau_i)\right](-\hbar), \end{align*}
where $\tau_i$ are the locations of the dominant saddle points, $A = -\Sact(\tau_i)$ and 
$\Fop\left[\Sact(x)\right](\hbar) \in \fring{\hbar}{A}{0}$.
The important property is that $\Fop$-expressions are stable under application of the $\asyOp$-derivative. This makes the calculation of the asymptotics as easy as calculating the expansion at low-order.
\begin{expl}
\label{expl:phi3theorycompactasymptotics}
The asymptotics deduced in Example \ref{expl:phi3theoryasymptotics} can be written in compact form as,
\begin{align*} \asyOpV{\frac23}{}{\hbar} \Fop\left[-\frac{x^2}{2} + \frac{x^3}{3!}\right](\hbar) &= \frac{1}{2\pi} \Fop\left[-\frac{x^2}{2}+\frac{x^3}{3!}\right](-\hbar), \end{align*}
where $\Fop\left[-\frac{x^2}{2}+\frac{x^3}{3!}\right](\hbar) \in \fring{\hbar}{\frac23}{0}$.
\end{expl}
\section{Overview of zero-dimensional QFT}
\label{sec:overview}
The zero-dimensional partition function of a scalar theory with interaction given by $V(x)$ is written as a formal integral,
\begin{align*} Z(\hbar, j) := \int \limits \frac{d x}{\sqrt{2 \pi \hbar}} e^{\frac{1}{\hbar} \left( - \frac{x^2}{2} + V(x) + x j \right) }. \end{align*}
This integral is to be understood as a formal expansion in $\hbar$ and $j$. The discussion from Section \ref{sec:formalint} does not immediately apply here, because of the additional $x j$ term, which was \textit{not} allowed in Definition \ref{def:formalintegral}. 
We can always transform the expression above into the canonical form as in Definition \ref{def:formalintegral} by formally shifting the integration variable,
\begin{align*} Z(\hbar, j) &= e^{\frac{-\frac{x_0^2}{2} + V(x_0) + x_0 j}{\hbar} } \int \limits_\mathbb{R} \frac{d x}{\sqrt{2 \pi \hbar}} e^{\frac{1}{\hbar} \left( -\frac{x^2}{2} + V(x+x_0) -V(x_0) -x V'(x_0) \right)} \\ &= e^{\frac{-\frac{x_0^2}{2} + V(x_0) + x_0 j}{\hbar} } \Fop\left[ -\frac{x^2}{2} + V(x+x_0) -V(x_0) -x V'(x_0)\right](\hbar) \end{align*}
where $x_0=x_0(j)$ is the unique power series solution of $x_0(j) = V'(x_0(j)) + j$.

The exponential prefactor enumerates all (possibly disconnected) \textit{tree diagrams} with the prescribed vertex structure and the $\Fop$-term enumerates all diagrams with at least one cycle in each connected component. It is useful to separate the tree-level diagrams as they contribute with negative powers in $\hbar$, which spoils the simple treatment in the formalism of power series. Trees and diagrams with at least one cycle are isolated after restricting to connected diagrams, which are generated by the \textit{free energy} of the theory:
\begin{align*} W(\hbar, j) &:= \hbar \log Z(\hbar, j) = \\ &=-\frac{x_0^2}{2} + V(x_0) + x_0 j +\hbar \log \mathcal{F} \left[ -\frac{x^2}{2} + V(x+x_0) -V(x_0) -x V'(x_0)\right] \left(\hbar \right), \end{align*}
where we conventionally multiply by $\hbar$ to go from counting by excess to counting by loop number and $x_0=x_0(j)$. The reason that taking the logarithm transforms the disconnected to the connected generating function of diagrams is that Feynman diagrams, which are weighted by their symmetry factors, are \textit{labeled} combinatorial objects \cite{flajolet2009analytic}.

The next step is to perform a Legendre transformation to get access to the effective action $G$, which is a generating function in $\hbar$ and $\varphi_c$, 
\begin{align*} G(\hbar,\varphi_c) &:= W - j \varphi_c \\ \varphi_c &:= \partial_j W. \end{align*}
The equation $\varphi_c = \partial_j W$ needs to be solved for $j$ to obtain $G$ as a generating function in $\hbar$ and $\varphi_c$. Explicitly, this is only necessary if the potential allows \textit{tadpole diagrams}. By tadpole diagrams we essentially mean diagrams which only have one external leg. The Legendre transformation can be interpreted as a combinatorial operation which maps a set of connected graphs to the respective set of \textit{2-edge-connected} graphs or \textit{1-particle irreducible} (1PI) graphs. See \cite{jackson2016robust} for a combinatorial treatment of the Legendre transformation in this context.

The coefficients of $G$ in $\varphi_c$ are called proper Green functions of the theory. More specifically, the first derivative $\partial_{\varphi_c} G |_{\varphi_c=0}$ is called the generating function of \textit{(proper) tadpoles}, the second derivative $\partial_{\varphi_c}^2 G |_{\varphi_c=0}$ is called \textit{(1PI) propagator} and higher derivatives $\partial_{\varphi_c}^k G |_{\varphi_c=0}$ are called \textit{proper $k$-point function}. 

A further step in the analysis of zero-dimensional QFT is the calculation of the \textit{renormalization constants}. The calculation is slightly artificial in zero-dimensional QFT, as there are no explicit divergences to renormalize. 
Without momentum dependence every `integral' for a graph is convergent. Thus renormalization has to be defined in analogy with higher dimensional models. 
To motivate the renormalization procedure for zero-dimensional QFT, we will use the Hopf algebra structure of Feynman diagrams.
\section{The Hopf algebra structure of renormalization}
\label{sec:hopfalgebra}

In this section, the important properties of the Hopf algebra of Feynman diagrams \cite{connes2001renormalization} will be briefly recalled. In \cite{borinsky2016lattices}, a detailed analysis of the Hopf algebra of Feynman diagrams with emphasis on the use for zero-dimensional combinatorial QFTs is given. Consult \cite{manchon2004hopf} for a general review of Hopf algebras of Feynman diagrams. 

\begin{defn}
Let ${\hopffg}$ be the $\R$-algebra generated by all mutually non-isomorphic, superficially divergent, 1PI Feynman diagrams of a certain QFT. The multiplication of generators is given by the disjoint union: $m: \hopffg \otimes \hopffg \rightarrow \hopffg, \gamma_1 \otimes \gamma_2 \mapsto \gamma_1 \cup \gamma_2$. It is extended linearly to all elements in the vector space $\hopffg$. ${\hopffg}$ has a unit $\unit:\mathbb{R} \mapsto \mathbb{I} \mathbb{R} \subset \hopffg $, where $\mathbb{I}$ is associated to the empty diagram and a counit $\counit: {\hopffg} \rightarrow \mathbb{R}$, which vanishes on every generator of ${\hopffg}$ except $\mathbb{I}$. The coproduct on the generators is defined as follows: 
    \begin{align*} &\Delta \Gamma := \sum \limits_{ \gamma \in \sdsubdiags(\Gamma) } \gamma \otimes \Gamma/\gamma& &:& &{\hopffg} \rightarrow {\hopffg} \otimes {\hopffg}, \end{align*}
    where the complete contraction $\Gamma/\Gamma$ is set to $\mathbb{I}$ and the vacuous contraction $\Gamma/\emptyset$ to $\Gamma$.

    $\sdsubdiags(\Gamma)$ denotes the set of all superficially divergent subgraphs of $\Gamma$:
\begin{align*} \sdsubdiags(\Gamma):= \left\{\gamma \subset \Gamma \text{ such that } \gamma = \prod \limits_i \gamma_i \text{ and } \gamma_i \text{ is 1PI and } \omega( \gamma_i ) \leq 0 \right\}, \end{align*}
    where $\omega(\gamma_i)$ is the superficial degree of divergence of the 1PI subgraph $\gamma_i$ obtained by power counting \cite{weinberg1960high}.
\end{defn}

As a Hopf algebra $\hopffg$ is equipped with a antipode $S: \hopffg \rightarrow \hopffg$, which is defined recursively by the identity $u \circ \epsilon = m \circ (\id \otimes S) \circ \Delta$.

We define the element $X^r$ in the completion of $\overline{\hopffg}$ of $\hopffg$ as 
\begin{align*} X^r := \pm 1 + \sum_{\substack{\text{1PI graphs }\Gamma \\\text{s.t. } \res\Gamma =r}} \frac{\Gamma}{|\Aut\Gamma|}, \end{align*}
where the sum is over all 1PI Feynman graphs with a designated \textit{residue} $r$. The residue of a graph is the external leg structure of the graph or equivalently the vertex that is created if all edges of the graph are contracted to a point. The negative sign is only assumed if $r$ is of propagator-type.

The most important property of $\hopffg$ from the perspective of renormalization is given by the identity \cite[Thm. 5]{kreimer2006anatomy},
\begin{align} \label{eqn:hopfdysonidentity} \Delta X^r = \sum_{L=0}^\infty X^r Q^{2L} \otimes X^r \big|_L, \end{align}
where $\big|_L$ is the projection to graphs of the fixed number of loops $L$ and $Q$ is the \textit{invariant charge}. 
This identity holds in \textit{renormalizable} QFTs.
For a QFT with a single vertex, $Q$ is defined as 
\begin{align} \label{eqn:invariantchargedef} Q:=\left(\frac{X^v}{\sqrt{\prod_{e\in v} (-X^e)}}\right)^{\frac{1}{|v|-2}}, \end{align}
where the product is over all edge-type residues that are incident to the vertex-type $v$ and $|v|$ denotes the number of edges incident to the vertex-type. Algebraic operations are to be understood in the combinatorial sense. For instance, arbitrary powers are translated using the generalized binomial theorem,
$(\mathbb{I}+X)^\alpha = \sum_{n=0}^\infty { \alpha \choose n} X^n$.

In theories with more than one vertex-type all possible different definitions of $Q$ must agree in a certain sense as dictated by the \textit{Slavnov-Taylor-Identities} \cite{kreimer2006anatomy,van2007renormalization}.

\textit{Feynman rules} can be defined conveniently as elements of the \textit{group of characters} of $\hopffg$. Characters are \textit{morphisms of algebras} $\phi: \hopffg \rightarrow \mathbb{A}$ from $\hopffg$ to some algebra $\mathbb{A}$, which map the unit $\mathbb{I}_{\hopffg}$ to the unit of $\mathbb{A}$, $\phi(\mathbb{I}_{\hopffg}) = \mathbb{I}_\mathbb{A}$ and respect the multiplication: $\phi(\Gamma_1 \Gamma_2) = \phi(\Gamma_1)\phi(\Gamma_2)$ for two elements $\Gamma_1, \Gamma_2 \in \hopffg$.
The zero-dimensional Feynman rules $\phi:\hopffg \rightarrow \R[[\hbar]]$ are simply given by,
\begin{align*} \phi \{\Gamma\}(\hbar) = \hbar^{|L(\Gamma)|}, \end{align*}
where $|L(\Gamma)|$ denotes the number of loops of $\Gamma$. The argument of maps from $\hopffg$ to some other space will be embraced with curly brackets to avoid confusion with the proceeding power series argument. 
By definition,
\begin{align*} g^r(\hbar):= \phi\{ X^r \}(\hbar) = \pm 1 + \sum_{\substack{\text{1PI graphs }\Gamma \\\text{s.t. } \res\Gamma =r}} \frac{\hbar^{|L(\Gamma)|}}{|\Aut\Gamma|}, \end{align*}
is the generating function of \textit{labeled} 1PI graphs with residue $r$. 
If $r_k$ is the residue with $k$ external legs in a scalar theory, $g^{r_k}(\hbar)$ is the $k$-th derivative of the 
effective action in the respective QFT, that means the proper $k$-point function: $g^{r_k}(\hbar) = \partial_{\varphi_c}^k G |_{\varphi_c=0}$.
In higher dimensions with more general Feynman rules, $\phi$ will map to a power series with coefficients in a function space. 

In a \textit{kinematic renormalization scheme} the \textit{counterterm map} for given Feynman rules $\phi: \hopffg \rightarrow \mathbb{A}$ is given by the character,
\begin{align*} S^\phi_R: \hopffg \rightarrow \mathbb{A}\\ S^\phi_R = R \circ \phi \circ S, \end{align*}
where $R: \mathbb{A} \rightarrow \mathbb{A}$ is the projection operator with $\ker R = \mathbb{A}_+$ which, loosely speaking, maps the `convergent part' of the Feynman rules in $\mathbb{A}_+$ to zero. The process of splitting $\mathbb{A}$ is a \textit{Birkhoff decomposition} \cite{connes2001renormalization}. The choice of a specific splitting is equivalent with the choice of a certain \textit{renormalization scheme}. 
The renormalized Feynman rules or \textit{Bogoliubov map} is then given by,
\begin{align*} \phi^R_\text{ren} := S^\phi_R * \phi, \end{align*}
where $*$ is the star product, $\phi_1 * \phi_2 := m \circ ( \phi_1 \otimes \phi_2) \circ \Delta$ on the group of characters of $\hopffg$ \cite{manchon2004hopf}.
The images of $X^r$ under $S^\phi_R$, $z_r := S^\phi_R\{X^r\}$ are called \textit{counterterms} or \textit{$z$-factors}.

For the renormalization of a realistic QFT, there are infinitely many options for the choice of a renormalization scheme. Typically, a one dimensional subset of renormalization schemes is chosen. This one dimensional subset can be used to setup the renormalization group, which maps different schemes onto each other. In zero-dimensional QFT on the other hand, the target algebra takes the simple form $\mathbb{A} = \R[[\hbar]]$. The only sensible renormalization scheme, which respects the grading of the Hopf algebra and reflects the usual properties of realistic renormalization, is the choice $R = \id$. 

In zero-dimensional QFT, we therefore fix our renormalization scheme to $R=\id$:
\begin{align*} S^\phi: \hopffg \rightarrow \R[[\hbar]]\\ S^\phi = \phi \circ S \end{align*}
The renormalized Feynman rules take a very simple form in this case,
\begin{align*} \phi_\text{ren} = S^\phi * \phi = (\phi \circ S) * \phi = u \circ \epsilon, \end{align*}
where $u \circ \epsilon$ maps every element of $\hopffg$ to zero except for the unit $u \circ \epsilon(\mathbb{I}) = 1$.

The renormalized Green functions in zero-dimensional QFT are therefore 
\begin{align*} g^r_\text{ren}&= \phi_\text{ren}\{ X^r \}(\hbar) = \pm 1 \end{align*}
with $+$ sign for vertex-type residue $r$ or $-$ for propagator-type residue $r$. We can take the identities above as renormalization conditions, which dictate the form of our counterterms, as has been done in \cite{cvitanovic1978number} and \cite{argyres2001zero}.

By using the identity in eq.\ \eqref{eqn:hopfdysonidentity} and the definition of the antipode $S$, renormalized Feynman rules can be written as
\begin{align*} \pm 1 &= (S^\phi * \phi)\{X^r\}(\hbar) = (m \circ (S^\phi \otimes \phi) \circ \Delta X^r )(\hbar) = (m \circ (\phi \otimes S^\phi) \circ \Delta X^r )(\hbar) \\ &=\sum_{L=0}^\infty \phi\{ X^r \}(\hbar) \left(\phi\{ Q \}(\hbar)\right)^{2L} [\hbar^{L}] S^\phi \{X^r\}(\hbar) = \phi\{ X^r \}(\hbar) S^\phi \{X^r\}(\hbar \phi\{ Q \}(\hbar)^2) \\ &= g^r(\hbar) z_r(\hbar \alpha(\hbar)) \end{align*}
where $\alpha(\hbar):= \phi\{ Q \}(\hbar)^2$ is the square of the invariant charge.

We will adapt the classic and critical insight to renormalization theory: We will interpret the expansion parameter $\hbar$ as a function of an \textit{renormalized} expansion parameter \cite{gell1954quantum}. We do this by defining $\hbar(\hbar_\text{ren})$ as the unique power series solution of $\hbar_\text{ren} = \hbar(\hbar_\text{ren}) \alpha(\hbar(\hbar_\text{ren}))$, which gives the identity,
\begin{align} z_r(\hbar_\text{ren}) = \pm \frac{1}{g^r(\hbar(\hbar_\text{ren}))} \end{align}
with $+$ sign for vertex-type residue $r$ or $-$ for propagator-type residue $r$.
Therefore, we can obtain the $z$-factors in zero-dimensional QFT from the proper Green functions and from the solution of the equation for the renormalized expansion parameter $\hbar_\text{ren}$. This computation can be performed in $\R[[\hbar]]$ and $\R[[\hbar_\text{ren}]]$. The asymptotics of these quantities can be obtained explicitly using of the $\asyOp$-derivative.

The main advantage of the Hopf algebra formulation is that the combinatorial interpretation is more accessible. In \cite{borinsky2016generating}, it was proven by the author that, in theories with only a three-valent vertex, the generating function $z_r(\hbar_\text{ren})$ counts the number of \textit{skeleton} or \textit{primitive} diagrams if $r$ is of vertex-type. In theories with four-valent vertices $z_r(\hbar_\text{ren})$ `almost' counts the number of skeleton diagrams. These results were obtained by exploiting the inherent \textit{lattice structure} of Feynman diagrams and the close relation of this structure with the Hopf algebra.
\section{Application to zero-dimensional QFT}
\label{sec:applications}
\subsection{Notation and verification}
The coefficients of asymptotic expansions in the following section are given in the notation of Section \ref{sec:ring}. That means, a row in a table such as, 

\begin{minipage}{\linewidth}
\vspace{1ex}
\centering
\def\arraystretch{1.5}
\begin{tabular}{|c||c||c|c|c|c|c|c|}
\hline
&prefactor&$\hbar^{0}$&$\hbar^{1}$&$\hbar^{2}$&$\hbar^{3}$&$\hbar^{4}$&$\hbar^{5}$\\
\hline\hline
$\asyOpV{A}{0}{\hbar} f$&$ C \hbar^{-\beta}$&$c_0$&$c_1$&$c_2$&$c_3$&$c_4$&$c_5$\\
\hline
\end{tabular}
\vspace{1ex}
\end{minipage}
corresponds to an asymptotic expansion of the coefficients of the power series $f(\hbar)$:
\begin{align*} [\hbar^n] f(\hbar) = C \sum_{k=0}^{R-1} c_k A^{-n-\beta+k} \Gamma(n+\beta-k) + \bigO(A^{-n} \Gamma(n+\beta-R)). \end{align*}

The given low-order expansions were checked by explicitly counting diagrams with the program \texttt{feyngen} \cite{borinsky2014feynman}. 
All given expansions were computed up to at least $100$ coefficients using basic computer algebra.
Although the asymptotics were completely obtained by analytic means, numerical computations were used to verify the analytic results. 
All given asymptotic expansions were checked by computing the asymptotics from the original expansions using the Richardson-extrapolation of the first $100$ coefficients.

\subsection{$\varphi^3$-theory}

\paragraph{Disconnected diagrams}
We start with an analysis of the asymptotics of zero dimensional $\varphi^3$-theory, which has been analyzed in \cite{cvitanovic1978number} using differential equations. For the sake of completeness, we will repeat the calculation with different methods and obtain all-order asymptotics in terms of $\Fop$ expressions.

The partition function with sources is given by the formal integral,
\begin{align} Z^{\varphi^3}(\hbar, j) &:= \int \frac{dx}{\sqrt{2 \pi \hbar}} e^{\frac{1}{\hbar} \left( -\frac{x^2}{2} + \frac{x^3}{3!} + x j \right) } = 1 + \frac{j^2}{2\hbar} + \frac{j^3}{3! \hbar} + \frac{1}{2} j + \frac{5}{24}\hbar + \ldots \end{align}
Pictorially, this may be depicted as,
\begin{align*} Z^{\varphi^3}(\hbar,j) &= \phi_\Sact \Big( 1 +  \frac12 {  \ifmmode \usebox{\fgsimpleprop} \else \newsavebox{\fgsimpleprop} \savebox{\fgsimpleprop}{ \begin{tikzpicture}[x=1ex,y=1ex,baseline={([yshift=-.5ex]current bounding box.center)}] \coordinate (v) ; \coordinate [right=1.2 of v] (u); \draw (v) -- (u); \filldraw (v) circle (1pt); \filldraw (u) circle (1pt); \end{tikzpicture} } \fi } + \frac16 {  \ifmmode \usebox{\fgsimplethreevtx} \else \newsavebox{\fgsimplethreevtx} \savebox{\fgsimplethreevtx}{ \begin{tikzpicture}[x=1ex,y=1ex,baseline={([yshift=-.5ex]current bounding box.center)}] \coordinate (v) ; \def \n {3}; \def \rad {1.2}; \foreach \s in {1,...,5} { \def \angle {360/\n*(\s - 1)}; \coordinate (u) at ([shift=({\angle}:\rad)]v); \draw (v) -- (u); \filldraw (u) circle (1pt); } \filldraw (v) circle (1pt); \end{tikzpicture} } \fi } + \frac18 {  \begin{tikzpicture}[x=1ex,y=1ex,baseline={([yshift=-.5ex]current bounding box.center)}] \coordinate (v0) ; \coordinate [right=1.2 of v0] (u0); \coordinate [below=1.2 of v0] (v1) ; \coordinate [right=1.2 of v1] (u1); \draw (v0) -- (u0); \filldraw (v0) circle (1pt); \filldraw (u0) circle (1pt); \draw (v1) -- (u1); \filldraw (v1) circle (1pt); \filldraw (u1) circle (1pt); \end{tikzpicture} } + \frac12 {  \ifmmode \usebox{\fgonetadpolephithree} \else \newsavebox{\fgonetadpolephithree} \savebox{\fgonetadpolephithree}{ \begin{tikzpicture}[x=2ex,y=2ex,baseline={([yshift=-.5ex]current bounding box.center)}] \coordinate (v0) ; \coordinate [right=1 of v0] (v1); \coordinate [left=.7 of v0] (i0); \coordinate [left=.5 of v1] (vm); \draw (vm) circle(.5); \draw (i0) -- (v0); \filldraw (v0) circle(1pt); \filldraw (i0) circle (1pt); \end{tikzpicture} } \fi } + \frac18 {  \begin{tikzpicture}[x=2ex,y=2ex,baseline={([yshift=-.5ex]current bounding box.center)}] \coordinate (v00); \coordinate [below=1.2 of v00] (v01); \coordinate [right=1 of v00] (v10); \coordinate [right=1 of v01] (v11); \coordinate [left=.7 of v00] (i00); \coordinate [left=.7 of v01] (i01); \coordinate [left=.5 of v10] (vm0); \coordinate [left=.5 of v11] (vm1); \draw (vm0) circle(.45); \draw (vm1) circle(.45); \draw (i00) -- (v00); \draw (i01) -- (v01); \filldraw (v00) circle(1pt); \filldraw (v01) circle(1pt); \filldraw (i00) circle (1pt); \filldraw (i01) circle(1pt); \end{tikzpicture} } + \frac14 {  \ifmmode \usebox{\fgtwojoneloopbubblephithree} \else \newsavebox{\fgtwojoneloopbubblephithree} \savebox{\fgtwojoneloopbubblephithree}{ \begin{tikzpicture}[x=2ex,y=2ex,baseline={([yshift=-.5ex]current bounding box.center)}] \coordinate (v0) ; \coordinate [right=1 of v0] (v1); \coordinate [left=.7 of v0] (i0); \coordinate [right=.7 of v1] (o0); \coordinate [left=.5 of v1] (vm); \draw (vm) circle(.5); \draw (i0) -- (v0); \draw (o0) -- (v1); \filldraw (v0) circle(1pt); \filldraw (v1) circle(1pt); \filldraw (i0) circle (1pt); \filldraw (o0) circle (1pt); \end{tikzpicture} } \fi } \\ &+ \frac16 {  \ifmmode \usebox{\fgthreejoneltrianglephithree} \else \newsavebox{\fgthreejoneltrianglephithree} \savebox{\fgthreejoneltrianglephithree}{ \begin{tikzpicture}[x=1ex,y=1ex,baseline={([yshift=-.5ex]current bounding box.center)}] \coordinate (v) ; \def \n {3}; \def \rad {1}; \def \rud {2.2}; \foreach \s in {1,...,5} { \def \angle {360/\n*(\s - 1)}; \def \ungle {360/\n*\s}; \coordinate (s) at ([shift=({\angle}:\rad)]v); \coordinate (t) at ([shift=({\ungle}:\rad)]v); \coordinate (u) at ([shift=({\angle}:\rud)]v); \draw (s) -- (u); \filldraw (u) circle (1pt); \filldraw (s) circle (1pt); } \draw (v) circle(\rad); \end{tikzpicture} } \fi } + \frac14 {  \ifmmode \usebox{\fgthreejonelpropinsphithree} \else \newsavebox{\fgthreejonelpropinsphithree} \savebox{\fgthreejonelpropinsphithree}{ \begin{tikzpicture}[x=2ex,y=2ex,baseline={([yshift=-.5ex]current bounding box.center)}] \coordinate (v0) ; \coordinate [right=1 of v0] (v1); \coordinate [right=.7 of v1] (v2); \coordinate [left=.7 of v0] (i0); \coordinate [above right=.7 of v2] (o0); \coordinate [below right=.7 of v2] (o1); \coordinate [left=.5 of v1] (vm); \draw (vm) circle(.5); \draw (i0) -- (v0); \draw (v1) -- (v2); \draw (o0) -- (v2); \draw (o1) -- (v2); \filldraw (v0) circle(1pt); \filldraw (v1) circle(1pt); \filldraw (v2) circle(1pt); \filldraw (i0) circle (1pt); \filldraw (o0) circle (1pt); \filldraw (o1) circle (1pt); \end{tikzpicture} } \fi } + \frac18 {  \ifmmode \usebox{\fghandle} \else \newsavebox{\fghandle} \savebox{\fghandle}{ \begin{tikzpicture}[x=1ex,y=1ex,baseline={([yshift=-.5ex]current bounding box.center)}] \coordinate (v0); \coordinate [right=1.5 of v0] (v1); \coordinate [left=.7 of v0] (i0); \coordinate [right=.7 of v1] (o0); \draw (v0) -- (v1); \filldraw (v0) circle (1pt); \filldraw (v1) circle (1pt); \draw (i0) circle(.7); \draw (o0) circle(.7); \end{tikzpicture} } \fi } + \frac{1}{12} {  \ifmmode \usebox{\fgbananathree} \else \newsavebox{\fgbananathree} \savebox{\fgbananathree}{ \begin{tikzpicture}[x=1ex,y=1ex,baseline={([yshift=-.5ex]current bounding box.center)}] \coordinate (vm); \coordinate [left=1 of vm] (v0); \coordinate [right=1 of vm] (v1); \draw (v0) -- (v1); \draw (vm) circle(1); \filldraw (v0) circle (1pt); \filldraw (v1) circle (1pt); \end{tikzpicture} } \fi } + \ldots \Big) \end{align*}
with the additional Feynman rule that every one-valent vertex is assigned a factor of $j$.
After a shift and rescaling of the integration variable $Z^{\varphi^3}(\hbar, j)$ takes the form,
\begin{align} \begin{split} \label{eqn:Zphi3_as_Zphi0} Z^{\varphi^3}(\hbar, j) &=  e^{\frac{-\frac{x_0^2}{2} + \frac{x_0^3}{3!}+ x_0 j}{\hbar} } \frac{1}{(1-2j)^{\frac14} } \int \frac{dx}{\sqrt{2 \pi \frac{\hbar}{(1-2j)^{\frac32}}}} e^{\frac{(1-2j)^{\frac32}}{\hbar} \left( - \frac{x^2}{2 } + \frac{x^3}{3!} \right) } \\  &= e^{\frac{(1-2 j)^{\frac32}-1+3j}{3\hbar} } \frac{1}{(1-2j)^{\frac14} } Z^{\varphi^3}_0 \left(\frac{\hbar}{(1-2j)^{\frac32}} \right) \end{split} \end{align}
where $x_0 := 1- \sqrt{1-2j}$ and $Z^{\varphi^3}_0(\hbar) := Z^{\varphi^3}(\hbar, 0)$. The last equality gives a significant simplification, because we are effectively left with a univariate generating function. The combinatorial explanation for this is that we can always `dress' a graph without external legs - a vacuum graph - by attaching an arbitrary number of rooted trees to the edges of the original graph. We also note that $-\frac{x_0^2}{2} + \frac{x_0^3}{3!}+ x_0 j = \frac13 ((1-2 j)^{\frac32}-1+3j)$, sequence \texttt{A001147} in the OEIS \cite{oeis}, is the generating function of all connected trees build out of three valent vertices. 

The generating function of graphs without external legs is given by
\begin{align*} Z^{\varphi^3}_0 \left(\hbar \right) &= \Fop\left[-\frac{x^2}{2}+\frac{x^3}{3!}\right]\left(\hbar\right), \end{align*}
which has been discussed in Examples \ref{expl:phi3theoryexpansion}, \ref{expl:phi3elliptic}, \ref{expl:phi3elliptic_singularity}, \ref{expl:phi3theoryasymptotics} and \ref{expl:phi3theorycompactasymptotics}. The first coefficients of $Z^{\varphi^3}(\hbar, j)$ are given in Table \ref{tab:Zphi3}.
\begin{table}
\begin{subtable}[c]{\textwidth}
\centering
\tiny
\def\arraystretch{1.5}
\begin{tabular}{|c||c||c|c|c|c|c|c|}
\hline
&prefactor&$\hbar^{0}$&$\hbar^{1}$&$\hbar^{2}$&$\hbar^{3}$&$\hbar^{4}$&$\hbar^{5}$\\
\hline\hline
$\partial_j^{0} Z^{\varphi^3} \big|_{j=0}$&$\hbar^{0}$&$1$&$ \frac{5}{24}$&$ \frac{385}{1152}$&$ \frac{85085}{82944}$&$ \frac{37182145}{7962624}$&$ \frac{5391411025}{191102976}$\\
\hline
$\partial_j^{1} Z^{\varphi^3} \big|_{j=0}$&$\hbar^{0}$&$ \frac{1}{2}$&$ \frac{35}{48}$&$ \frac{5005}{2304}$&$ \frac{1616615}{165888}$&$ \frac{929553625}{15925248}$&$ \frac{167133741775}{382205952}$\\
\hline
$\partial_j^{2} Z^{\varphi^3} \big|_{j=0}$&$\hbar^{-1}$&$1$&$ \frac{35}{24}$&$ \frac{5005}{1152}$&$ \frac{1616615}{82944}$&$ \frac{929553625}{7962624}$&$ \frac{167133741775}{191102976}$\\
\hline
$\partial_j^{3} Z^{\varphi^3} \big|_{j=0}$&$\hbar^{-1}$&$ \frac{5}{2}$&$ \frac{385}{48}$&$ \frac{85085}{2304}$&$ \frac{37182145}{165888}$&$ \frac{26957055125}{15925248}$&$ \frac{5849680962125}{382205952}$\\
\hline
\end{tabular}
\subcaption{The first coefficients of the bivariate generating function $Z^{\varphi^3}(\hbar, j)$.}
\label{tab:Zphi3}
\end{subtable}
\begin{subtable}[c]{\textwidth}
\centering
\tiny
\def\arraystretch{1.5}
\begin{tabular}{|c||c||c|c|c|c|c|c|}
\hline
&prefactor&$\hbar^{0}$&$\hbar^{1}$&$\hbar^{2}$&$\hbar^{3}$&$\hbar^{4}$&$\hbar^{5}$\\
\hline\hline
$\asyOpV{\frac23}{0}{\hbar} \partial_j^{0} Z^{\varphi^3} \big|_{j=0}$&$\frac{\hbar^{0}}{2\pi}$&$1$&$- \frac{5}{24}$&$ \frac{385}{1152}$&$- \frac{85085}{82944}$&$ \frac{37182145}{7962624}$&$- \frac{5391411025}{191102976}$\\
\hline
$\asyOpV{\frac23}{0}{\hbar} \partial_j^{1} Z^{\varphi^3} \big|_{j=0}$&$\frac{\hbar^{-1}}{2\pi}$&$2$&$ \frac{1}{12}$&$- \frac{35}{576}$&$ \frac{5005}{41472}$&$- \frac{1616615}{3981312}$&$ \frac{185910725}{95551488}$\\
\hline
$\asyOpV{\frac23}{0}{\hbar} \partial_j^{2} Z^{\varphi^3} \big|_{j=0}$&$\frac{\hbar^{-2}}{2\pi}$&$4$&$ \frac{1}{6}$&$- \frac{35}{288}$&$ \frac{5005}{20736}$&$- \frac{1616615}{1990656}$&$ \frac{185910725}{47775744}$\\
\hline
$\asyOpV{\frac23}{0}{\hbar} \partial_j^{3} Z^{\varphi^3} \big|_{j=0}$&$\frac{\hbar^{-3}}{2\pi}$&$8$&$- \frac{5}{3}$&$ \frac{25}{144}$&$- \frac{1925}{10368}$&$ \frac{425425}{995328}$&$- \frac{37182145}{23887872}$\\
\hline
\end{tabular}
\subcaption{The first coefficients of the bivariate generating function $\asyOpV{\frac23}{0}{\hbar} Z^{\varphi^3}(\hbar, j)$.}
\label{tab:Zphi3asymp}
\end{subtable}
\caption{Partition function in $\varphi^3$-theory.}
\end{table}
Using theorem \ref{thm:comb_int_asymp} the generating function of the asymptotics of $Z^{\varphi^3}_0 $ were calculated in Example \ref{expl:phi3theoryasymptotics}. Written in the notation of Section \ref{sec:ring}. We have $Z^{\varphi^3}_0 \in \fring{\hbar}{\frac{2}{3}}{0}$ and
\begin{align*} \asyOpV{\frac23}{0}{\hbar} Z^{\varphi^3}_0 (\hbar) = \frac{1}{2\pi} \Fop\left[-\frac{x^2}{2}+\frac{x^3}{3!}\right]\left(-\hbar\right) = \frac{1}{2\pi} Z^{\varphi^3}_0 (-\hbar). \end{align*}
This very simple form for the generating function can of course be traced back to the simple structure of $\varphi^3$, which is almost invariant under the $\asyOp$-derivative. 

The bivariate generating function of the asymptotics is obtained by using the $\asyOp$-derivative on eq.\ \eqref{eqn:Zphi3_as_Zphi0} and applying the chain rule from Section \ref{sec:ring}:
\begin{align} \begin{split} \label{eqn:Zphi3asymp} \asyOpV{\frac23}{0}{\hbar} Z^{\varphi^3}(\hbar, j) &= e^{\frac{(1-2 j)^{\frac32}-1+3j}{3\hbar} } \frac{1}{(1-2j)^{\frac14} } e^{\frac23 \frac{1 - (1-2 j)^{\frac32}}{\hbar}} \asyOpV{\frac23}{0}{\widetilde \hbar} Z^{\varphi^3}_0 \left(\widetilde \hbar \right) \big|_{\widetilde \hbar = \frac{\hbar}{(1-2j)^{\frac32}}} \\ &= \frac{1}{2\pi} e^{\frac{1-(1-2 j)^{\frac32}+3j}{3\hbar} } \frac{1}{(1-2j)^{\frac14} } Z^{\varphi^3}_0 \left(-\frac{\hbar}{(1-2j)^{\frac32}} \right). \end{split} \end{align}
Note that the $\asyOp$-derivative commutes with expansions in $j$, as we leave the number of external legs fixed while taking the limit to large loop order.
The first coefficients of the asymptotics of $Z^{\varphi^3}(\hbar, j)$ are listed in Table \ref{tab:Zphi3asymp}.

We may also expand the expression for the asymptotics in eq.\ \eqref{eqn:Zphi3asymp} in $\hbar$ to obtain a generating function for the first coefficient of the asymptotic expansions of the derivatives by $j$:
\begin{align*} \asyOpV{\frac23}{0}{\hbar} Z^{\varphi^3}(\hbar, j) &=\frac{1}{2\pi} e^{\frac{2j}{\hbar}} \left( 1 + \left(-\frac{5}{24} + \frac14 \frac{2j}{\hbar} - \frac18 \frac{(2j)^2}{\hbar^2} \right) \hbar +\ldots\right) \\ \asyOpV{\frac23}{0}{\hbar} \partial_j^m Z^{\varphi^3}(\hbar, j) \big|_{j=0} &= \frac{1}{2\pi} \left(\frac{2}{\hbar}\right)^{m}\left(1 + \left( -\frac{5}{24} + \frac{3m}{8} - \frac{m^2}{8} \right) \hbar + \ldots \right) \end{align*}
By Definition \ref{def:asymp} this can be translated into an asymptotic expression for large order coefficients. With $\partial_j^m Z^{\varphi^3}(\hbar, j)\big|_{j=0} = \sum_{n=0}^\infty z_{m,n} \hbar^n$:
\begin{align*} z_{m,n} &= \sum_{k=0}^{R-1} c_{m,k} \left(\frac{2}{3} \right)^{-m-n+k} \Gamma(n+m-k) \\ &+\bigO\left( \left(\frac{2}{3} \right)^{-m-n+R} \Gamma(n+m-R)\right) , \intertext{for all $R \geq 0$, where $c_{m,k}= [\hbar^k] \hbar^m \asyOpV{\frac23}{0}{\hbar}\partial_j^m Z^{\varphi^3}(\hbar, j)\big|_{j=0}$ or more explicitly, } z_{m,n} & \underset{n\rightarrow \infty}{\sim} \frac{2^m}{2\pi} \left(\frac{2}{3} \right)^{-m-n} \Gamma(n+m) \\ & \times \left( 1 + \frac23 \left( -\frac{5}{24} + \frac{3m}{8} - \frac{m^2}{8} \right) \frac{1}{n+m-1} +\ldots \right), \end{align*}
which agrees with the coefficients, which were given in \cite{cvitanovic1978number} in a different notation.
\paragraph{Connected diagrams}
\begin{table}
\begin{subtable}[c]{\textwidth}
\centering
\tiny
\def\arraystretch{1.5}
\begin{tabular}{|c||c|c|c|c|c|c|}
\hline
&$\hbar^{0}$&$\hbar^{1}$&$\hbar^{2}$&$\hbar^{3}$&$\hbar^{4}$&$\hbar^{5}$\\
\hline\hline
$\partial_j^{0} W^{\varphi^3} \big|_{j=0}$&$0$&$0$&$ \frac{5}{24}$&$ \frac{5}{16}$&$ \frac{1105}{1152}$&$ \frac{565}{128}$\\
\hline
$\partial_j^{1} W^{\varphi^3} \big|_{j=0}$&$0$&$ \frac{1}{2}$&$ \frac{5}{8}$&$ \frac{15}{8}$&$ \frac{1105}{128}$&$ \frac{1695}{32}$\\
\hline
$\partial_j^{2} W^{\varphi^3} \big|_{j=0}$&$1$&$1$&$ \frac{25}{8}$&$15$&$ \frac{12155}{128}$&$ \frac{11865}{16}$\\
\hline
$\partial_j^{3} W^{\varphi^3} \big|_{j=0}$&$1$&$4$&$ \frac{175}{8}$&$150$&$ \frac{158015}{128}$&$11865$\\
\hline
\end{tabular}
\subcaption{Table of the first coefficients of the bivariate generating function $W^{\varphi^3}(\hbar, j)$.}
\label{tab:Wphi3}
\end{subtable}
\begin{subtable}[c]{\textwidth}
\centering
\tiny
\def\arraystretch{1.5}
\begin{tabular}{|c||c||c|c|c|c|c|c|}
\hline
&prefactor&$\hbar^{0}$&$\hbar^{1}$&$\hbar^{2}$&$\hbar^{3}$&$\hbar^{4}$&$\hbar^{5}$\\
\hline\hline
$\asyOpV{\frac23}{0}{\hbar} \partial_j^{0} W^{\varphi^3}\big|_{j=0}$&$\frac{\hbar^{1}}{2 \pi}$&$1$&$- \frac{5}{12}$&$ \frac{25}{288}$&$- \frac{20015}{10368}$&$ \frac{398425}{497664}$&$- \frac{323018725}{5971968}$\\
\hline
$\asyOpV{\frac23}{0}{\hbar} \partial_j^{1} W^{\varphi^3}\big|_{j=0}$&$\frac{\hbar^{0}}{2 \pi}$&$2$&$- \frac{5}{6}$&$- \frac{155}{144}$&$- \frac{17315}{5184}$&$- \frac{3924815}{248832}$&$- \frac{294332125}{2985984}$\\
\hline
$\asyOpV{\frac23}{0}{\hbar} \partial_j^{2} W^{\varphi^3}\big|_{j=0}$&$\frac{\hbar^{-1}}{2 \pi}$&$4$&$- \frac{11}{3}$&$- \frac{275}{72}$&$- \frac{31265}{2592}$&$- \frac{7249295}{124416}$&$- \frac{553369915}{1492992}$\\
\hline
$\asyOpV{\frac23}{0}{\hbar} \partial_j^{3} W^{\varphi^3}\big|_{j=0}$&$\frac{\hbar^{-2}}{2 \pi}$&$8$&$- \frac{46}{3}$&$- \frac{407}{36}$&$- \frac{51065}{1296}$&$- \frac{12501815}{62208}$&$- \frac{988327615}{746496}$\\
\hline
\end{tabular}
\subcaption{Table of the first coefficients of the bivariate generating function $\asyOpV{\frac23}{0}{\hbar} W^{\varphi^3}(\hbar, j)$.}
\label{tab:Wphi3asymp}
\end{subtable}
\caption{Free energy in $\varphi^3$-theory.}
\end{table}

The generating function of the connected graphs can be obtained by taking the logarithm:
\begin{align} \begin{split} \label{eqn:Wphi3_explicit} W^{\varphi^3} ( \hbar, j ) &:= \hbar \log Z^{\varphi^3}(\hbar, j) \\ &= \frac13 ((1-2 j)^{\frac32}-1+3j) + \frac14 \hbar \log \frac{1}{1-2j } + \hbar \log Z^{\varphi^3}_0 \left(\frac{\hbar}{(1-2j)^{\frac32}} \right) \end{split} \\ \notag &= \frac{5}{24}\hbar^{2} + \frac{1}{2} j \hbar + \frac{5}{8} j \hbar^{2} + \frac{1}{2} j^{2} + \frac{1}{2} j^{2} \hbar + \frac{25}{16} j^{2} \hbar^{2} + \ldots \end{align}
This can be written as the diagrammatic expansion,
\begin{align*} W^{\varphi^3}(\hbar,j) &=  \phi_\Sact \Big( \frac12 {  \ifmmode \usebox{\fgsimpleprop} \else \newsavebox{\fgsimpleprop} \savebox{\fgsimpleprop}{ \begin{tikzpicture}[x=1ex,y=1ex,baseline={([yshift=-.5ex]current bounding box.center)}] \coordinate (v) ; \coordinate [right=1.2 of v] (u); \draw (v) -- (u); \filldraw (v) circle (1pt); \filldraw (u) circle (1pt); \end{tikzpicture} } \fi } + \frac16 {  \ifmmode \usebox{\fgsimplethreevtx} \else \newsavebox{\fgsimplethreevtx} \savebox{\fgsimplethreevtx}{ \begin{tikzpicture}[x=1ex,y=1ex,baseline={([yshift=-.5ex]current bounding box.center)}] \coordinate (v) ; \def \n {3}; \def \rad {1.2}; \foreach \s in {1,...,5} { \def \angle {360/\n*(\s - 1)}; \coordinate (u) at ([shift=({\angle}:\rad)]v); \draw (v) -- (u); \filldraw (u) circle (1pt); } \filldraw (v) circle (1pt); \end{tikzpicture} } \fi } + \frac12 {  \ifmmode \usebox{\fgonetadpolephithree} \else \newsavebox{\fgonetadpolephithree} \savebox{\fgonetadpolephithree}{ \begin{tikzpicture}[x=2ex,y=2ex,baseline={([yshift=-.5ex]current bounding box.center)}] \coordinate (v0) ; \coordinate [right=1 of v0] (v1); \coordinate [left=.7 of v0] (i0); \coordinate [left=.5 of v1] (vm); \draw (vm) circle(.5); \draw (i0) -- (v0); \filldraw (v0) circle(1pt); \filldraw (i0) circle (1pt); \end{tikzpicture} } \fi } + \frac14 {  \ifmmode \usebox{\fgtwojoneloopbubblephithree} \else \newsavebox{\fgtwojoneloopbubblephithree} \savebox{\fgtwojoneloopbubblephithree}{ \begin{tikzpicture}[x=2ex,y=2ex,baseline={([yshift=-.5ex]current bounding box.center)}] \coordinate (v0) ; \coordinate [right=1 of v0] (v1); \coordinate [left=.7 of v0] (i0); \coordinate [right=.7 of v1] (o0); \coordinate [left=.5 of v1] (vm); \draw (vm) circle(.5); \draw (i0) -- (v0); \draw (o0) -- (v1); \filldraw (v0) circle(1pt); \filldraw (v1) circle(1pt); \filldraw (i0) circle (1pt); \filldraw (o0) circle (1pt); \end{tikzpicture} } \fi } \\ &+ \frac16 {  \ifmmode \usebox{\fgthreejoneltrianglephithree} \else \newsavebox{\fgthreejoneltrianglephithree} \savebox{\fgthreejoneltrianglephithree}{ \begin{tikzpicture}[x=1ex,y=1ex,baseline={([yshift=-.5ex]current bounding box.center)}] \coordinate (v) ; \def \n {3}; \def \rad {1}; \def \rud {2.2}; \foreach \s in {1,...,5} { \def \angle {360/\n*(\s - 1)}; \def \ungle {360/\n*\s}; \coordinate (s) at ([shift=({\angle}:\rad)]v); \coordinate (t) at ([shift=({\ungle}:\rad)]v); \coordinate (u) at ([shift=({\angle}:\rud)]v); \draw (s) -- (u); \filldraw (u) circle (1pt); \filldraw (s) circle (1pt); } \draw (v) circle(\rad); \end{tikzpicture} } \fi } + \frac14 {  \ifmmode \usebox{\fgthreejonelpropinsphithree} \else \newsavebox{\fgthreejonelpropinsphithree} \savebox{\fgthreejonelpropinsphithree}{ \begin{tikzpicture}[x=2ex,y=2ex,baseline={([yshift=-.5ex]current bounding box.center)}] \coordinate (v0) ; \coordinate [right=1 of v0] (v1); \coordinate [right=.7 of v1] (v2); \coordinate [left=.7 of v0] (i0); \coordinate [above right=.7 of v2] (o0); \coordinate [below right=.7 of v2] (o1); \coordinate [left=.5 of v1] (vm); \draw (vm) circle(.5); \draw (i0) -- (v0); \draw (v1) -- (v2); \draw (o0) -- (v2); \draw (o1) -- (v2); \filldraw (v0) circle(1pt); \filldraw (v1) circle(1pt); \filldraw (v2) circle(1pt); \filldraw (i0) circle (1pt); \filldraw (o0) circle (1pt); \filldraw (o1) circle (1pt); \end{tikzpicture} } \fi } + \frac18 {  \ifmmode \usebox{\fghandle} \else \newsavebox{\fghandle} \savebox{\fghandle}{ \begin{tikzpicture}[x=1ex,y=1ex,baseline={([yshift=-.5ex]current bounding box.center)}] \coordinate (v0); \coordinate [right=1.5 of v0] (v1); \coordinate [left=.7 of v0] (i0); \coordinate [right=.7 of v1] (o0); \draw (v0) -- (v1); \filldraw (v0) circle (1pt); \filldraw (v1) circle (1pt); \draw (i0) circle(.7); \draw (o0) circle(.7); \end{tikzpicture} } \fi } + \frac{1}{12} {  \ifmmode \usebox{\fgbananathree} \else \newsavebox{\fgbananathree} \savebox{\fgbananathree}{ \begin{tikzpicture}[x=1ex,y=1ex,baseline={([yshift=-.5ex]current bounding box.center)}] \coordinate (vm); \coordinate [left=1 of vm] (v0); \coordinate [right=1 of vm] (v1); \draw (v0) -- (v1); \draw (vm) circle(1); \filldraw (v0) circle (1pt); \filldraw (v1) circle (1pt); \end{tikzpicture} } \fi } + \ldots \Big) \end{align*}
where we now assign a $\hbar^{|L(\Gamma)|}$ to every graph, where $|L(\Gamma)|$ is the number of loops of $\Gamma$.
The large-$n$ asymptotics of the coefficients $w_n(j) = [\hbar^n] W^{\varphi^3} ( \hbar, j )$ can be obtained by using the chain rule for $\asyOp$:
\begin{align} \asyOpV{\frac23}{0}{\hbar} W^{\varphi^3} (\hbar, j) &= \hbar \left[e^{\frac{2}{3} \left(\frac{1}{\hbar} - \frac{1}{\widetilde \hbar} \right) } \asyOpV{\frac23}{0}{\widetilde \hbar} \log Z^{\varphi^3}_0 \left( \widetilde \hbar \right)\right]_{\widetilde \hbar =\frac{\hbar}{(1-2j)^{\frac32}}}. \end{align}
Some coefficients of the bivariate generating functions $W^{\varphi^3} (\hbar, j)$ and $\asyOpV{\frac23}{0}{\hbar} W^{\varphi^3} (\hbar, j)$ are given in Tables \ref{tab:Wphi3} and \ref{tab:Wphi3asymp}. Comparing Tables \ref{tab:Zphi3asymp} and \ref{tab:Wphi3asymp}, we can observe the classic result, proven by Wright \cite{wright1970asymptotic}, that the asymptotics of connected and disconnected graphs differ only by a subdominant contribution. 

With the expressions above, we have explicit generating functions for the connected \textit{$n$-point functions} and their all-order asymptotics. 
For instance,
\begin{align*} \left. W^{\varphi^3} \right|_{j=0} &= \hbar \log Z^{\varphi^3}_0(\hbar) & \asyOpV{\frac23}{0}{\hbar} \left. W^{\varphi^3} \right|_{j=0} &= \hbar \asyOpV{\frac23}{0}{\hbar} \log Z^{\varphi^3}_0(\hbar) \\ \left. \frac{\partial W^{\varphi^3} }{\partial j} \right|_{j=0} &= \frac12 \hbar + 3 \hbar^2 \partial_\hbar \log Z^{\varphi^3}_0(\hbar) & \asyOpV{\frac23}{0}{\hbar} \left. \frac{\partial W^{\varphi^3} }{\partial j} \right|_{j=0} &= \left(2 + 3 \hbar^2 \partial_\hbar \right) \asyOpV{\frac23}{0}{\hbar} \log Z^{\varphi^3}_0(\hbar) . \end{align*}
Every $n$-point function is a linear combination of $\log Z^{\varphi^3}(\hbar)$ and its derivatives and the asymptotics are linear combinations of $\asyOpV{\frac23}{0}{\hbar} \log Z^{\varphi^3}(\hbar) = \frac{1}{2\pi}\frac{Z^{\varphi^3}_0(-\hbar)}{Z^{\varphi^3}_0(\hbar)}$ and its derivatives. 

Of course, we could derive differential equations, which are fulfilled by $Z^{\varphi^3}_0(\hbar)$, $\log Z^{\varphi^3}(\hbar)$ and $\asyOpV{\frac23}{0}{\hbar} \log Z^{\varphi^3}(\hbar)$ to simplify the expressions above. This would have to be done in a very model specific manner. We will not pursue this path in the scope of this article, as we aim for providing machinery which can be used for general models. 

\paragraph{1PI diagrams}

The next object of interest is the effective action,
\begin{align} \G^{\varphi^3}(\hbar, \varphi_c) &= W^{\varphi^3}(\hbar, j(\hbar, \varphi_c)) - j(\hbar, \varphi_c) \varphi_c, \end{align}
which is the Legendre transform of $W$, where $j(\hbar, \varphi_c)$ is the solution of  
$\varphi_c = \partial_j W^{\varphi^3} \left( \hbar, j \right)$.
A small calculation reveals what for the special case of $\varphi^3$-theory this can be written explicitly in terms of 
$\varphi_c$. It is convenient to define $\gamma^{\varphi^3}_0(\hbar) := \frac{\G^{\varphi^3}(\hbar,0)}{\hbar} = \frac{W^{\varphi^3}(\hbar, j(\hbar, 0))}{\hbar}$. Eq.\ \eqref{eqn:Wphi3_explicit} gives us the more explicit form,
\begin{align*} \gamma^{\varphi^3}_0(\hbar)&= \frac{(1-2 j_0(\hbar))^{\frac32}-1+3j_0(\hbar)}{3\hbar} \\ &+ \frac14 \log \frac{1}{1-2j_0(\hbar) } + \log Z^{\varphi^3}_0 \left(\frac{\hbar}{(1-2j_0(\hbar))^{\frac32}}\right). \end{align*}
where $j_0(\hbar)=j(\hbar,0)$ is the unique power series solution of the equation $0= \frac{\partial W^{\varphi^3}}{\partial j}\left(\hbar, j_0(\hbar)\right)$, 
The bivariate generating function $\G^{\varphi^3}(\hbar, \varphi_c)$ is then,
\begin{align} \G^{\varphi^3}(\hbar, \varphi_c) &= -\frac{\varphi_c^2}{2} + \frac{\varphi_c^3}{3!} + \frac12 \hbar \log \frac{1}{1-\varphi_c} + \hbar \gamma^{\varphi^3}_0\left(\frac{\hbar}{(1-\varphi_c)^3}\right). \end{align}
The combinatorial interpretation of the identity is the following: A 1PI diagram either has no or only one loop, or it can be reduced to a vacuum diagram by removing all external legs and the attached vertices. 
This bivariate generating function can be depicted diagrammatically as,
\begin{align*} G^{\varphi^3}(\hbar,\varphi_c) &= \phi_\Sact \Big(  -\frac12 {  \ifmmode \usebox{\fgsimpleprop} \else \newsavebox{\fgsimpleprop} \savebox{\fgsimpleprop}{ \begin{tikzpicture}[x=1ex,y=1ex,baseline={([yshift=-.5ex]current bounding box.center)}] \coordinate (v) ; \coordinate [right=1.2 of v] (u); \draw (v) -- (u); \filldraw (v) circle (1pt); \filldraw (u) circle (1pt); \end{tikzpicture} } \fi } + \frac16 {  \ifmmode \usebox{\fgsimplethreevtx} \else \newsavebox{\fgsimplethreevtx} \savebox{\fgsimplethreevtx}{ \begin{tikzpicture}[x=1ex,y=1ex,baseline={([yshift=-.5ex]current bounding box.center)}] \coordinate (v) ; \def \n {3}; \def \rad {1.2}; \foreach \s in {1,...,5} { \def \angle {360/\n*(\s - 1)}; \coordinate (u) at ([shift=({\angle}:\rad)]v); \draw (v) -- (u); \filldraw (u) circle (1pt); } \filldraw (v) circle (1pt); \end{tikzpicture} } \fi } + \frac12 {  \ifmmode \usebox{\fgonetadpolephithree} \else \newsavebox{\fgonetadpolephithree} \savebox{\fgonetadpolephithree}{ \begin{tikzpicture}[x=2ex,y=2ex,baseline={([yshift=-.5ex]current bounding box.center)}] \coordinate (v0) ; \coordinate [right=1 of v0] (v1); \coordinate [left=.7 of v0] (i0); \coordinate [left=.5 of v1] (vm); \draw (vm) circle(.5); \draw (i0) -- (v0); \filldraw (v0) circle(1pt); \filldraw (i0) circle (1pt); \end{tikzpicture} } \fi } + \frac14 {  \ifmmode \usebox{\fgtwojoneloopbubblephithree} \else \newsavebox{\fgtwojoneloopbubblephithree} \savebox{\fgtwojoneloopbubblephithree}{ \begin{tikzpicture}[x=2ex,y=2ex,baseline={([yshift=-.5ex]current bounding box.center)}] \coordinate (v0) ; \coordinate [right=1 of v0] (v1); \coordinate [left=.7 of v0] (i0); \coordinate [right=.7 of v1] (o0); \coordinate [left=.5 of v1] (vm); \draw (vm) circle(.5); \draw (i0) -- (v0); \draw (o0) -- (v1); \filldraw (v0) circle(1pt); \filldraw (v1) circle(1pt); \filldraw (i0) circle (1pt); \filldraw (o0) circle (1pt); \end{tikzpicture} } \fi } + \frac16 {  \ifmmode \usebox{\fgthreejoneltrianglephithree} \else \newsavebox{\fgthreejoneltrianglephithree} \savebox{\fgthreejoneltrianglephithree}{ \begin{tikzpicture}[x=1ex,y=1ex,baseline={([yshift=-.5ex]current bounding box.center)}] \coordinate (v) ; \def \n {3}; \def \rad {1}; \def \rud {2.2}; \foreach \s in {1,...,5} { \def \angle {360/\n*(\s - 1)}; \def \ungle {360/\n*\s}; \coordinate (s) at ([shift=({\angle}:\rad)]v); \coordinate (t) at ([shift=({\ungle}:\rad)]v); \coordinate (u) at ([shift=({\angle}:\rud)]v); \draw (s) -- (u); \filldraw (u) circle (1pt); \filldraw (s) circle (1pt); } \draw (v) circle(\rad); \end{tikzpicture} } \fi } + \frac{1}{12} {  \ifmmode \usebox{\fgbananathree} \else \newsavebox{\fgbananathree} \savebox{\fgbananathree}{ \begin{tikzpicture}[x=1ex,y=1ex,baseline={([yshift=-.5ex]current bounding box.center)}] \coordinate (vm); \coordinate [left=1 of vm] (v0); \coordinate [right=1 of vm] (v1); \draw (v0) -- (v1); \draw (vm) circle(1); \filldraw (v0) circle (1pt); \filldraw (v1) circle (1pt); \end{tikzpicture} } \fi } + \ldots \Big), \end{align*}
where we assign a $\varphi_c$ instead of $j$ to every external leg.

Acting with the $\asyOp$-derivative on $\gamma^{\varphi^3}_0$ gives,
\begin{align*} \asyOpV{\frac23}{0}{\hbar} \gamma^{\varphi^3}_0(\hbar) &= \asyOpV{\frac23}{0}{\hbar} \frac{W^{\varphi^3}(\hbar, j_0(\hbar))}{\hbar} \\ &= \left(\asyOpV{\frac23}{0}{\hbar} \frac{W^{\varphi^3}(\hbar, j)}{\hbar} \right)_{j=j_0(\hbar)} + \frac{\frac{\partial W^{\varphi^3}}{\partial j}(\hbar, j)\big|_{j=j_0(\hbar)}}{\hbar} \asyOpV{\frac23}{0}{\hbar} j_0(\hbar) \intertext{where the second term vanishes by the definition of $j_0$. Therefore,} \asyOpV{\frac23}{0}{\hbar} \gamma^{\varphi^3}_0(\hbar) &= \left[e^{\frac{2}{3} \left(\frac{1}{\hbar} - \frac{1}{\widetilde \hbar} \right) } \asyOpV{\frac23}{0}{\widetilde \hbar} \log Z^{\varphi^3}_0 \left( \widetilde \hbar \right)\right]_{\widetilde \hbar =\frac{\hbar}{(1-2j_0(\hbar))^{\frac32}}}, \end{align*}
and
\begin{align} \begin{split} \asyOpV{\frac23}{0}{\hbar} \G^{\varphi^3}(\hbar, \varphi_c) &= \hbar \left[e^{\frac{2}{3} \left(\frac{1}{\hbar} - \frac{1}{\widetilde \hbar} \right) } \asyOpV{\frac23}{0}{\widetilde \hbar} \gamma^{\varphi^3}_0(\widetilde \hbar) \right]_{\widetilde \hbar = \frac{\hbar}{(1-\varphi_c)^3}} \\ &= \hbar \left[e^{\frac{2}{3} \left(\frac{1}{\hbar} - \frac{1}{\widetilde \hbar} \right) } \asyOpV{\frac23}{0}{\widetilde \hbar} \log Z^{\varphi^3}_0 \left( \widetilde \hbar \right)\right]_{\widetilde \hbar =\frac{\hbar}{(1-\varphi_c)^3\left(1-2j_0\left(\frac{\hbar}{(1-\varphi_c)^3}\right)\right)^{\frac32}}}. \end{split} \end{align}

\begin{table}
\begin{subtable}[c]{\textwidth}
\centering
\tiny
\def\arraystretch{1.5}
\begin{tabular}{|c||c|c|c|c|c|c|}
\hline
&$\hbar^{0}$&$\hbar^{1}$&$\hbar^{2}$&$\hbar^{3}$&$\hbar^{4}$&$\hbar^{5}$\\
\hline\hline
$\partial_{\varphi_c}^{0} G^{\varphi^3} \big|_{\varphi_c=0}$&$0$&$0$&$ \frac{1}{12}$&$ \frac{5}{48}$&$ \frac{11}{36}$&$ \frac{539}{384}$\\
\hline
$\partial_{\varphi_c}^{1} G^{\varphi^3} \big|_{\varphi_c=0}$&$0$&$ \frac{1}{2}$&$ \frac{1}{4}$&$ \frac{5}{8}$&$ \frac{11}{4}$&$ \frac{539}{32}$\\
\hline
$\partial_{\varphi_c}^{2} G^{\varphi^3} \big|_{\varphi_c=0}$&$-1$&$ \frac{1}{2}$&$1$&$ \frac{35}{8}$&$ \frac{55}{2}$&$ \frac{7007}{32}$\\
\hline
$\partial_{\varphi_c}^{3} G^{\varphi^3} \big|_{\varphi_c=0}$&$1$&$1$&$5$&$35$&$ \frac{605}{2}$&$ \frac{49049}{16}$\\
\hline
\end{tabular}
\subcaption{Table of the first coefficients of the bivariate generating function $\G^{\varphi^3}(\hbar, \varphi_c)$.}
\label{tab:Gphi3}
\end{subtable}
\begin{subtable}[c]{\textwidth}
\centering
\tiny
\def\arraystretch{1.5}
\begin{tabular}{|c||c||c|c|c|c|c|c|}
\hline
&prefactor&$\hbar^{0}$&$\hbar^{1}$&$\hbar^{2}$&$\hbar^{3}$&$\hbar^{4}$&$\hbar^{5}$\\
\hline\hline
$\asyOpV{\frac23}{0}{\hbar} \partial_{\varphi_c}^{0} G^{\varphi^3} \big|_{\varphi_c=0}$&$e^{-1} \frac{\hbar^{1}}{2 \pi}$&$1$&$- \frac{7}{6}$&$- \frac{11}{72}$&$- \frac{10135}{1296}$&$- \frac{536087}{31104}$&$- \frac{296214127}{933120}$\\
\hline
$\asyOpV{\frac23}{0}{\hbar} \partial_{\varphi_c}^{1} G^{\varphi^3} \big|_{\varphi_c=0}$&$e^{-1} \frac{\hbar^{0}}{2 \pi}$&$2$&$- \frac{7}{3}$&$- \frac{137}{36}$&$- \frac{10729}{648}$&$- \frac{1630667}{15552}$&$- \frac{392709787}{466560}$\\
\hline
$\asyOpV{\frac23}{0}{\hbar} \partial_{\varphi_c}^{2} G^{\varphi^3} \big|_{\varphi_c=0}$&$e^{-1} \frac{\hbar^{-1}}{2 \pi}$&$4$&$- \frac{26}{3}$&$- \frac{179}{18}$&$- \frac{15661}{324}$&$- \frac{2531903}{7776}$&$- \frac{637309837}{233280}$\\
\hline
$\asyOpV{\frac23}{0}{\hbar} \partial_{\varphi_c}^{3} G^{\varphi^3} \big|_{\varphi_c=0}$&$e^{-1} \frac{\hbar^{-2}}{2 \pi}$&$8$&$- \frac{100}{3}$&$- \frac{101}{9}$&$- \frac{18883}{162}$&$- \frac{3471563}{3888}$&$- \frac{940175917}{116640}$\\
\hline
\end{tabular}
\subcaption{Table of the first coefficients of the bivariate generating function $\asyOpV{\frac23}{0}{\hbar} G^{\varphi^3}(\hbar, \varphi_c)$.}
\label{tab:Gphi3asymp}
\end{subtable}
\caption{Effective action in $\varphi^3$-theory.}
\end{table}

This can be expanded in $\varphi_c$ to obtain the asymptotics of the 1PI or `proper' $n$-point functions. Some coefficients of the bivariate generating function $G^{\varphi^3}$ and its asymptotics are listed in Tables \ref{tab:Gphi3} and \ref{tab:Gphi3asymp}.

As for the disconnected diagrams, we can also expand $\asyOpV{\frac23}{0}{\hbar} G^{\varphi^3}(\hbar, \varphi_c)$ in $\hbar$ to obtain an asymptotic expansion for general $m$ with $\partial_{\varphi_c}^m G^{\varphi^3}(\hbar, \varphi_c)\big|_{\varphi_c=0} = \sum_{n=0}^\infty g_{m,n} \hbar^n$.
Expanding gives,
\begin{gather*} \asyOpV{\frac23}{0}{\hbar} \G^{\varphi^3}(\hbar, \varphi_c) = \hbar \frac{e^{-1+\frac{2 \varphi_c}{\hbar}}}{2\pi} \left( 1 - \frac16 \left( 7+ 3 (2\varphi_c)^2 \right) \hbar \right. \\ \left. -\frac{1}{72} \left( 11 + 126 (2\varphi_c) - 42 (2\varphi_c)^2 - 8 (2\varphi_c)^3 - 9 (2\varphi_c)^4 \right) \hbar^2 + \ldots \right). \intertext{Translated to an asymptotic expansion this becomes, } g_{m,n} \underset{n\rightarrow \infty}{\sim} \frac{e^{-1}}{2\pi} 2^m \left(\frac{2}{3} \right)^{-m-n} \Gamma(n+m-1) \\ \times\left( 1 - \frac19 \frac{ 7 - 3 m + 3 m^2}{n+m-2} -\frac{1}{162} \frac{ 11 + 210 m - 123 m^2 + 48 m^3 - 9 m^4 }{(n+m-3)(n+m-2)} + \ldots \right). \end{gather*}

\paragraph{Renormalization constants and skeleton diagrams}
To perform the \textit{charge renormalization} as explained in detail in Section \ref{sec:hopfalgebra}, the invariant charge in $\varphi^3$-theory needs to be defined in accordance to eq.\ \eqref{eqn:invariantchargedef},
\begin{align} \alpha(\hbar) := \left( \frac{ \partial_{\varphi_c}^3 G^{\varphi^3} |_{\varphi_c = 0} (\hbar) }{ \left(- \partial_{\varphi_c}^2 G^{\varphi^3} |_{\varphi_c = 0}(\hbar) \right)^{\frac32}} \right)^2 \end{align}
The exponents in the expression above are a consequence of the combinatorial fact, that a 1PI $\varphi^3$-graph has two additional vertices and three additional propagators for each additional loop.
We need to solve 
\begin{align*} \hbar_{\text{ren}} = \hbar(\hbar_\text{ren}) \alpha(\hbar(\hbar_\text{ren})) \end{align*}
for $\hbar(\hbar_{\text{ren}})$. The asymptotics in $\hbar_{\text{ren}}$ can be obtained by using the formula for the compositional inverse of $\asyOp$-derivative given in Section \ref{sec:ring} on this expression:
\begin{align*} \begin{split} (\asyOpV{\frac23}{0}{\hbar_{\text{ren}}} \hbar(\hbar_{\text{ren}}))&=       - \left[ e^{\frac23 \left(\frac{1}{\hbar_{\text{ren}}} - \frac{1}{\hbar}\right)} \frac{ \asyOpV{\frac23}{0}{\hbar} \left(\hbar \alpha(\hbar) \right) } { \partial_\hbar \left( \hbar \alpha(\hbar)\right) }\right]_{\hbar = \hbar(\hbar_\text{ren})}.           \end{split} \end{align*}
The $z$-factors are then obtained as explained in Section \ref{sec:hopfalgebra}. They fulfill the identities,
\begin{align*} -1 &= z_{\varphi_c^2}(\hbar_\text{ren}) \partial_{\varphi_c}^2 G^{\varphi^3} \big|_{\varphi_c = 0} \left(\hbar(\hbar_\text{ren})\right) \\ 1 &= z_{\varphi_c^3}(\hbar_\text{ren}) \partial_{\varphi_c}^3 G^{\varphi^3} \big|_{\varphi_c = 0} \left(\hbar(\hbar_\text{ren})\right) \end{align*}
\begin{table}
\begin{subtable}[c]{\textwidth}
\centering
\tiny
\def\arraystretch{1.5}
\begin{tabular}{|c||c|c|c|c|c|c|}
\hline
&$\hbar_{\text{ren}}^{0}$&$\hbar_{\text{ren}}^{1}$&$\hbar_{\text{ren}}^{2}$&$\hbar_{\text{ren}}^{3}$&$\hbar_{\text{ren}}^{4}$&$\hbar_{\text{ren}}^{5}$\\
\hline\hline
$\hbar(\hbar_{\text{ren}})$&$0$&$1$&$- \frac{7}{2}$&$6$&$- \frac{33}{2}$&$- \frac{345}{16}$\\
\hline
$z_{\varphi_c^2}(\hbar_{\text{ren}})$&$1$&$ \frac{1}{2}$&$- \frac{1}{2}$&$- \frac{1}{4}$&$-2$&$- \frac{29}{2}$\\
\hline
$z_{\varphi_c^3}(\hbar_{\text{ren}})$&$1$&$-1$&$- \frac{1}{2}$&$-4$&$-29$&$- \frac{545}{2}$\\
\hline
\end{tabular}
\subcaption{Table of the first coefficients of the renormalization quantities in $\varphi^3$-theory.}
\label{tab:phi3ren}
\end{subtable}
\begin{subtable}[c]{\textwidth}
\centering
\tiny
\def\arraystretch{1.5}
\begin{tabular}{|c||c||c|c|c|c|c|c|}
\hline
&prefactor&$\hbar_{\text{ren}}^{0}$&$\hbar_{\text{ren}}^{1}$&$\hbar_{\text{ren}}^{2}$&$\hbar_{\text{ren}}^{3}$&$\hbar_{\text{ren}}^{4}$&$\hbar_{\text{ren}}^{5}$\\
\hline\hline
$\left(\asyOpV{\frac23}{0}{\hbar_{\text{ren}}} \hbar \right)(\hbar_{\text{ren}})$&$e^{- \frac{10}{3}} \frac{\hbar^{-1}}{2 \pi}$&$-16$&$ \frac{412}{3}$&$- \frac{3200}{9}$&$ \frac{113894}{81}$&$ \frac{765853}{243}$&$ \frac{948622613}{14580}$\\
\hline
$\left(\asyOpV{\frac23}{0}{\hbar_{\text{ren}}} z_{\varphi_c^2} \right)(\hbar_{\text{ren}})$&$e^{- \frac{10}{3}} \frac{\hbar^{-1}}{2 \pi}$&$-4$&$ \frac{64}{3}$&$ \frac{76}{9}$&$ \frac{13376}{81}$&$ \frac{397486}{243}$&$ \frac{284898947}{14580}$\\
\hline
$\left(\asyOpV{\frac23}{0}{\hbar_{\text{ren}}} z_{\varphi_c^3} \right)(\hbar_{\text{ren}})$&$e^{- \frac{10}{3}} \frac{\hbar^{-2}}{2 \pi}$&$-8$&$ \frac{128}{3}$&$ \frac{152}{9}$&$ \frac{26752}{81}$&$ \frac{794972}{243}$&$ \frac{569918179}{14580}$\\
\hline
\end{tabular}
\subcaption{Table of the first coefficients of the asymptotics of the renormalization quantities in $\varphi^3$-theory.}
\label{tab:phi3renasymp}
\end{subtable}
\caption{Renormalization constants in $\varphi^3$-theory.}
\end{table}%
By application of the $\asyOp$-derivative, the asymptotics of $z_{\varphi_c^3}$ are:
\begin{align} \begin{split} \label{eqn:asymptoticsZ} \asyOpV{\frac23}{0}{\hbar_\text{ren}} z_{\varphi_c^3}(\hbar_\text{ren}) &= -\left[\left( \partial_{\varphi_c}^3 G^{\varphi^3} \big|_{\varphi_c = 0}(\hbar)\right)^{-2} e^{\frac23 \left(\frac{1}{\hbar_{\text{ren}}} - \frac{1}{\hbar}\right)} \left( \asyOpV{\frac23}{0}{\hbar} \partial_{\varphi_c}^3 G^{\varphi^3} \big|_{\varphi_c = 0}(\hbar)   \vphantom{ \left(\partial_\hbar \partial_{\varphi_c}^3 G^{\varphi^3} \big|_{\varphi_c = 0}(\hbar)\right) \frac{ \asyOpV{\frac23}{0}{\hbar} \left(\hbar \alpha(\hbar) \right) } { \partial_\hbar \left( \hbar \alpha(\hbar)\right) } }   \right. \right. \\ &- \left. \left.   \vphantom{ \left( \partial_{\varphi_c}^3 G^{\varphi^3} \big|_{\varphi_c = 0}(\hbar)\right)^{-2} e^{\frac23 \left(\frac{1}{\hbar_{\text{ren}}} - \frac{1}{\hbar}\right)} \left( \asyOpV{\frac23}{0}{\hbar} \partial_{\varphi_c}^3 G^{\varphi^3} \big|_{\varphi_c = 0}(\hbar) \right. }   \left(\partial_\hbar \partial_{\varphi_c}^3 G^{\varphi^3} \big|_{\varphi_c = 0}(\hbar)\right) \frac{ \asyOpV{\frac23}{0}{\hbar} \left(\hbar \alpha(\hbar) \right) } { \partial_\hbar \left( \hbar \alpha(\hbar)\right) } \right) \right]_{\hbar = \hbar(\hbar_\text{ren})}. \end{split} \end{align}
and for $z_{\varphi_c^2}$ analogously. Some coefficients of the renormalization constants and their asymptotics are given in Tables \ref{tab:phi3ren} and \ref{tab:phi3renasymp}. 

It was observed by Cvitanović et al.\ \cite{cvitanovic1978number} that $1-z_{\varphi_c^3}(\hbar_\text{ren})$ is the generating function of \textit{skeleton} diagrams. Skeleton diagrams are 1PI diagrams without any superficially divergent subgraphs. 
Utilizing the Hopf algebra of Feynman graphs and the interpretation of subgraph structures as algebraic lattices, this was proven by the author in \cite{borinsky2016generating}. 
Applying Definition \ref{def:asymp} directly gives a complete asymptotic expansion of the coefficients of $1-z_{\varphi_c^3}(\hbar_\text{ren})$,
\begin{gather*} [\hbar_\text{ren}^n]( 1-z_{\varphi_c^3}(\hbar_\text{ren})) = \sum_{k=0}^{R-1} c_{k} \left(\frac{2}{3}\right)^{-n-2+k} \Gamma(n+2-k) \\ + \bigO \left( \left(\frac{2}{3}\right)^{-n} \Gamma(n+2-R)\right) \intertext{for $R \geq 0$, where $c_k = [\hbar_\text{ren}^k] \left(-\hbar_\text{ren}^2 \asyOpV{\frac23}{0}{\hbar_\text{ren}} z_{\varphi_c^3}(\hbar_\text{ren}) \right)$. Or more explicit for large $n$,} [\hbar_\text{ren}^n]( 1-z_{\varphi_c^3}(\hbar_\text{ren})) \underset{n\rightarrow \infty}{\sim} \frac{e^{-\frac{10}{3}}}{2\pi} \left(\frac{2}{3}\right)^{-n-2} \Gamma(n+2) \left( 8 - \frac{2}{3}\frac{128}{3} \frac{1}{n+1} \right. \\ \left. - \left(\frac{2}{3}\right)^2 \frac{152}{9}\frac{1}{n(n+1)} - \left(\frac{2}{3}\right)^3 \frac{26752}{81}\frac{1}{(n-1)n(n+1)} +\ldots \right). \end{gather*}
The constant coefficient of $\asyOpV{\frac23}{0}{\hbar_\text{ren}} z_{\varphi_c^3}(\hbar_\text{ren})$ was also given in \cite{cvitanovic1978number}.

Using the first coefficients of $\asyOpV{\frac23}{0}{\hbar} \partial_{\varphi_c}^3 \G^{\varphi^3}\big|_{\varphi_c=0}$ and $-\asyOpV{\frac23}{0}{\hbar_\text{ren}} z_{\varphi_c^3}(\hbar_\text{ren})$, we may deduce that the proportion of skeleton diagrams in the set of all proper vertex diagrams is,
\begin{align*} \frac{ \frac{e^{-\frac{10}{3}}}{2\pi} \left(\frac{2}{3}\right)^{-n-2} \Gamma(n+2) \left( 8 - \frac23 \frac{128}{3} \frac{1}{n+1} + \ldots \right)}{ \frac{e^{-1}}{2\pi} \left(\frac{2}{3}\right)^{-n-2} \Gamma(n+2) \left( 8 - \frac23 \frac{100}{3} \frac{1}{n+1} + \ldots \right) }= e^{-\frac{7}{3}}\left(1-\frac{56}{9}\frac{1}{n} + \bigO\left(\frac{1}{n^2}\right)\right). \end{align*}
A random 1PI diagram in $\varphi^3$-theory is therefore a skeleton diagram with probability 
\begin{align*} e^{-\frac{7}{3}}\left(1-\frac{56}{9}\frac{1}{n} + \bigO\left(\frac{1}{n^2}\right)\right), \end{align*}
where $n$ is the loop number.

All results obtained in this section can be translated to the respective asymptotic results on cubic graphs. For instance, $\frac{1}{3!}(1-z_{\varphi_c^3}(\hbar_\text{ren}))$ is the generating function of cyclically four-connected graphs with one distinguished vertex. In \cite{wormald1985enumeration}, the first coefficient of the asymptotic expansion of those graphs is given, which agrees with our expansion. More connections to graph counting problems of this kind will be discussed in a future publication \cite{borinsky2017graph}.
\subsection{$\varphi^4$-theory}

In $\varphi^4$-theory the partition function is given by the formal integral,
\begin{align*} Z^{\varphi^4}(\hbar, j) &:= \int \frac{dx}{\sqrt{2 \pi \hbar}} e^{\frac{1}{\hbar} \left( -\frac{x^2}{2} + \frac{x^4}{4!} + x j \right) } \\ &= 1 + \frac{j^2}{2\hbar} + \frac{j^4}{4! \hbar} + \frac58 j^2 + \frac{1155}{128} j^4 + \frac{1}{8}\hbar + \ldots \end{align*}
In this case, it is not possible to completely absorb the $j$ dependents into the argument of $Z^{\varphi^4}_0$. We only can do so up to fourth order in $j$, which is still sufficient to obtain the generating functions which are necessary to calculate the renormalization constants:
\begin{align*} Z^{\varphi^4}(\hbar, j) &= \int \frac{dx}{\sqrt{2 \pi \hbar}} e^{\frac{1}{\hbar} \left( -\frac{x^2}{2} + \frac{x^4}{4!} + \hbar \log \cosh \frac{x j}{\hbar} \right) } \\ &= \int \frac{dx}{\sqrt{2 \pi \hbar}} e^{\frac{1}{\hbar} \left( -\frac{x^2}{2} + \frac{x^4}{4!} + \hbar \left( \frac{1}{2} \left( \frac{x j}{\hbar} \right)^2 - \frac{1}{12} \left(\frac{x j}{\hbar} \right)^4 + \bigO(j^6) \right) \right) } \\ &= \int \frac{dx}{\sqrt{2 \pi \hbar}} e^{\frac{1}{\hbar} \left( -\left(1 - \frac{j^2}{\hbar}\right) \frac{x^2}{2} + \left(1 - 2 \frac{j^4}{\hbar^3} \right)\frac{x^4}{4!} \right) } + \bigO(j^6) \\ &= \frac{1}{\sqrt{1 - \frac{j^2}{\hbar}}} Z^{\varphi^4}_0 \left(\hbar \frac{1 - 2 \frac{j^4}{\hbar^3}}{\left(1 - \frac{j^2}{\hbar}\right)^2}\right) + \bigO(j^6) \end{align*}
where $Z^{\varphi^4}_0(\hbar) := Z^{\varphi^4}(\hbar, 0) = \Fop\left[-\frac{x^2}{2} + \frac{x^4}{4!} \right](\hbar)$.

\begin{table}
\begin{subtable}[c]{\textwidth}
\centering
\tiny
\def\arraystretch{1.5}
\begin{tabular}{|c||c||c|c|c|c|c|c|}
\hline
&prefactor&$\hbar^{0}$&$\hbar^{1}$&$\hbar^{2}$&$\hbar^{3}$&$\hbar^{4}$&$\hbar^{5}$\\
\hline\hline
$\partial_j^{0} Z^{\varphi^4} \big|_{j=0}$&$\hbar^{0}$&$1$&$ \frac{1}{8}$&$ \frac{35}{384}$&$ \frac{385}{3072}$&$ \frac{25025}{98304}$&$ \frac{1616615}{2359296}$\\
\hline
$\partial_j^{2} Z^{\varphi^4} \big|_{j=0}$&$\hbar^{-1}$&$1$&$ \frac{5}{8}$&$ \frac{105}{128}$&$ \frac{5005}{3072}$&$ \frac{425425}{98304}$&$ \frac{11316305}{786432}$\\
\hline
$\partial_j^{4} Z^{\varphi^4} \big|_{j=0}$&$\hbar^{-2}$&$3$&$ \frac{35}{8}$&$ \frac{1155}{128}$&$ \frac{25025}{1024}$&$ \frac{8083075}{98304}$&$ \frac{260275015}{786432}$\\
\hline
\end{tabular}
\subcaption{The first coefficients of the bivariate generating function $Z^{\varphi^4}(\hbar, j)$.}
\label{tab:Zphi4}
\end{subtable}
\begin{subtable}[c]{\textwidth}
\centering
\tiny
\def\arraystretch{1.5}
\begin{tabular}{|c||c||c|c|c|c|c|c|}
\hline
&prefactor&$\hbar^{0}$&$\hbar^{1}$&$\hbar^{2}$&$\hbar^{3}$&$\hbar^{4}$&$\hbar^{5}$\\
\hline\hline
$\asyOpV{\frac32}{0}{\hbar} \partial_j^{0} Z^{\varphi^4} \big|_{j=0}$&$\frac{\hbar^{0}}{\sqrt{2}\pi}$&$1$&$- \frac{1}{8}$&$ \frac{35}{384}$&$- \frac{385}{3072}$&$ \frac{25025}{98304}$&$- \frac{1616615}{2359296}$\\
\hline
$\asyOpV{\frac32}{0}{\hbar} \partial_j^{2} Z^{\varphi^4} \big|_{j=0}$&$\frac{\hbar^{-2}}{\sqrt{2}\pi}$&$6$&$ \frac{1}{4}$&$- \frac{5}{64}$&$ \frac{35}{512}$&$- \frac{5005}{49152}$&$ \frac{85085}{393216}$\\
\hline
$\asyOpV{\frac32}{0}{\hbar} \partial_j^{4} Z^{\varphi^4} \big|_{j=0}$&$\frac{\hbar^{-4}}{\sqrt{2}\pi}$&$36$&$- \frac{9}{2}$&$ \frac{9}{32}$&$- \frac{35}{256}$&$ \frac{1155}{8192}$&$- \frac{15015}{65536}$\\
\hline
\end{tabular}
\subcaption{The first coefficients of the bivariate generating function $\asyOpV{\frac32}{0}{\hbar}Z^{\varphi^4}(\hbar, j)$.}
\label{tab:Zphi4asymp}
\end{subtable}
\caption{Partition function in $\varphi^4$-theory.}
\end{table}

The asymptotics of $Z^{\varphi^4}_0$ can be calculated directly by using Corollary \ref{crll:comb_int_asymp}: The action $\Sact(x) = -\frac{x^2}{2} + \frac{x^4}{4!}$ is real analytic and all critical points lie on the real axis. The non-trivial critical points of $\Sact(x) =-\frac{x^2}{2} + \frac{x^4}{4!}$ are $\tau_{\pm} = \pm \sqrt{3!}$. The value at the critical points is $\Sact(\tau_{\pm})= - \frac32$. These are the dominant singularities which both contribute. Therefore, $A= \frac{3}{2}$ and $\Sact(\tau_{\pm})-\Sact(x+\tau_{\pm})= -x^2 \pm \frac{x^3}{\sqrt{3!}} + \frac{x^4}{4!}$. 
\begin{align*} \asyOpV{\frac32}{0}{\hbar} Z^{\varphi^4}_0 (\hbar) &= \frac{1}{2 \pi} \left(\Fop\left[-x^2 + \frac{x^3}{\sqrt{3!}} + \frac{x^4}{4!}\right](-\hbar) + \Fop\left[-x^2 - \frac{x^3}{\sqrt{3!}} + \frac{x^4}{4!}\right](-\hbar) \right) \\ &= \frac{1}{\pi} \Fop\left[-x^2 + \frac{x^3}{\sqrt{3!}} + \frac{x^4}{4!}\right](-\hbar) \\ &= \frac{1}{\sqrt{2}\pi} \left( 1 - \frac{1}{8} \hbar + \frac{35}{384} \hbar^{2} - \frac{385}{3072} \hbar^{3} + \frac{25025}{98304} \hbar^{4}  + \ldots\right) \end{align*}
The combinatorial interpretation of this sequence is the following: Diagrams with three or four-valent vertices are weighted with a $\lambda_3=\sqrt{3!}$ for each three-valent vertex, $\lambda_4=1$ for each four-valent vertex, a factor $a=\frac12$ for each edge and a $(-1)$ for every loop in accordance to Proposition \ref{prop:diagraminterpretation}. The whole sequence is preceded by a factor of $\sqrt{a}=\frac{1}{\sqrt{2}}$ as required by the definition of $\Fop$.

The asymptotics for $Z^{\varphi^4}(\hbar, j)$ can again be obtained by utilizing the chain rule for $\asyOp$:
\begin{align*} \asyOpV{\frac32}{0}{\hbar} Z^{\varphi^4}(\hbar, j) &= \frac{1}{\sqrt{1 - \frac{j^2}{\hbar}}} \left[ e^{ \frac32 \left( \frac{1}{\hbar} - \frac{1}{\widetilde \hbar} \right) } (\asyOpV{\frac32}{0}{\widetilde \hbar} Z^{\varphi^4}_0 )\left(\widetilde \hbar \right) \right]_{\widetilde \hbar = \hbar \frac{1 - 2 \frac{j^4}{\hbar^3}}{(1 - \frac{j^2}{\hbar})^2}} + \bigO(j^6) \end{align*}
The first coefficients of $Z^{\varphi^4}(\hbar, j)$ are given in Table \ref{tab:Zphi4} and the respective asymptotic expansions in Table \ref{tab:Zphi4asymp}.

The generating function of the connected graphs is given by,
\begin{align*} W^{\varphi^4}(\hbar, j) &:= \hbar \log Z^{\varphi^4}(\hbar, j) \\ &= \frac12 \hbar \log \frac{1}{1-\frac{j^2}{\hbar}} + \hbar \log Z^{\varphi^4}_0 \left(\hbar \frac{1 - 2 \frac{j^4}{\hbar^3}}{(1 - \frac{j^2}{\hbar})^2}\right) + \bigO(j^6) \end{align*}
\begin{table}
\begin{subtable}[c]{\textwidth}
\centering
\tiny
\def\arraystretch{1.5}
\begin{tabular}{|c||c|c|c|c|c|c|}
\hline
&$\hbar^{0}$&$\hbar^{1}$&$\hbar^{2}$&$\hbar^{3}$&$\hbar^{4}$&$\hbar^{5}$\\
\hline\hline
$\partial_j^{0} W^{\varphi^4} \big|_{j=0}$&$0$&$0$&$ \frac{1}{8}$&$ \frac{1}{12}$&$ \frac{11}{96}$&$ \frac{17}{72}$\\
\hline
$\partial_j^{2} W^{\varphi^4} \big|_{j=0}$&$1$&$ \frac{1}{2}$&$ \frac{2}{3}$&$ \frac{11}{8}$&$ \frac{34}{9}$&$ \frac{619}{48}$\\
\hline
$\partial_j^{4} W^{\varphi^4} \big|_{j=0}$&$1$&$ \frac{7}{2}$&$ \frac{149}{12}$&$ \frac{197}{4}$&$ \frac{15905}{72}$&$ \frac{107113}{96}$\\
\hline
\end{tabular}
\subcaption{The first coefficients of the bivariate generating function $W^{\varphi^4}(\hbar, j)$.}
\label{tab:Wphi4}
\end{subtable}
\begin{subtable}[c]{\textwidth}
\centering
\tiny
\def\arraystretch{1.5}
\begin{tabular}{|c||c||c|c|c|c|c|c|}
\hline
&prefactor&$\hbar^{0}$&$\hbar^{1}$&$\hbar^{2}$&$\hbar^{3}$&$\hbar^{4}$&$\hbar^{5}$\\
\hline\hline
$\asyOpV{\frac32}{0}{\hbar} \partial_j^{0} W^{\varphi^4} \big|_{j=0}$&$\frac{\hbar^{1}}{\sqrt{2} \pi}$&$1$&$- \frac{1}{4}$&$ \frac{1}{32}$&$- \frac{89}{384}$&$ \frac{353}{6144}$&$- \frac{10623}{8192}$\\
\hline
$\asyOpV{\frac32}{0}{\hbar} \partial_j^{2} W^{\varphi^4} \big|_{j=0}$&$\frac{\hbar^{-1}}{\sqrt{2} \pi}$&$6$&$- \frac{3}{2}$&$- \frac{13}{16}$&$- \frac{73}{64}$&$- \frac{2495}{1024}$&$- \frac{84311}{12288}$\\
\hline
$\asyOpV{\frac32}{0}{\hbar} \partial_j^{4} W^{\varphi^4} \big|_{j=0}$&$\frac{\hbar^{-3}}{\sqrt{2} \pi}$&$36$&$-45$&$- \frac{111}{8}$&$- \frac{687}{32}$&$- \frac{25005}{512}$&$- \frac{293891}{2048}$\\
\hline
\end{tabular}
\subcaption{The first coefficients of the bivariate generating function $\asyOpV{\frac32}{0}{\hbar}W^{\varphi^4}(\hbar, j)$.}
\label{tab:Wphi4asymp}
\end{subtable}
\caption{Free energy in $\varphi^4$-theory.}
\end{table}
and the asymptotics are,
\begin{align*} \asyOpV{\frac32}{0}{\hbar} W^{\varphi^4}(\hbar, j) &= \hbar \left[ e^{ \frac32 \left( \frac{1}{\hbar} - \frac{1}{\widetilde \hbar} \right) } \asyOpV{\frac32}{0}{\hbar} \log Z^{\varphi^4}_0\left(\widetilde \hbar \right) \right]_{\widetilde \hbar = \hbar \frac{1 - 2 \frac{j^4}{\hbar^3}}{(1 - \frac{j^2}{\hbar})^2}} + \bigO(j^6). \end{align*}
The first coefficients of the original generating function and the generating function for the asymptotics are given in Tables \ref{tab:Wphi4} and \ref{tab:Wphi4asymp}.

The effective action, which is the Legendre transform of $W^{\varphi^4}$,
\begin{align*} G^{\varphi^4}(\hbar, \varphi_c) &= W^{\varphi^4}(\hbar, j) - j \varphi_c \end{align*}
where $\varphi_c := \partial_j W^{\varphi^4}$,
is easy to handle in this case, as there are no tadpole diagrams. Derivatives of $G^{\varphi^4}(\hbar, \varphi_c) $ with respect to $\varphi_c$ can be calculated by exploiting that $\varphi_c=0$ implies $j=0$.  
\begin{table}
\begin{subtable}[c]{\textwidth}
\centering
\tiny
\def\arraystretch{1.5}
\begin{tabular}{|c||c|c|c|c|c|c|}
\hline
&$\hbar^{0}$&$\hbar^{1}$&$\hbar^{2}$&$\hbar^{3}$&$\hbar^{4}$&$\hbar^{5}$\\
\hline\hline
$\partial_{\varphi_c}^{0} G^{\varphi^4} \big|_{\varphi_c=0}$&$0$&$0$&$ \frac{1}{8}$&$ \frac{1}{12}$&$ \frac{11}{96}$&$ \frac{17}{72}$\\
\hline
$\partial_{\varphi_c}^{2} G^{\varphi^4} \big|_{\varphi_c=0}$&$-1$&$ \frac{1}{2}$&$ \frac{5}{12}$&$ \frac{5}{6}$&$ \frac{115}{48}$&$ \frac{625}{72}$\\
\hline
$\partial_{\varphi_c}^{4} G^{\varphi^4} \big|_{\varphi_c=0}$&$1$&$ \frac{3}{2}$&$ \frac{21}{4}$&$ \frac{45}{2}$&$ \frac{1775}{16}$&$ \frac{4905}{8}$\\
\hline
\end{tabular}
\subcaption{The first coefficients of the bivariate generating function $G^{\varphi^4}(\hbar, j)$.}
\label{tab:Gphi4}
\end{subtable}
\begin{subtable}[c]{\textwidth}
\centering
\tiny
\def\arraystretch{1.5}
\begin{tabular}{|c||c||c|c|c|c|c|c|}
\hline
&prefactor&$\hbar^{0}$&$\hbar^{1}$&$\hbar^{2}$&$\hbar^{3}$&$\hbar^{4}$&$\hbar^{5}$\\
\hline\hline
$\asyOpV{\frac32}{0}{\hbar} \partial_{\varphi_c}^{0} G^{\varphi^4} \big|_{\varphi_c=0}$&$ \frac{\hbar^{1}}{\sqrt{2} \pi}$&$1$&$- \frac{1}{4}$&$ \frac{1}{32}$&$- \frac{89}{384}$&$ \frac{353}{6144}$&$- \frac{10623}{8192}$\\
\hline
$\asyOpV{\frac32}{0}{\hbar} \partial_{\varphi_c}^{2} G^{\varphi^4} \big|_{\varphi_c=0}$&$ \frac{\hbar^{-1}}{\sqrt{2} \pi}$&$6$&$- \frac{15}{2}$&$- \frac{45}{16}$&$- \frac{445}{64}$&$- \frac{22175}{1024}$&$- \frac{338705}{4096}$\\
\hline
$\asyOpV{\frac32}{0}{\hbar} \partial_{\varphi_c}^{4} G^{\varphi^4} \big|_{\varphi_c=0}$&$ \frac{\hbar^{-3}}{\sqrt{2} \pi}$&$36$&$-117$&$ \frac{369}{8}$&$- \frac{1671}{32}$&$- \frac{103725}{512}$&$- \frac{1890555}{2048}$\\
\hline
\end{tabular}
\subcaption{The first coefficients of the bivariate generating function $\asyOpV{\frac32}{0}{\hbar}G^{\varphi^4}(\hbar, j)$.}
\label{tab:Gphi4asymp}
\end{subtable}
\caption{Effective action in $\varphi^4$-theory.}
\end{table}
For instance,
\begin{align*} \left. G^{\varphi^4} \right|_{\varphi_c=0} &= W^{\varphi^4}(\hbar, 0) \\  \left. \partial_{\varphi_c}^2 G^{\varphi^4} \right|_{\varphi_c=0} &= \left. - \frac{\partial j}{\partial \varphi_c} \right|_{\varphi_c=0} = - \frac{1}{\left. \partial_j^2 W^{\varphi^4} \right|_{j=0}} \\          \left. \partial_{\varphi_c}^4 G^{\varphi^4} \right|_{\varphi_c=0} &= \frac{\left. \partial_j^4 W^{\varphi^4} \right|_{j=0}}{ \left( \left. \partial_j^2 W^{\varphi^4} \right|_{j=0} \right)^4}                     \end{align*}
The calculation of the asymptotic expansions can be performed by applying the $\asyOp$-derivative on these expressions and using the Leibniz and chain rules to write them in terms of the asymptotics of $W^{\varphi^4}$.
Some coefficients of $G^{\varphi^4}(\hbar, j)$ are listed in Table \ref{tab:Gphi4} with the respective asymptotics in Table \ref{tab:Gphi4asymp}.

\begin{table}
\begin{subtable}[c]{\textwidth}
\centering
\tiny
\def\arraystretch{1.5}
\begin{tabular}{|c||c|c|c|c|c|c|}
\hline
&$\hbar_{\text{ren}}^{0}$&$\hbar_{\text{ren}}^{1}$&$\hbar_{\text{ren}}^{2}$&$\hbar_{\text{ren}}^{3}$&$\hbar_{\text{ren}}^{4}$&$\hbar_{\text{ren}}^{5}$\\
\hline\hline
$\hbar(\hbar_{\text{ren}})$&$0$&$1$&$- \frac{5}{2}$&$ \frac{25}{6}$&$- \frac{15}{2}$&$ \frac{25}{3}$\\
\hline
$z_{\varphi^2}(\hbar_{\text{ren}})$&$1$&$ \frac{1}{2}$&$- \frac{7}{12}$&$ \frac{1}{8}$&$- \frac{9}{16}$&$- \frac{157}{96}$\\
\hline
$z_{\varphi^4}(\hbar_{\text{ren}})$&$1$&$- \frac{3}{2}$&$ \frac{3}{4}$&$- \frac{11}{8}$&$- \frac{45}{16}$&$- \frac{499}{32}$\\
\hline
\end{tabular}
\subcaption{Table of the first coefficients of the renormalization quantities in $\varphi^4$-theory.}
\label{tab:phi4ren}
\end{subtable}
\begin{subtable}[c]{\textwidth}
\centering
\tiny
\def\arraystretch{1.5}
\begin{tabular}{|c||c||c|c|c|c|c|c|}
\hline
&prefactor&$\hbar_{\text{ren}}^{0}$&$\hbar_{\text{ren}}^{1}$&$\hbar_{\text{ren}}^{2}$&$\hbar_{\text{ren}}^{3}$&$\hbar_{\text{ren}}^{4}$&$\hbar_{\text{ren}}^{5}$\\
\hline\hline
$\left(\asyOpV{\frac32}{0}{\hbar_{\text{ren}}} \hbar \right)(\hbar_{\text{ren}})$&$e^{- \frac{15}{4}} \frac{\hbar^{-2}}{\sqrt{2} \pi}$&$-36$&$ \frac{387}{2}$&$- \frac{13785}{32}$&$ \frac{276705}{256}$&$- \frac{8524035}{8192}$&$ \frac{486577005}{65536}$\\
\hline
$\left(\asyOpV{\frac32}{0}{\hbar_{\text{ren}}} z_{\varphi^2} \right)(\hbar_{\text{ren}})$&$e^{- \frac{15}{4}} \frac{\hbar^{-2}}{\sqrt{2} \pi}$&$-18$&$ \frac{219}{4}$&$ \frac{567}{64}$&$ \frac{49113}{512}$&$ \frac{8281053}{16384}$&$ \frac{397802997}{131072}$\\
\hline
$\left(\asyOpV{\frac32}{0}{\hbar_{\text{ren}}} z_{\varphi^4} \right)(\hbar_{\text{ren}})$&$e^{- \frac{15}{4}} \frac{\hbar^{-3}}{\sqrt{2} \pi}$&$-36$&$ \frac{243}{2}$&$- \frac{729}{32}$&$ \frac{51057}{256}$&$ \frac{7736445}{8192}$&$ \frac{377172477}{65536}$\\
\hline
\end{tabular}
\subcaption{Table of the first coefficients of the asymptotics of the renormalization quantities in $\varphi^4$-theory.}
\label{tab:phi4renasymp}
\end{subtable}
\caption{Renormalization constants in $\varphi^4$-theory.}
\end{table}

Using the procedure established in Section \ref{sec:hopfalgebra}, the renormalization constants can be calculated by defining the invariant charge as 
\begin{align*} \alpha(\hbar):= \left( \frac{ \left( \left. \partial_{\varphi_c}^4 G^{\varphi^4} \right|_{\varphi_c=0}(\hbar) \right)^{\frac12} }{ \left. \partial_{\varphi_c}^2 G^{\varphi^4} \right|_{\varphi_c=0} (\hbar) } \right)^2. \end{align*}
Having defined the invariant charge, the calculation of the renormalization constants is completely equivalent to the calculation for $\varphi^3$-theory. 
The results are given in Table \ref{tab:phi4ren} and \ref{tab:phi4renasymp}.

Argyres et al.\ \cite{argyres2001zero} remarked that $1-z_{\varphi_c^4}(\hbar_\text{ren})$ does not count the number of skeleton diagrams in $\varphi^4$-theory as might be expected from analogy to $\varphi^3$-theory. The fact that this cannot by the case can be seen from the second term of $z_{\varphi_c^4}(\hbar_\text{ren})$ which is positive (see Table \ref{tab:phi4ren}), destroying a counting function interpretation of $1-z_{\varphi_c^4}(\hbar_\text{ren})$. In \cite{borinsky2016generating} it was proven by the author that additionally to skeleton diagrams, also chains of one loop diagrams, $\begin{tikzpicture}[x=2ex,y=2ex,baseline={([yshift=-.5ex]current bounding box.center)}] \coordinate (v0); \coordinate [right=.5 of v0] (vm1); \coordinate [right=.5 of vm1] (v1); \node [right=.5 of v1] (v2) {$\ldots$}; \coordinate [right=.5 of v1] (v1m); \coordinate [right=.01 of v2] (v2m); \coordinate [right=.5 of v2m] (v3); \coordinate [right=.5 of v3] (vm2); \coordinate [right=.5 of vm2] (v4); \coordinate [above left=.5 of v0] (i0); \coordinate [below left=.5 of v0] (i1); \coordinate [above right=.5 of v4] (o0); \coordinate [below right=.5 of v4] (o1); \draw (vm1) circle(.5); \draw (vm2) circle(.5); \draw (i0) -- (v0); \draw (i1) -- (v0); \draw (o0) -- (v4); \draw (o1) -- (v4); \draw ([shift=(90:.5)]v1m) arc (90:270:.5); \draw ([shift=(-90:.5)]v2m) arc (-90:90:.5); \filldraw (v0) circle(1pt); \filldraw (v1) circle(1pt); \filldraw (v3) circle(1pt); \filldraw (v4) circle(1pt); \end{tikzpicture}$ contribute to the generating function $z_{\varphi_c^4}(\hbar_\text{ren})$. The chains of one loop bubbles contribute with alternating sign.

The generating function of skeleton diagrams in $\varphi^4$ theory is given by \cite{borinsky2016generating},
\begin{align} 1 - z_{\varphi_c^4}(\hbar_\text{ren}) + 3 \sum \limits_{n\geq 2} (-1)^n \left( \frac{\hbar_\text{ren} }{2} \right)^n, \end{align}
where we needed to include a factor of $4!$ to convert from \cite{borinsky2016generating} to the present notation of leg-fixed diagrams.
The first coefficients are,
\begin{gather*} 0, \frac{3}{2},0,1,3, \frac{31}{2}, \frac{529}{6}, \frac{2277}{4}, \frac{16281}{4}, \frac{254633}{8}, \frac{2157349}{8}, \frac{39327755}{16}, \frac{383531565}{16}, \ldots \end{gather*}
The asymptotic expansion of this sequence agrees with the one of $( 1-z_{\varphi_c^4}(\hbar_\text{ren}))$,
\begin{align*} [\hbar_\text{ren}^n]( 1-z_{\varphi_c^4}(\hbar_\text{ren})) \underset{n\rightarrow \infty}{\sim} \frac{e^{-\frac{15}{4}}}{\sqrt{2}\pi} \left(\frac{2}{3}\right)^{n+3} \Gamma(n+3) \left( 36 - \frac{3}{2}\frac{243}{2} \frac{1}{n+2} \right. \\ \left. + \left(\frac{3}{2}\right)^2 \frac{729}{32}\frac{1}{(n+1)(n+2)} - \left(\frac{3}{2}\right)^3 \frac{51057}{256}\frac{1}{n(n+1)(n+2)} +\ldots \right). \end{align*}
More coefficients are given in Table \ref{tab:phi4renasymp}.
\subsection{QED-type theories}
We will discuss more general theories with two types of `particles', which are of QED-type in the sense that we can interpret one particle as boson and the other as fermion with a fermion-fermion-boson vertex. 

Consider the partition function
\begin{align*} Z(\hbar, j, \eta) &:= \int_\R \frac{dx}{\sqrt{2 \pi \hbar}} \int_\C\frac{dz d\bar z}{\pi \hbar} e^{\frac{1}{\hbar} \left( - \frac{x^2}{2} - |z|^2 + x |z|^2 + j x + \eta \bar z + \bar \eta z \right)}. \end{align*}
The Gaussian integration over $z$ and $\bar z$ can be performed immediately,
\begin{align} \begin{split} \label{eqn:qedprototype} Z(\hbar, j, \eta) &= \int_\R \frac{dx}{\sqrt{2 \pi \hbar}} \frac{dz d\bar z}{\pi \hbar} e^{\frac{1}{\hbar} \left( - \frac{x^2}{2} - (1-x)\left|z- \frac{\eta}{1-x}\right|^2 + j x + \frac{|\eta|^2}{1-x} \right)} \\ &= \int_\R \frac{dx}{\sqrt{2 \pi \hbar}} \frac{1}{1-x} e^{\frac{1}{\hbar} \left( - \frac{x^2}{2} + j x + \frac{|\eta|^2}{1-x} \right)} \\ &= \int_\R \frac{dx}{\sqrt{2 \pi \hbar}} e^{\frac{1}{\hbar} \left( - \frac{x^2}{2} + j x + \frac{|\eta|^2}{1-x} + \hbar \log \frac{1}{1-x} \right)} \end{split} \end{align}
Note that the transformation above has not been justified rigorously in the scope of formal integrals. A more powerful definition of `formal integrals' can be made which includes the above manipulations \cite{borinsky2017graph}. Here, it is sufficient to consider the last line in eq.\ \eqref{eqn:qedprototype} as input for our mathematical machinery and the previous as a physical motivation.
The combinatorial interpretation of this expression is the following: $\frac{|\eta|^2}{1-x}$ generates a fermion propagator line and $\hbar \log \frac{1}{1-x}$ generates a fermion loop, both with an arbitrary number of boson lines attached. 
The interpretation of the $j x$ and $-\frac{x^2}{2}$ terms are standard.

We will consider the following variations of this partition function:
\begin{labeling}{(Quenched QED)}
\item[(QED)]
In quantum electrodynamics (QED) all fermion loops have an even number of fermion edges, as Furry's theorem guarantees that diagrams with odd fermion loops vanish. The modification,
\begin{align*} \hbar \log \frac{1}{1-x} \rightarrow \hbar \frac12 \left( \log \frac{1}{1-x} + \log \frac{1}{1+x} \right) = \frac12\hbar \log \frac{1}{1-x^2}, \end{align*}
results in the required partition function \cite{cvitanovic1978number, itzykson2006quantum}.
\item[(Quenched QED)]
In the quenched approximation of QED, fermion loops are neglected altogether. This corresponds to the modification $\hbar \log \frac{1}{1-x} \rightarrow 0$.
\item[(Yukawa)]
We will also consider the integral without modification. Also odd fermion loops are allowed in this case. This can be seen as the zero-dimensional version of Yukawa theory.
\end{labeling}
\subsubsection{QED}
\label{sec:QED}

In QED the partition function in eq.\ \eqref{eqn:qedprototype} must be modified to
\begin{align*} Z^\text{QED}(\hbar, j, \eta):= \int_\R \frac{dx}{\sqrt{2 \pi \hbar}} e^{\frac{1}{\hbar} \left( - \frac{x^2}{2} + j x + \frac{|\eta|^2}{1-x} + \frac12 \hbar \log \frac{1}{1-x^2} \right)}. \end{align*}
As in $\varphi^4$-theory, we hide the dependence on the sources inside a composition:
\begin{align*} Z^\text{QED}(\hbar, j, \eta)&:= \left( 1 + \frac{j^2}{2\hbar} + \frac{|\eta|^2}{\hbar} \right)  Z_0^\text{QED} \left( \frac{\hbar \left( 1 + \frac{2 j |\eta|^2}{\hbar^2} \right)}{\left(1 - \frac{2 |\eta|^2}{\hbar}\right)\left( 1-\frac{j^2}{\hbar} \right)} \right) \\ &+ \bigO(j^4) + \bigO(j^2 |\eta|^2) + \bigO(|\eta|^4), \intertext{where} Z^\text{QED}_0(\hbar) &:= Z^\text{QED}(\hbar, 0, 0) = \int_\R \frac{dx}{\sqrt{2 \pi \hbar}} e^{\frac{1}{\hbar} \left( - \frac{x^2}{2} + \frac12 \hbar \log \frac{1}{1-x^2} \right)}. \end{align*}
Recall that this expression is meant to be expanded under the integral sign. Because $e^{\frac12 \log \frac{1}{1-x^2}} = \frac{1}{\sqrt{1-x^2}}$, we conclude, using the rules of Gaussian integration that
\begin{align*} Z^\text{QED}_0(\hbar)= \sum_{n=0}^\infty \hbar^{n} (2n-1)!! [x^{2n}] \frac{1}{\sqrt{1-x^2}}. \end{align*}
In Example \ref{expl:sinegordoncurve} it was shown using Proposition \ref{prop:formalchangeofvar} that this may be written as,
\begin{align*} Z^\text{QED}_0(\hbar)= \Fop\left[ -\frac{\sin^2(x)}{2}\right](\hbar). \end{align*}
The partition function of zero-dimensional QED without sources is therefore equal to the partition function of the zero-dimensional sine-Gordon model. 

Using Corollary \ref{crll:comb_int_asymp}, it is straightforward to calculate the all-order asymptotics. 
The saddle points of $-\frac{\sin^2(x)}{2}$ all lie on the real axis. The dominant saddles are at $\tau_\pm = \pm \frac{\pi}{2}$. We find that $A = -\frac{\sin^2(\tau_\pm)}{2} = \frac12$ and $\Sact(\tau_\pm) - \Sact(\tau_\pm+x) = -\frac{\sin^2(x)}{2}$. Therefore, $Z^\text{QED}_0 \in \fring{\hbar}{\frac12}{0}$ and
\begin{align*} \asyOpV{\frac12}{0}{\hbar} Z^\text{QED}_0(\hbar)= \asyOpV{\frac12}{0}{\hbar} \Fop\left[ -\frac{\sin^2(x)}{2}\right](\hbar) = \frac{2}{2\pi} \Fop\left[ -\frac{\sin^2(x)}{2}\right](-\hbar). \end{align*}
The calculation of the asymptotics of $Z^\text{QED}(\hbar, j, \eta)$ as well as setting up the free energy $W^\text{QED}(\hbar,j,\eta)$ and calculating its asymptotics are analogous to the preceding examples. The respective first coefficients are listed in Tables \ref{tab:Zqed} and \ref{tab:Wqed}.

\begin{table}
\begin{subtable}[c]{\textwidth}
\centering
\tiny
\def\arraystretch{1.5}
\begin{tabular}{|c||c||c|c|c|c|c|c|}
\hline
&prefactor&$\hbar^{0}$&$\hbar^{1}$&$\hbar^{2}$&$\hbar^{3}$&$\hbar^{4}$&$\hbar^{5}$\\
\hline\hline
$\partial_j^{0} (\partial_{\eta} \partial_{ \bar \eta} )^{0} Z^{\text{QED}} \big|_{\substack{j=0\\\eta=0}}$&$\hbar^{0}$&$1$&$ \frac{1}{2}$&$ \frac{9}{8}$&$ \frac{75}{16}$&$ \frac{3675}{128}$&$ \frac{59535}{256}$\\
\hline
$\partial_j^{2} (\partial_{\eta} \partial_{ \bar \eta} )^{0} Z^{\text{QED}} \big|_{\substack{j=0\\\eta=0}}$&$\hbar^{-1}$&$1$&$ \frac{3}{2}$&$ \frac{45}{8}$&$ \frac{525}{16}$&$ \frac{33075}{128}$&$ \frac{654885}{256}$\\
\hline
$\partial_j^{0} (\partial_{\eta} \partial_{ \bar \eta} )^{1} Z^{\text{QED}} \big|_{\substack{j=0\\\eta=0}}$&$\hbar^{-1}$&$1$&$ \frac{3}{2}$&$ \frac{45}{8}$&$ \frac{525}{16}$&$ \frac{33075}{128}$&$ \frac{654885}{256}$\\
\hline
$\partial_j^{1} (\partial_{\eta} \partial_{ \bar \eta} )^{1} Z^{\text{QED}} \big|_{\substack{j=0\\\eta=0}}$&$\hbar^{-1}$&$1$&$ \frac{9}{2}$&$ \frac{225}{8}$&$ \frac{3675}{16}$&$ \frac{297675}{128}$&$ \frac{7203735}{256}$\\
\hline
\end{tabular}
\subcaption{The first coefficients of the trivariate generating function $Z^{\text{QED}}(\hbar, j, \eta)$.}
\end{subtable}
\begin{subtable}[c]{\textwidth}
\centering
\tiny
\def\arraystretch{1.5}
\begin{tabular}{|c||c||c|c|c|c|c|c|}
\hline
&prefactor&$\hbar^{0}$&$\hbar^{1}$&$\hbar^{2}$&$\hbar^{3}$&$\hbar^{4}$&$\hbar^{5}$\\
\hline\hline
$\asyOpV{\frac12}{0}{\hbar} \partial_j^{0} (\partial_{\eta} \partial_{ \bar \eta} )^{0} Z^{\text{QED}} \big|_{\substack{j=0\\\eta=0}}$&$\frac{\hbar^{0}}{\pi}$&$1$&$- \frac{1}{2}$&$ \frac{9}{8}$&$- \frac{75}{16}$&$ \frac{3675}{128}$&$- \frac{59535}{256}$\\
\hline
$\asyOpV{\frac12}{0}{\hbar} \partial_j^{2} (\partial_{\eta} \partial_{ \bar \eta} )^{0} Z^{\text{QED}} \big|_{\substack{j=0\\\eta=0}}$&$\frac{\hbar^{-2}}{\pi}$&$1$&$ \frac{1}{2}$&$- \frac{3}{8}$&$ \frac{15}{16}$&$- \frac{525}{128}$&$ \frac{6615}{256}$\\
\hline
$\asyOpV{\frac12}{0}{\hbar} \partial_j^{0} (\partial_{\eta} \partial_{ \bar \eta} )^{1} Z^{\text{QED}} \big|_{\substack{j=0\\\eta=0}}$&$\frac{\hbar^{-2}}{\pi}$&$1$&$ \frac{1}{2}$&$- \frac{3}{8}$&$ \frac{15}{16}$&$- \frac{525}{128}$&$ \frac{6615}{256}$\\
\hline
$\asyOpV{\frac12}{0}{\hbar} \partial_j^{0} (\partial_{\eta} \partial_{ \bar \eta} )^{2} Z^{\text{QED}} \big|_{\substack{j=0\\\eta=0}}$&$\frac{\hbar^{-3}}{\pi}$&$1$&$- \frac{1}{2}$&$ \frac{1}{8}$&$- \frac{3}{16}$&$ \frac{75}{128}$&$- \frac{735}{256}$\\
\hline
\end{tabular}
\subcaption{The first coefficients of the trivariate generating function $\asyOpV{\frac12}{0}{\hbar}Z^{\text{QED}}(\hbar, j, \eta)$.}
\end{subtable}
\caption{Partition function in QED.}
\label{tab:Zqed}
\end{table}

\begin{table}
\begin{subtable}[c]{\textwidth}
\centering
\tiny
\def\arraystretch{1.5}
\begin{tabular}{|c||c|c|c|c|c|c|}
\hline
&$\hbar^{0}$&$\hbar^{1}$&$\hbar^{2}$&$\hbar^{3}$&$\hbar^{4}$&$\hbar^{5}$\\
\hline\hline
$\partial_j^{0} (\partial_{\eta} \partial_{ \bar \eta} )^{0} W^{\text{QED}} \big|_{\substack{j=0\\\eta=0}}$&$0$&$0$&$ \frac{1}{2}$&$1$&$ \frac{25}{6}$&$26$\\
\hline
$\partial_j^{2} (\partial_{\eta} \partial_{ \bar \eta} )^{0} W^{\text{QED}} \big|_{\substack{j=0\\\eta=0}}$&$1$&$1$&$4$&$25$&$208$&$2146$\\
\hline
$\partial_j^{0} (\partial_{\eta} \partial_{ \bar \eta} )^{1} W^{\text{QED}} \big|_{\substack{j=0\\\eta=0}}$&$1$&$1$&$4$&$25$&$208$&$2146$\\
\hline
$\partial_j^{1} (\partial_{\eta} \partial_{ \bar \eta} )^{1} W^{\text{QED}} \big|_{\substack{j=0\\\eta=0}}$&$1$&$4$&$25$&$208$&$2146$&$26368$\\
\hline
\end{tabular}
\subcaption{The first coefficients of the trivariate generating function $W^{\text{QED}}(\hbar, j, \eta)$.}
\end{subtable}
\begin{subtable}[c]{\textwidth}
\centering
\tiny
\def\arraystretch{1.5}
\begin{tabular}{|c||c||c|c|c|c|c|c|}
\hline
&prefactor&$\hbar^{0}$&$\hbar^{1}$&$\hbar^{2}$&$\hbar^{3}$&$\hbar^{4}$&$\hbar^{5}$\\
\hline\hline
$\asyOpV{\frac12}{0}{\hbar} \partial_j^{0} (\partial_{\eta} \partial_{ \bar \eta} )^{0} W^{\text{QED}} \big|_{\substack{j=0\\\eta=0}}$&$\frac{\hbar^{1}}{\pi}$&$1$&$-1$&$ \frac{1}{2}$&$- \frac{17}{2}$&$ \frac{67}{8}$&$- \frac{3467}{8}$\\
\hline
$\asyOpV{\frac12}{0}{\hbar} \partial_j^{2} (\partial_{\eta} \partial_{ \bar \eta} )^{0} W^{\text{QED}} \big|_{\substack{j=0\\\eta=0}}$&$\frac{\hbar^{-1}}{\pi}$&$1$&$-1$&$- \frac{3}{2}$&$- \frac{13}{2}$&$- \frac{341}{8}$&$- \frac{2931}{8}$\\
\hline
$\asyOpV{\frac12}{0}{\hbar} \partial_j^{0} (\partial_{\eta} \partial_{ \bar \eta} )^{1} W^{\text{QED}} \big|_{\substack{j=0\\\eta=0}}$&$\frac{\hbar^{-1}}{\pi}$&$1$&$-1$&$- \frac{3}{2}$&$- \frac{13}{2}$&$- \frac{341}{8}$&$- \frac{2931}{8}$\\
\hline
$\asyOpV{\frac12}{0}{\hbar} \partial_j^{1} (\partial_{\eta} \partial_{ \bar \eta} )^{1} W^{\text{QED}} \big|_{\substack{j=0\\\eta=0}}$&$\frac{\hbar^{-2}}{\pi}$&$1$&$-1$&$- \frac{3}{2}$&$- \frac{13}{2}$&$- \frac{341}{8}$&$- \frac{2931}{8}$\\
\hline
\end{tabular}
\subcaption{The first coefficients of the trivariate generating function $\asyOpV{\frac12}{0}{\hbar}W^{\text{QED}}(\hbar, j, \eta)$.}
\end{subtable}
\caption{Free energy in QED.}
\label{tab:Wqed}
\end{table}

The effective action is given by the two variable Legendre transformation of $W^{\text{QED}}$:
\begin{align*} G^{\text{QED}}(\hbar, \phi_c, \psi_c) &= W^{\text{QED}}(\hbar, j, \eta) - j \phi_c - \bar \eta \psi_c - \eta \bar \psi_c, \end{align*}
where $\phi_c = \partial_j W^{\text{QED}}$ and $\psi_c = \partial_{\bar\eta} W^{\text{QED}}$. Because there are no tadpole diagrams in QED, it follows that,
\begin{gather*} G^\text{QED} \big|_{\substack{\varphi_c=0\\\psi_c=0}} = W^\text{QED}\big|_{\substack{j=0\\\eta=0}} \\ \partial_{\psi_c} \partial_{\bar \psi_c} G^\text{QED} \big|_{\substack{\varphi_c=0\\\psi_c=0}} = - \frac{1}{\partial_\eta \partial_{\bar \eta} W^\text{QED}\big|_{\substack{j=0\\\eta=0}} } \\ \partial_{\varphi_c}^2 G^\text{QED} \big|_{\substack{\varphi_c=0\\\psi_c=0}} = - \frac{1}{\partial_j^2 W^\text{QED}\big|_{\substack{j=0\\\eta=0}} } \\ \partial_{\varphi_c}\partial_{\psi_c} \partial_{\bar \psi_c} G^\text{QED} \big|_{\substack{\varphi_c=0\\\psi_c=0}} = \frac{ \partial_j \partial_\eta \partial_{\bar \eta} W^\text{QED}\big|_{\substack{j=0\\\eta=0}} }{ \partial_j^2 W^\text{QED}\big|_{\substack{j=0\\\eta=0}} \left( \partial_\eta \partial_{\bar \eta} W^\text{QED}\big|_{\substack{j=0\\\eta=0}} \right)^2 }. \end{gather*}
The calculation of asymptotics is similar to the example of $\varphi^4$-theory. Coefficients for the effective action are listed in Table \ref{tab:GQED}.

\begin{table}
\begin{subtable}[c]{\textwidth}
\centering
\tiny
\def\arraystretch{1.5}
\begin{tabular}{|c||c|c|c|c|c|c|}
\hline
&$\hbar^{0}$&$\hbar^{1}$&$\hbar^{2}$&$\hbar^{3}$&$\hbar^{4}$&$\hbar^{5}$\\
\hline\hline
$\partial_{\phi_c}^{0} (\partial_{\psi_c} \partial_{ \bar \psi_c} )^{0} G^{\text{QED}} \big|_{\substack{\phi_c=0\\\psi_c=0}}$&$0$&$0$&$ \frac{1}{2}$&$1$&$ \frac{25}{6}$&$26$\\
\hline
$\partial_{\phi_c}^{2} (\partial_{\psi_c} \partial_{ \bar \psi_c} )^{0} G^{\text{QED}} \big|_{\substack{\phi_c=0\\\psi_c=0}}$&$-1$&$1$&$3$&$18$&$153$&$1638$\\
\hline
$\partial_{\phi_c}^{0} (\partial_{\psi_c} \partial_{ \bar \psi_c} )^{1} G^{\text{QED}} \big|_{\substack{\phi_c=0\\\psi_c=0}}$&$-1$&$1$&$3$&$18$&$153$&$1638$\\
\hline
$\partial_{\phi_c}^{1} (\partial_{\psi_c} \partial_{ \bar \psi_c} )^{1} G^{\text{QED}} \big|_{\substack{\phi_c=0\\\psi_c=0}}$&$1$&$1$&$7$&$72$&$891$&$12672$\\
\hline
\end{tabular}
\subcaption{The first coefficients of the trivariate generating function $G^{\text{QED}}(\hbar, \phi_c, \psi_c)$.}
\end{subtable}
\begin{subtable}[c]{\textwidth}
\centering
\tiny
\def\arraystretch{1.5}
\begin{tabular}{|c||c||c|c|c|c|c|c|}
\hline
&prefactor&$\hbar^{0}$&$\hbar^{1}$&$\hbar^{2}$&$\hbar^{3}$&$\hbar^{4}$&$\hbar^{5}$\\
\hline\hline
$\asyOpV{\frac12}{0}{\hbar} \partial_{\phi_c}^{0} (\partial_{\psi_c} \partial_{ \bar \psi_c} )^{0} G^{\text{QED}} \big|_{\substack{\phi_c=0\\\psi_c=0}}$&$\frac{\hbar^{1}}{\pi}$&$1$&$-1$&$ \frac{1}{2}$&$- \frac{17}{2}$&$ \frac{67}{8}$&$- \frac{3467}{8}$\\
\hline
$\asyOpV{\frac12}{0}{\hbar} \partial_{\phi_c}^{2} (\partial_{\psi_c} \partial_{ \bar \psi_c} )^{0} G^{\text{QED}} \big|_{\substack{\phi_c=0\\\psi_c=0}}$&$\frac{\hbar^{-1}}{\pi}$&$1$&$-3$&$- \frac{9}{2}$&$- \frac{57}{2}$&$- \frac{2025}{8}$&$- \frac{22437}{8}$\\
\hline
$\asyOpV{\frac12}{0}{\hbar} \partial_{\phi_c}^{0} (\partial_{\psi_c} \partial_{ \bar \psi_c} )^{1} G^{\text{QED}} \big|_{\substack{\phi_c=0\\\psi_c=0}}$&$\frac{\hbar^{-1}}{\pi}$&$1$&$-3$&$- \frac{9}{2}$&$- \frac{57}{2}$&$- \frac{2025}{8}$&$- \frac{22437}{8}$\\
\hline
$\asyOpV{\frac12}{0}{\hbar} \partial_{\phi_c}^{1} (\partial_{\psi_c} \partial_{ \bar \psi_c} )^{1} G^{\text{QED}} \big|_{\substack{\phi_c=0\\\psi_c=0}}$&$\frac{\hbar^{-2}}{\pi}$&$1$&$-7$&$- \frac{3}{2}$&$- \frac{75}{2}$&$- \frac{3309}{8}$&$- \frac{41373}{8}$\\
\hline
\end{tabular}
\subcaption{The first coefficients of the trivariate generating function $\asyOpV{\frac12}{0}{\hbar}G^{\text{QED}}(\hbar, \phi_c, \psi_c)$.}
\end{subtable}
\caption{Effective action in QED.}
\label{tab:GQED}
\end{table}

To calculate the renormalization constants we define the invariant charge as,
\begin{align*} \alpha(\hbar):= \left( \frac{ \partial_{\varphi_c}\partial_{\psi_c} \partial_{\bar \psi_c} G^\text{QED} \big|_{\substack{\varphi_c=0\\\psi_c=0}} } { \left( - \partial_{\varphi_c}^2 G^\text{QED} \big|_{\substack{\varphi_c=0\\\psi_c=0}} \right)^\frac12 \left( - \partial_{\psi_c} \partial_{\bar \psi_c} G^\text{QED}\big|_{\substack{\varphi_c=0\\\psi_c=0}} \right) } \right)^2. \end{align*}
The first coefficients of the renormalization constants and their asymptotics are listed in Table \ref{tab:QEDren}.

\begin{table}
\begin{subtable}[c]{\textwidth}
\centering
\tiny
\def\arraystretch{1.5}
\begin{tabular}{|c||c|c|c|c|c|c|}
\hline
&$\hbar_{\text{ren}}^{0}$&$\hbar_{\text{ren}}^{1}$&$\hbar_{\text{ren}}^{2}$&$\hbar_{\text{ren}}^{3}$&$\hbar_{\text{ren}}^{4}$&$\hbar_{\text{ren}}^{5}$\\
\hline\hline
$\hbar(\hbar_{\text{ren}})$&$0$&$1$&$-5$&$14$&$-58$&$20$\\
\hline
$z_{\phi_c^2}(\hbar_{\text{ren}})$&$1$&$1$&$-1$&$-1$&$-13$&$-93$\\
\hline
$z_{|\psi_c|^2}(\hbar_{\text{ren}})$&$1$&$1$&$-1$&$-1$&$-13$&$-93$\\
\hline
$z_{\phi_c |\psi_c|^2}(\hbar_{\text{ren}})$&$1$&$-1$&$-1$&$-13$&$-93$&$-1245$\\
\hline
\end{tabular}
\subcaption{Table of the first coefficients of the renormalization quantities in QED.}
\end{subtable}
\begin{subtable}[c]{\textwidth}
\centering
\tiny
\def\arraystretch{1.5}
\begin{tabular}{|c||c||c|c|c|c|c|c|}
\hline
&prefactor&$\hbar_{\text{ren}}^{0}$&$\hbar_{\text{ren}}^{1}$&$\hbar_{\text{ren}}^{2}$&$\hbar_{\text{ren}}^{3}$&$\hbar_{\text{ren}}^{4}$&$\hbar_{\text{ren}}^{5}$\\
\hline\hline
$\left(\asyOpV{\frac12}{0}{\hbar_{\text{ren}}} \hbar \right)(\hbar_{\text{ren}})$&$e^{- \frac{5}{2}} \frac{\hbar^{-1}}{\pi}$&$-2$&$24$&$- \frac{379}{4}$&$ \frac{6271}{12}$&$ \frac{38441}{64}$&$ \frac{17647589}{480}$\\
\hline
$\left(\asyOpV{\frac12}{0}{\hbar_{\text{ren}}} z_{\phi_c^2} \right)(\hbar_{\text{ren}})$&$e^{- \frac{5}{2}} \frac{\hbar^{-1}}{\pi}$&$-1$&$ \frac{13}{2}$&$ \frac{67}{8}$&$ \frac{5177}{48}$&$ \frac{513703}{384}$&$ \frac{83864101}{3840}$\\
\hline
$\left(\asyOpV{\frac12}{0}{\hbar_{\text{ren}}} z_{|\psi_c|^2} \right)(\hbar_{\text{ren}})$&$e^{- \frac{5}{2}} \frac{\hbar^{-1}}{\pi}$&$-1$&$ \frac{13}{2}$&$ \frac{67}{8}$&$ \frac{5177}{48}$&$ \frac{513703}{384}$&$ \frac{83864101}{3840}$\\
\hline
$\left(\asyOpV{\frac12}{0}{\hbar_{\text{ren}}} z_{\phi_c |\psi_c|^2} \right)(\hbar_{\text{ren}})$&$e^{- \frac{5}{2}} \frac{\hbar^{-2}}{\pi}$&$-1$&$ \frac{13}{2}$&$ \frac{67}{8}$&$ \frac{5177}{48}$&$ \frac{513703}{384}$&$ \frac{83864101}{3840}$\\
\hline
\end{tabular}
\subcaption{Table of the first coefficients of the asymptotics of the renormalization quantities in QED.}
\end{subtable}
\caption{Renormalization constants in QED.}
\label{tab:QEDren}
\end{table}

As in the example of $\varphi^3$-theory, the $z$-factor for the vertex, $z_{\varphi_c \psi_c \bar \psi_c}$ can be used to enumerate the number of skeleton diagrams. Asymptotically, this number is given by,
\begin{gather*} [\hbar_\text{ren}^n]( 1-z_{\varphi_c \psi_c \bar \psi_c}(\hbar_\text{ren})) \underset{n\rightarrow \infty}{\sim} \frac{e^{-\frac{5}{2}}}{\pi} \left(\frac{1}{2}\right)^{-n-2} \Gamma(n+2) \left( 1 - \frac{1}{2}\frac{13}{2} \frac{1}{n+1} \right. \\ \left. - \left(\frac{1}{2}\right)^2 \frac{67}{8}\frac{1}{n(n+1)} - \left(\frac{1}{2}\right)^3 \frac{5177}{48}\frac{1}{(n-1)n(n+1)} +\ldots \right), \end{gather*}
which can be read off Table \ref{tab:QEDren}. The first two coefficients of this expansion were also given in \cite{cvitanovic1978number} in a different notation.
\subsubsection{Quenched QED}
For the quenched approximation, we need to remove the $\log$-term in the partition function given in eq.\ \eqref{eqn:qedprototype}:
\begin{align*} Z^\text{QQED}(\hbar, j, \eta) := \int_\R \frac{dx}{\sqrt{2 \pi \hbar}} e^{\frac{1}{\hbar} \left( - \frac{x^2}{2} + j x + \frac{|\eta|^2}{1-x} \right)} \end{align*}
The partition function cannot be reduced to a generating function over diagrams without sources as the only diagram without sources is the empty diagram. 

To obtain the first order in $|\eta|^2$, the partition function can be rewritten as,
\begin{gather*} Z^\text{QQED}(\hbar, j, \eta) = \\ e^{\frac{j^2}{2\hbar}} \left( 1+ \frac{|\eta|^2}{\hbar(1-j)} \int_\R \frac{dx}{\sqrt{2 \pi \left( \frac{\hbar}{(1-j)^2} \right)}} \frac{1}{1-x} e^{-\frac{x^2}{2\left( \frac{\hbar}{(1-j)^2} \right)} } + \bigO(|\eta|^4) \right). \end{gather*}
The formal integral in this expression can be easily expanded:
\begin{align*} \int_\R \frac{dx}{\sqrt{2 \pi \hbar}} \frac{1}{1-x} e^{-\frac{x^2}{2\hbar} } &= \sum_{n=0}^\infty \hbar^n(2n-1)!! =: \chi(\hbar) \end{align*}
This is in fact the expression, we encountered in Example \ref{expl:counterexpl}, whose asymptotics cannot be calculated by Corollary \ref{crll:comb_int_asymp} or Theorem \ref{thm:comb_int_asymp}. But extracting the asymptotics `by hand' is trivial. Because $(2n-1)!! = \frac{2^{n+\frac12}}{\sqrt{2 \pi}}\Gamma\left(n+\frac12\right)$, we can write,
\begin{align*} \asyOpV{\frac12}{}{\hbar} \chi(\hbar) = \frac{1}{\sqrt{2 \pi \hbar}}, \end{align*}
in the language of the ring of factorially divergent power series. It follows that,
\begin{align*} Z^\text{QQED}(\hbar, j, \eta) &= e^{\frac{j^2}{2\hbar}} \left( 1+ \frac{|\eta|^2}{\hbar(1-j)} \chi\left(\frac{\hbar}{(1-j)^2}\right) + \bigO(|\eta|^4) \right) \\ \asyOpV{\frac12}{}{\hbar} Z^\text{QQED}(\hbar, j, \eta) &= \frac{|\eta|^2 e^{\frac{j^2}{2\hbar}}}{\hbar (1-j)} \left( \asyOpV{\frac12}{}{\hbar} \chi\left( \frac{\hbar}{(1-j)^2} \right) \right) (\hbar) + \bigO(|\eta|^4) \intertext{and by the chain rule for $\asyOp$,} \asyOpV{\frac12}{}{\hbar} Z^\text{QQED}(\hbar, j, \eta) &= \frac{|\eta|^2 e^{\frac{j^2}{2\hbar}}}{\hbar (1-j)} \left[ e^{\frac12 \left( \frac{1}{\hbar} - \frac{1}{\widetilde \hbar} \right)} \asyOpV{\frac12}{}{\widetilde \hbar} \chi\left( \widetilde \hbar\right) \right]_{\widetilde \hbar = \frac{\hbar}{(1-j)^2}} + \bigO(|\eta|^4) \\ &= \frac{|\eta|^2 e^{\frac{j^2}{2\hbar}}}{\hbar (1-j)} \left[ e^{\frac12 \left( \frac{1}{\hbar} - \frac{1}{\widetilde \hbar} \right)} \frac{1}{\sqrt{2 \pi \widetilde \hbar}} \right]_{\widetilde \hbar = \frac{\hbar}{(1-j)^2}} + \bigO(|\eta|^4) \\ &= \frac{|\eta|^2 e^{\frac{j}{\hbar}}}{\sqrt{2\pi} \hbar^{\frac32}} + \bigO(|\eta|^4) \end{align*}
Obtaining the free energy, which is essentially equivalent to the partition function, is straightforward,
\begin{align*} W^\text{QQED}(\hbar, j, \eta) &= \hbar \log Z^\text{QQED}(\hbar, j, \eta) \\ &= \frac{j^2}{2}+ \frac{|\eta|^2}{1-j} \chi\left(\frac{\hbar}{(1-j)^2}\right) + \bigO(|\eta|^4) \\ \asyOpV{\frac12}{}{\hbar} W^\text{QQED}(\hbar, j, \eta) &= \frac{|\eta|^2e^{\frac{j-\frac{j^2}{2}}{\hbar}}}{\sqrt{2\pi \hbar}} + \bigO(|\eta|^4). \end{align*}
\begin{table}
\begin{subtable}[c]{\textwidth}
\centering
\tiny
\def\arraystretch{1.5}
\begin{tabular}{|c||c|c|c|c|c|c|}
\hline
&$\hbar^{0}$&$\hbar^{1}$&$\hbar^{2}$&$\hbar^{3}$&$\hbar^{4}$&$\hbar^{5}$\\
\hline\hline
$\partial_j^{0} (\partial_{\eta} \partial_{ \bar \eta} )^{1} W^{\text{QQED}} \big|_{\substack{j=0\\\eta=0}}$&$1$&$1$&$3$&$15$&$105$&$945$\\
\hline
$\partial_j^{1} (\partial_{\eta} \partial_{ \bar \eta} )^{1} W^{\text{QQED}} \big|_{\substack{j=0\\\eta=0}}$&$1$&$3$&$15$&$105$&$945$&$10395$\\
\hline
\end{tabular}
\subcaption{The first coefficients of the trivariate generating function $W^{\text{QQED}}(\hbar, j, \eta)$.}
\end{subtable}
\begin{subtable}[c]{\textwidth}
\centering
\tiny
\def\arraystretch{1.5}
\begin{tabular}{|c||c||c|c|c|c|c|c|}
\hline
&prefactor&$\hbar^{0}$&$\hbar^{1}$&$\hbar^{2}$&$\hbar^{3}$&$\hbar^{4}$&$\hbar^{5}$\\
\hline\hline
$\asyOpV{\frac12}{0}{\hbar} \partial_j^{0} (\partial_{\eta} \partial_{ \bar \eta} )^{1} W^{\text{QQED}} \big|_{\substack{j=0\\\eta=0}}$&$\frac{\hbar^{0}}{\sqrt{2 \pi \hbar}}$&$1$&$0$&$0$&$0$&$0$&$0$\\
\hline
$\asyOpV{\frac12}{0}{\hbar} \partial_j^{1} (\partial_{\eta} \partial_{ \bar \eta} )^{1} W^{\text{QQED}} \big|_{\substack{j=0\\\eta=0}}$&$\frac{\hbar^{-1}}{\sqrt{2 \pi \hbar}}$&$1$&$0$&$0$&$0$&$0$&$0$\\
\hline
\end{tabular}
\subcaption{The first coefficients of the trivariate generating function $\asyOpV{\frac12}{0}{\hbar}W^{\text{QQED}}(\hbar, j, \eta)$.}
\end{subtable}
\caption{Free energy in quenched QED.}
\label{tab:WQQED}
\end{table}%
The effective action obtained by the Legendre transformation of $W^{\text{QQED}}$ can also be expressed explicitly:
\begin{align*} G^\text{QQED}(\hbar, \varphi_c, \psi_c) &= -\frac{\varphi_c^2}{2}+ |\psi_c|^2\frac{(\varphi_c-1)}{\chi\left(\frac{\hbar}{(1-\varphi_c)^2}\right)} + \bigO(|\psi_c|^4) \\ \asyOpV{\frac12}{}{\hbar} G^\text{QQED}(\hbar, \varphi_c, \psi_c) &= |\psi_c|^2\frac{e^{\frac{\varphi_c - \frac{\varphi_c^2}{2}}{\hbar}}}{\sqrt{2\pi \hbar}} \frac{(1-\varphi_c)^2}{\chi\left(\frac{\hbar}{(1-\varphi_c)^2}\right)^2} + \bigO(|\psi_c|^4). \end{align*}
The first coefficients of the free energy and effective action are listed in Tables \ref{tab:WQQED} and \ref{tab:GQQED} together with the respective asymptotics.

\begin{table}
\begin{subtable}[c]{\textwidth}
\centering
\tiny
\def\arraystretch{1.5}
\begin{tabular}{|c||c|c|c|c|c|c|}
\hline
&$\hbar^{0}$&$\hbar^{1}$&$\hbar^{2}$&$\hbar^{3}$&$\hbar^{4}$&$\hbar^{5}$\\
\hline\hline
$\partial_{\phi_c}^{0} (\partial_{\psi_c} \partial_{ \bar \psi_c} )^{1} G^{\text{QQED}} \big|_{\substack{\phi_c=0\\\psi_c=0}}$&$-1$&$1$&$2$&$10$&$74$&$706$\\
\hline
$\partial_{\phi_c}^{1} (\partial_{\psi_c} \partial_{ \bar \psi_c} )^{1} G^{\text{QQED}} \big|_{\substack{\phi_c=0\\\psi_c=0}}$&$1$&$1$&$6$&$50$&$518$&$6354$\\
\hline
\end{tabular}
\subcaption{The first coefficients of the trivariate generating function $\Gamma^{\text{QQED}}(\hbar, \phi_c, \psi_c)$.}
\end{subtable}
\begin{subtable}[c]{\textwidth}
\centering
\tiny
\def\arraystretch{1.5}
\begin{tabular}{|c||c||c|c|c|c|c|c|}
\hline
&prefactor&$\hbar^{0}$&$\hbar^{1}$&$\hbar^{2}$&$\hbar^{3}$&$\hbar^{4}$&$\hbar^{5}$\\
\hline\hline
$\asyOpV{\frac12}{0}{\hbar} \partial_{\phi_c}^{0} (\partial_{\psi_c} \partial_{ \bar \psi_c} )^{1} G^{\text{QQED}} \big|_{\substack{\phi_c=0\\\psi_c=0}}$&$\frac{\hbar^{0}}{\sqrt{2\pi\hbar}}$&$1$&$-2$&$-3$&$-16$&$-124$&$-1224$\\
\hline
$\asyOpV{\frac12}{0}{\hbar} \partial_{\phi_c}^{1} (\partial_{\psi_c} \partial_{ \bar \psi_c} )^{1} G^{\text{QQED}} \big|_{\substack{\phi_c=0\\\psi_c=0}}$&$\frac{\hbar^{-1}}{\sqrt{2\pi\hbar}}$&$1$&$-4$&$-3$&$-22$&$-188$&$-1968$\\
\hline
\end{tabular}
\subcaption{The first coefficients of the trivariate generating function $\asyOpV{\frac12}{0}{\hbar}\Gamma^{\text{QQED}}(\hbar, \phi_c, \psi_c)$.}
\end{subtable}
\caption{Effective action in quenched QED.}
\label{tab:GQQED}
\end{table}

The invariant charge is defined as
\begin{align*} \alpha(\hbar):= \left( \frac{ \partial_{\varphi_c}\partial_{\psi_c} \partial_{\bar \psi_c} G^\text{QED} \big|_{\substack{\varphi_c=0\\\psi_c=0}} } { \left( - \partial_{\psi_c} \partial_{\bar \psi_c} G^\text{QED}\big|_{\substack{\varphi_c=0\\\psi_c=0}} \right) } \right)^2, \end{align*}
and the calculation of the renormalization quantities works as before. Some coefficients are listed in Table \ref{tab:QQEDren}.
\begin{table}
\begin{subtable}[c]{\textwidth}
\centering
\tiny
\def\arraystretch{1.5}
\begin{tabular}{|c||c|c|c|c|c|c|}
\hline
&$\hbar_{\text{ren}}^{0}$&$\hbar_{\text{ren}}^{1}$&$\hbar_{\text{ren}}^{2}$&$\hbar_{\text{ren}}^{3}$&$\hbar_{\text{ren}}^{4}$&$\hbar_{\text{ren}}^{5}$\\
\hline\hline
$\hbar(\hbar_{\text{ren}})$&$0$&$1$&$-4$&$8$&$-28$&$-48$\\
\hline
$z_{|\psi_c|^2}(\hbar_{\text{ren}})$&$1$&$1$&$-1$&$-1$&$-7$&$-63$\\
\hline
$z_{\phi_c |\psi_c|^2}(\hbar_{\text{ren}})$&$1$&$-1$&$-1$&$-7$&$-63$&$-729$\\
\hline
\end{tabular}
\subcaption{Table of the first coefficients of the renormalization quantities in quenched QED.}
\end{subtable}
\begin{subtable}[c]{\textwidth}
\centering
\tiny
\def\arraystretch{1.5}
\begin{tabular}{|c||c||c|c|c|c|c|c|}
\hline
&prefactor&$\hbar_{\text{ren}}^{0}$&$\hbar_{\text{ren}}^{1}$&$\hbar_{\text{ren}}^{2}$&$\hbar_{\text{ren}}^{3}$&$\hbar_{\text{ren}}^{4}$&$\hbar_{\text{ren}}^{5}$\\
\hline\hline
$\left(\asyOpV{\frac12}{0}{\hbar_{\text{ren}}} \hbar \right)(\hbar_{\text{ren}})$&$e^{-2} \frac{\hbar^{0}}{\sqrt{2\pi\hbar}}$&$-2$&$20$&$-62$&$ \frac{928}{3}$&$ \frac{2540}{3}$&$ \frac{330296}{15}$\\
\hline
$\left(\asyOpV{\frac12}{0}{\hbar_{\text{ren}}} z_{|\psi_c|^2} \right)(\hbar_{\text{ren}})$&$e^{-2} \frac{\hbar^{0}}{\sqrt{2\pi\hbar}}$&$-1$&$6$&$4$&$ \frac{218}{3}$&$890$&$ \frac{196838}{15}$\\
\hline
$\left(\asyOpV{\frac12}{0}{\hbar_{\text{ren}}} z_{\phi_c |\psi_c|^2} \right)(\hbar_{\text{ren}})$&$e^{-2} \frac{\hbar^{-1}}{\sqrt{2\pi\hbar}}$&$-1$&$6$&$4$&$ \frac{218}{3}$&$890$&$ \frac{196838}{15}$\\
\hline
\end{tabular}
\subcaption{Table of the first coefficients of the asymptotics of the renormalization quantities in quenched QED.}
\end{subtable}
\caption{Renormalization constants in quenched QED.}
\label{tab:QQEDren}
\end{table}
The sequence generated by $1-z_{\varphi_c \psi_c \bar \psi_c}(\hbar_\text{ren})$, which enumerates the number of skeleton quenched QED vertex diagrams, was also given in \cite{broadhurst1999four}. It is entry \texttt{A049464} in the OEIS \cite{oeis}.
The asymptotics, read off from Table \ref{tab:QQEDren}, of this sequence are,
\begin{gather*} [\hbar_\text{ren}^n]( 1-z_{\varphi_c \psi_c \bar \psi_c}(\hbar_\text{ren})) \underset{n\rightarrow \infty}{\sim} e^{-2} (2n+1)!! \left( 1 - \frac{6}{2n+1} \right. \\ \left. - \frac{4}{(2n-1)(2n+1)} - \frac{218}{3}\frac{1}{(2n-3)(2n-1)(2n+1)} +\ldots \right), \end{gather*}
where we used $(2n-1)!! =\frac{2^{n+\frac12}}{\sqrt{2\pi}}\Gamma(n+\frac12)$. The first five coefficients of this expansion have been conjectured by Broadhurst \cite{davidexpansion} based on numerical calculations.
\subsubsection{Yukawa theory}

The partition function of Yukawa-theory in zero-dimensions is given by, 
\begin{align*} Z^{\text{Yuk}}(\hbar, j, \eta) &:= \int \frac{dx}{\sqrt{2 \pi \hbar}} e^{\frac{1}{\hbar} \left( - \frac{x^2}{2} + j x + \frac{|\eta|^2}{1-x} + \hbar \log \frac{1}{1-x} \right)} \end{align*}
Similarly, to the case of quenched QED, we can rewrite this, with $\chi(\hbar)= \sum_{n=0}^\infty (2n-1)!! \hbar^n$, as,
\begin{align*} Z^{\text{Yuk}}(\hbar, j, \eta) &= \frac{e^{\frac{j^2}{2\hbar}}}{1-j-\frac{|\eta|^2}{\hbar}} \chi\left(\frac{\hbar}{\left(1-j-\frac{|\eta|^2}{\hbar}\right)^2} \right) + \bigO(|\eta|^4), \end{align*}
where we expanded up to first order in $|\eta|^2$.

\begin{table}
\begin{subtable}[c]{\textwidth}
\centering
\tiny
\def\arraystretch{1.5}
\begin{tabular}{|c||c||c|c|c|c|c|c|}
\hline
&prefactor&$\hbar^{0}$&$\hbar^{1}$&$\hbar^{2}$&$\hbar^{3}$&$\hbar^{4}$&$\hbar^{5}$\\
\hline\hline
$\partial_j^{0} (\partial_{\eta} \partial_{ \bar \eta} )^{0} Z^{\text{Yuk}} \big|_{\substack{j=0\\\eta=0}}$&$\hbar^{0}$&$1$&$1$&$3$&$15$&$105$&$945$\\
\hline
$\partial_j^{1} (\partial_{\eta} \partial_{ \bar \eta} )^{0} Z^{\text{Yuk}} \big|_{\substack{j=0\\\eta=0}}$&$\hbar^{0}$&$1$&$3$&$15$&$105$&$945$&$10395$\\
\hline
$\partial_j^{2} (\partial_{\eta} \partial_{ \bar \eta} )^{0} Z^{\text{Yuk}} \big|_{\substack{j=0\\\eta=0}}$&$\hbar^{-1}$&$1$&$3$&$15$&$105$&$945$&$10395$\\
\hline
$\partial_j^{0} (\partial_{\eta} \partial_{ \bar \eta} )^{1} Z^{\text{Yuk}} \big|_{\substack{j=0\\\eta=0}}$&$\hbar^{-1}$&$1$&$3$&$15$&$105$&$945$&$10395$\\
\hline
$\partial_j^{1} (\partial_{\eta} \partial_{ \bar \eta} )^{1} Z^{\text{Yuk}} \big|_{\substack{j=0\\\eta=0}}$&$\hbar^{-1}$&$2$&$12$&$90$&$840$&$9450$&$124740$\\
\hline
\end{tabular}
\subcaption{The first coefficients of the trivariate generating function $Z^{\text{Yuk}}(\hbar, j, \eta)$.}
\end{subtable}
\begin{subtable}[c]{\textwidth}
\centering
\tiny
\def\arraystretch{1.5}
\begin{tabular}{|c||c||c|c|c|c|c|c|}
\hline
&prefactor&$\hbar^{0}$&$\hbar^{1}$&$\hbar^{2}$&$\hbar^{3}$&$\hbar^{4}$&$\hbar^{5}$\\
\hline\hline
$\asyOpV{\frac12}{0}{\hbar} \partial_j^{0} (\partial_{\eta} \partial_{ \bar \eta} )^{0} Z^{\text{Yuk}} \big|_{\substack{j=0\\\eta=0}}$&$\frac{\hbar^{0}}{\sqrt{2\pi\hbar}}$&$1$&$0$&$0$&$0$&$0$&$0$\\
\hline
$\asyOpV{\frac12}{0}{\hbar} \partial_j^{1} (\partial_{\eta} \partial_{ \bar \eta} )^{0} Z^{\text{Yuk}} \big|_{\substack{j=0\\\eta=0}}$&$\frac{\hbar^{-1}}{\sqrt{2\pi\hbar}}$&$1$&$0$&$0$&$0$&$0$&$0$\\
\hline
$\asyOpV{\frac12}{0}{\hbar} \partial_j^{2} (\partial_{\eta} \partial_{ \bar \eta} )^{0} Z^{\text{Yuk}} \big|_{\substack{j=0\\\eta=0}}$&$\frac{\hbar^{-2}}{\sqrt{2\pi\hbar}}$&$1$&$0$&$0$&$0$&$0$&$0$\\
\hline
$\asyOpV{\frac12}{0}{\hbar} \partial_j^{0} (\partial_{\eta} \partial_{ \bar \eta} )^{1} Z^{\text{Yuk}} \big|_{\substack{j=0\\\eta=0}}$&$\frac{\hbar^{-2}}{\sqrt{2\pi\hbar}}$&$1$&$0$&$0$&$0$&$0$&$0$\\
\hline
$\asyOpV{\frac12}{0}{\hbar} \partial_j^{1} (\partial_{\eta} \partial_{ \bar \eta} )^{1} Z^{\text{Yuk}} \big|_{\substack{j=0\\\eta=0}}$&$\frac{\hbar^{-3}}{\sqrt{2\pi\hbar}}$&$1$&$-1$&$0$&$0$&$0$&$0$\\
\hline
\end{tabular}
\subcaption{The first coefficients of the trivariate generating function $\asyOpV{\frac12}{0}{\hbar}Z^{\text{Yuk}}(\hbar, j, \eta)$.}
\end{subtable}
\caption{Partition function in Yukawa-theory.}
\label{tab:ZYuk}
\end{table}

It follows from $\asyOpV{\frac12}{0}{\hbar} \chi(\hbar) = \frac{1}{\sqrt{2\pi \hbar}}$ and the chain rule that,
\begin{align*} \asyOpV{\frac12}{0}{\hbar} Z^{\text{Yuk}}(\hbar, j, \eta) &= \frac{1}{\sqrt{2\pi \hbar}} e^{\frac{j}{\hbar}} \left( 1 + |\eta|^2 \frac{1-j}{\hbar^2} \right) + \bigO(|\eta|^4) \end{align*}
As in the case of quenched QED, the asymptotic expansions for each order in $j$ and $|\eta|$ up to $|\eta|^2$ of the disconnected diagrams are finite and therefore exact. Some coefficients are given in Table \ref{tab:ZYuk}.
\begin{table}
\begin{subtable}[c]{\textwidth}
\centering
\tiny
\def\arraystretch{1.5}
\begin{tabular}{|c||c|c|c|c|c|c|}
\hline
&$\hbar^{0}$&$\hbar^{1}$&$\hbar^{2}$&$\hbar^{3}$&$\hbar^{4}$&$\hbar^{5}$\\
\hline\hline
$\partial_j^{0} (\partial_{\eta} \partial_{ \bar \eta} )^{0} W^{\text{Yuk}} \big|_{\substack{j=0\\\eta=0}}$&$0$&$0$&$1$&$ \frac{5}{2}$&$ \frac{37}{3}$&$ \frac{353}{4}$\\
\hline
$\partial_j^{1} (\partial_{\eta} \partial_{ \bar \eta} )^{0} W^{\text{Yuk}} \big|_{\substack{j=0\\\eta=0}}$&$0$&$1$&$2$&$10$&$74$&$706$\\
\hline
$\partial_j^{2} (\partial_{\eta} \partial_{ \bar \eta} )^{0} W^{\text{Yuk}} \big|_{\substack{j=0\\\eta=0}}$&$1$&$1$&$6$&$50$&$518$&$6354$\\
\hline
$\partial_j^{0} (\partial_{\eta} \partial_{ \bar \eta} )^{1} W^{\text{Yuk}} \big|_{\substack{j=0\\\eta=0}}$&$1$&$2$&$10$&$74$&$706$&$8162$\\
\hline
$\partial_j^{1} (\partial_{\eta} \partial_{ \bar \eta} )^{1} W^{\text{Yuk}} \big|_{\substack{j=0\\\eta=0}}$&$1$&$6$&$50$&$518$&$6354$&$89782$\\
\hline
\end{tabular}
\subcaption{The first coefficients of the trivariate generating function $W^{\text{Yuk}}(\hbar, j, \eta)$.}
\end{subtable}
\begin{subtable}[c]{\textwidth}
\centering
\tiny
\def\arraystretch{1.5}
\begin{tabular}{|c||c||c|c|c|c|c|c|}
\hline
&prefactor&$\hbar^{0}$&$\hbar^{1}$&$\hbar^{2}$&$\hbar^{3}$&$\hbar^{4}$&$\hbar^{5}$\\
\hline\hline
$\asyOpV{\frac12}{0}{\hbar} \partial_j^{0} (\partial_{\eta} \partial_{ \bar \eta} )^{0} W^{\text{Yuk}} \big|_{\substack{j=0\\\eta=0}}$&$\frac{\hbar^{1}}{\sqrt{2\pi\hbar}}$&$1$&$-1$&$-2$&$-10$&$-74$&$-706$\\
\hline
$\asyOpV{\frac12}{0}{\hbar} \partial_j^{1} (\partial_{\eta} \partial_{ \bar \eta} )^{0} W^{\text{Yuk}} \big|_{\substack{j=0\\\eta=0}}$&$\frac{\hbar^{0}}{\sqrt{2\pi\hbar}}$&$1$&$-2$&$-3$&$-16$&$-124$&$-1224$\\
\hline
$\asyOpV{\frac12}{0}{\hbar} \partial_j^{2} (\partial_{\eta} \partial_{ \bar \eta} )^{0} W^{\text{Yuk}} \big|_{\substack{j=0\\\eta=0}}$&$\frac{\hbar^{-1}}{\sqrt{2\pi\hbar}}$&$1$&$-4$&$-3$&$-22$&$-188$&$-1968$\\
\hline
$\asyOpV{\frac12}{0}{\hbar} \partial_j^{0} (\partial_{\eta} \partial_{ \bar \eta} )^{1} W^{\text{Yuk}} \big|_{\substack{j=0\\\eta=0}}$&$\frac{\hbar^{-1}}{\sqrt{2\pi\hbar}}$&$1$&$-2$&$-3$&$-16$&$-124$&$-1224$\\
\hline
$\asyOpV{\frac12}{0}{\hbar} \partial_j^{1} (\partial_{\eta} \partial_{ \bar \eta} )^{1} W^{\text{Yuk}} \big|_{\substack{j=0\\\eta=0}}$&$\frac{\hbar^{-2}}{\sqrt{2\pi\hbar}}$&$1$&$-4$&$-3$&$-22$&$-188$&$-1968$\\
\hline
\end{tabular}
\subcaption{The first coefficients of the trivariate generating function $\asyOpV{\frac12}{0}{\hbar}W^{\text{Yuk}}(\hbar, j, \eta)$.}
\end{subtable}
\caption{Free energy in Yukawa-theory.}
\label{tab:WYuk}
\end{table}%
The free energy is defined as usual,
\begin{gather*} W^{\text{Yuk}}(\hbar, j, \eta) = \hbar \log Z^{\text{Yuk}}(\hbar, j, \eta) = \\ \frac{j^2}{2} + \hbar \log\frac{1}{1-j-\frac{|\eta|^2}{\hbar}} + \hbar \log \chi \left(\frac{\hbar}{\left( 1 - j - \frac{|\eta|^2}{\hbar} \right)^2 }\right) + \bigO(|\eta|^4), \end{gather*}
Its asymptotics are given by,
\begin{align*} \asyOpV{\frac12}{0}{\hbar} W^{\text{Yuk}}(\hbar, j, \eta) &=       \frac{\hbar}{\sqrt{2\pi \hbar}} e^{\frac{j - \frac{j^2}{2} }{\hbar}} \frac{1-j- \frac{|\eta|^2}{\hbar} \left(1-\frac{(1-j)^2}{\hbar} \right)}{\chi\left( \frac{\hbar}{(1-j - \frac{|\eta|^2}{\hbar})^2} \right)} + \bigO(|\eta|^4) \end{align*}
Some coefficients are given in Table \ref{tab:WYuk}.
The 1PI effective action is given by the Legendre transformation of $W^{\text{Yuk}}(\hbar, j, \eta)$. 
\begin{align*} G^{\text{Yuk}}(\hbar, \varphi_c, \psi_c) &= W^{\text{Yuk}}(\hbar, j, \eta) - j \varphi_c - \bar \eta \psi_c - \eta \bar \psi_c, \end{align*}
where $j, \eta, \varphi_c$ and $\psi_c$ are related by the equations,
$\varphi_c = \partial_j W^{\text{Yuk}}$ and $\psi_c = \partial_{\bar\eta} W^{\text{Yuk}}$.
Performing this Legendre transform is non-trivial in contrast to the preceding three examples, because we can have tadpole diagrams as in the case of $\varphi^3$-theory.

As for $\varphi^3$-theory, we define 
\begin{align*} \gamma^{\text{Yuk}}_0(\hbar) &:= \frac{G^{\text{Yuk}}\big|_{\substack{\varphi_c=0\\ \psi_c=0}}}{\hbar} = \frac{W^{\text{Yuk}}\big|_{\substack{j=j_0\\ \eta = 0}}}{\hbar} \\ &= \frac{j_0(\hbar)^2}{2 \hbar} + \log\frac{1}{1-j_0(\hbar)} + \log \chi \left(\frac{\hbar}{\left( 1 - j_0(\hbar) \right)^2 }\right), \end{align*}
where $j_0(\hbar)$ is the power series solution of 
$0 = \partial_j W^{\text{Yuk}}\big|_{j=j_0(\hbar)}$.
This gives 
\begin{gather*} G^{\text{Yuk}}(\hbar, \varphi_c, \psi_c) = \\ - \frac{\varphi_c^2}{2} +\hbar \log \frac{1}{1-\varphi_c} + \hbar \gamma^{\text{Yuk}}_0\left( \frac{\hbar}{(1-\varphi_c)^2} \right) + \hbar \frac{|\psi_c|^2(1-\varphi_c)}{j_0\left(\frac{\hbar}{(1-\varphi_c)^2} \right)} + \bigO(|\psi_c|^4). \end{gather*}
This equation also has a simple combinatorial interpretation: Every fermion line of a vacuum diagram can be dressed with an arbitrary number of external photons associated to a $\frac{1}{1-\varphi_c}$ factor. Every additional loop gives two additional fermion propagators. The first two terms compensate for the fact that there are no vacuum diagrams with zero or one loop.

The asymptotics result from an application of the $\asyOp$-derivative. Some coefficients are listed in Table \ref{tab:GYuk}. These sequences were also studied in \cite{kuchinskii1998combinatorial}. They obtained the constant, $e^{-1}$, and the linear coefficients $-\frac92$ and $-\frac{5}{2}$ using a combination of numerical and analytic techniques. 
\begin{table}
\begin{subtable}[c]{\textwidth}
\centering
\tiny
\def\arraystretch{1.5}
\begin{tabular}{|c||c|c|c|c|c|c|}
\hline
&$\hbar^{0}$&$\hbar^{1}$&$\hbar^{2}$&$\hbar^{3}$&$\hbar^{4}$&$\hbar^{5}$\\
\hline\hline
$\partial_{\phi_c}^{0} (\partial_{\psi_c} \partial_{ \bar \psi_c} )^{0} G^{\text{Yuk}} \big|_{\substack{\phi_c=0\\\psi_c=0}}$&$0$&$0$&$ \frac{1}{2}$&$1$&$ \frac{9}{2}$&$31$\\
\hline
$\partial_{\phi_c}^{1} (\partial_{\psi_c} \partial_{ \bar \psi_c} )^{0} G^{\text{Yuk}} \big|_{\substack{\phi_c=0\\\psi_c=0}}$&$0$&$1$&$1$&$4$&$27$&$248$\\
\hline
$\partial_{\phi_c}^{2} (\partial_{\psi_c} \partial_{ \bar \psi_c} )^{0} G^{\text{Yuk}} \big|_{\substack{\phi_c=0\\\psi_c=0}}$&$-1$&$1$&$3$&$20$&$189$&$2232$\\
\hline
$\partial_{\phi_c}^{0} (\partial_{\psi_c} \partial_{ \bar \psi_c} )^{1} G^{\text{Yuk}} \big|_{\substack{\phi_c=0\\\psi_c=0}}$&$-1$&$1$&$3$&$20$&$189$&$2232$\\
\hline
$\partial_{\phi_c}^{1} (\partial_{\psi_c} \partial_{ \bar \psi_c} )^{1} G^{\text{Yuk}} \big|_{\substack{\phi_c=0\\\psi_c=0}}$&$1$&$1$&$9$&$100$&$1323$&$20088$\\
\hline
\end{tabular}
\subcaption{The first coefficients of the trivariate generating function $G^{\text{Yuk}}(\hbar, \varphi_c, \psi_c)$.}
\end{subtable}
\begin{subtable}[c]{\textwidth}
\centering
\tiny
\def\arraystretch{1.5}
\begin{tabular}{|c||c||c|c|c|c|c|c|}
\hline
&prefactor&$\hbar^{0}$&$\hbar^{1}$&$\hbar^{2}$&$\hbar^{3}$&$\hbar^{4}$&$\hbar^{5}$\\
\hline\hline
$\asyOpV{\frac12}{0}{\hbar} \partial_{\phi_c}^{0} (\partial_{\psi_c} \partial_{ \bar \psi_c} )^{0} G^{\text{Yuk}} \big|_{\substack{\phi_c=0\\\psi_c=0}}$&$e^{-1} \frac{\hbar^{1}}{\sqrt{2\pi\hbar}}$&$1$&$- \frac{3}{2}$&$- \frac{31}{8}$&$- \frac{393}{16}$&$- \frac{28757}{128}$&$- \frac{3313201}{1280}$\\
\hline
$\asyOpV{\frac12}{0}{\hbar} \partial_{\phi_c}^{1} (\partial_{\psi_c} \partial_{ \bar \psi_c} )^{0} G^{\text{Yuk}} \big|_{\substack{\phi_c=0\\\psi_c=0}}$&$e^{-1} \frac{\hbar^{0}}{\sqrt{2\pi\hbar}}$&$1$&$- \frac{5}{2}$&$- \frac{43}{8}$&$- \frac{579}{16}$&$- \frac{44477}{128}$&$- \frac{5326191}{1280}$\\
\hline
$\asyOpV{\frac12}{0}{\hbar} \partial_{\phi_c}^{2} (\partial_{\psi_c} \partial_{ \bar \psi_c} )^{0} G^{\text{Yuk}} \big|_{\substack{\phi_c=0\\\psi_c=0}}$&$e^{-1} \frac{\hbar^{-1}}{\sqrt{2\pi\hbar}}$&$1$&$- \frac{9}{2}$&$- \frac{43}{8}$&$- \frac{751}{16}$&$- \frac{63005}{128}$&$- \frac{7994811}{1280}$\\
\hline
$\asyOpV{\frac12}{0}{\hbar} \partial_{\phi_c}^{0} (\partial_{\psi_c} \partial_{ \bar \psi_c} )^{1} G^{\text{Yuk}} \big|_{\substack{\phi_c=0\\\psi_c=0}}$&$e^{-1} \frac{\hbar^{-1}}{\sqrt{2\pi\hbar}}$&$1$&$- \frac{9}{2}$&$- \frac{43}{8}$&$- \frac{751}{16}$&$- \frac{63005}{128}$&$- \frac{7994811}{1280}$\\
\hline
$\asyOpV{\frac12}{0}{\hbar} \partial_{\phi_c}^{1} (\partial_{\psi_c} \partial_{ \bar \psi_c} )^{1} G^{\text{Yuk}} \big|_{\substack{\phi_c=0\\\psi_c=0}}$&$e^{-1} \frac{\hbar^{-2}}{\sqrt{2\pi\hbar}}$&$1$&$- \frac{17}{2}$&$ \frac{29}{8}$&$- \frac{751}{16}$&$- \frac{75021}{128}$&$- \frac{10515011}{1280}$\\
\hline
\end{tabular}
\subcaption{The first coefficients of the trivariate generating function $\asyOpV{\frac12}{0}{\hbar}G^{\text{Yuk}}(\hbar, \varphi_c, \psi_c)$.}
\end{subtable}
\caption{Effective action in Yukawa-theory.}
\label{tab:GYuk}
\end{table}

The calculation of the renormalization constants proceeds as in the other cases with the invariant charge defined as for QED. The first coefficients are listed in Table \ref{tab:Yukren}.

\begin{table}
\begin{subtable}[c]{\textwidth}
\centering
\tiny
\def\arraystretch{1.5}
\begin{tabular}{|c||c|c|c|c|c|c|}
\hline
&$\hbar_{\text{ren}}^{0}$&$\hbar_{\text{ren}}^{1}$&$\hbar_{\text{ren}}^{2}$&$\hbar_{\text{ren}}^{3}$&$\hbar_{\text{ren}}^{4}$&$\hbar_{\text{ren}}^{5}$\\
\hline\hline
$\hbar(\hbar_{\text{ren}})$&$0$&$1$&$-5$&$10$&$-36$&$-164$\\
\hline
$z_{\phi_c^2}(\hbar_{\text{ren}})$&$1$&$1$&$-1$&$-3$&$-13$&$-147$\\
\hline
$z_{|\psi_c|^2}(\hbar_{\text{ren}})$&$1$&$1$&$-1$&$-3$&$-13$&$-147$\\
\hline
$z_{\phi_c |\psi_c|^2}(\hbar_{\text{ren}})$&$1$&$-1$&$-3$&$-13$&$-147$&$-1965$\\
\hline
\end{tabular}
\subcaption{Table of the first coefficients of the renormalization quantities in Yukawa-theory.}
\end{subtable}
\begin{subtable}[c]{\textwidth}
\centering
\tiny
\def\arraystretch{1.5}
\begin{tabular}{|c||c||c|c|c|c|c|c|}
\hline
&prefactor&$\hbar_{\text{ren}}^{0}$&$\hbar_{\text{ren}}^{1}$&$\hbar_{\text{ren}}^{2}$&$\hbar_{\text{ren}}^{3}$&$\hbar_{\text{ren}}^{4}$&$\hbar_{\text{ren}}^{5}$\\
\hline\hline
$\left(\asyOpV{\frac12}{0}{\hbar_{\text{ren}}} \hbar \right)(\hbar_{\text{ren}})$&$e^{- \frac{7}{2}} \frac{\hbar^{-1}}{\sqrt{2 \pi\hbar}}$&$-2$&$26$&$- \frac{377}{4}$&$ \frac{963}{2}$&$ \frac{140401}{64}$&$ \frac{16250613}{320}$\\
\hline
$\left(\asyOpV{\frac12}{0}{\hbar_{\text{ren}}} z_{\phi_c^2} \right)(\hbar_{\text{ren}})$&$e^{- \frac{7}{2}} \frac{\hbar^{-1}}{\sqrt{2 \pi\hbar}}$&$-1$&$ \frac{15}{2}$&$ \frac{97}{8}$&$ \frac{1935}{16}$&$ \frac{249093}{128}$&$ \frac{42509261}{1280}$\\
\hline
$\left(\asyOpV{\frac12}{0}{\hbar_{\text{ren}}} z_{|\psi_c|^2} \right)(\hbar_{\text{ren}})$&$e^{- \frac{7}{2}} \frac{\hbar^{-1}}{\sqrt{2 \pi\hbar}}$&$-1$&$ \frac{15}{2}$&$ \frac{97}{8}$&$ \frac{1935}{16}$&$ \frac{249093}{128}$&$ \frac{42509261}{1280}$\\
\hline
$\left(\asyOpV{\frac12}{0}{\hbar_{\text{ren}}} z_{\phi_c |\psi_c|^2} \right)(\hbar_{\text{ren}})$&$e^{- \frac{7}{2}} \frac{\hbar^{-2}}{\sqrt{2 \pi\hbar}}$&$-1$&$ \frac{15}{2}$&$ \frac{97}{8}$&$ \frac{1935}{16}$&$ \frac{249093}{128}$&$ \frac{42509261}{1280}$\\
\hline
\end{tabular}
\subcaption{Table of the first coefficients of the asymptotics of the renormalization quantities in Yukawa-theory.}
\end{subtable}
\caption{Renormalization constants in Yukawa-theory.}
\label{tab:Yukren}
\end{table}

In \cite{molinari2005hedin,molinari2006enumeration} various low-order coefficients, which were obtained in this section, were enumerated using Hedin's equations \cite{hedin1965new}. The numerical results for the asymptotics given in \cite{molinari2006enumeration} agree with the analytic results obtained here. The $\Gamma(x)$ expansion of  \cite{molinari2006enumeration} corresponds to the generating function $\partial_{\varphi_c}\partial_{\psi_c} \partial_{\bar \psi_c} G^\text{Yuk} \big|_{\substack{\varphi_c=0\\\psi_c=0}}(\hbar)$ and the $\Gamma(u)$ expansion to the generating function $2-z_{\varphi_c \psi_c \bar \psi_c}(\hbar)$. The later is the generating function of all skeleton diagrams in Yukawa theory. Written traditionally the asymptotics are, 
\begin{gather*} [\hbar_\text{ren}^n]( 1-z_{\varphi_c \psi_c \bar \psi_c}(\hbar_\text{ren})) \underset{n\rightarrow \infty}{\sim} e^{-\frac{7}{2}} (2n+3)!! \left( 1 - \frac{15}{2} \frac{1}{2n+3} \right. \\ \left. -\frac{97}{8}\frac{1}{(2n+1)(2n+3)} - \frac{1935}{16}\frac{1}{(2n-1)(2n+1)(2n+3)} +\ldots \right). \end{gather*}

\section*{Acknowledgements}
Many thanks to Dirk Kreimer for steady encouragement and counseling. I wish to express my gratitude to David Broadhurst. He sparked my interest in asymptotic expansions in zero-dimensional field theory and I benefited greatly from our discussions.
\bibliographystyle{plainnat}
\bibliography{literature}

\begin{thebibliography}{52}
\providecommand{\natexlab}[1]{#1}
\providecommand{\url}[1]{\texttt{#1}}
\expandafter\ifx\csname urlstyle\endcsname\relax
  \providecommand{\doi}[1]{doi: #1}\else
  \providecommand{\doi}{doi: \begingroup \urlstyle{rm}\Url}\fi

\bibitem[Albert and Barab{\'a}si(2002)]{albert2002statistical}
R~Albert and AL~Barab{\'a}si.
\newblock Statistical mechanics of complex networks.
\newblock \emph{Reviews of modern physics}, 74\penalty0 (1):\penalty0 47, 2002.

\bibitem[Alvarez(2004)]{alvarez2004langer}
G~Alvarez.
\newblock Langer--cherry derivation of the multi-instanton expansion for the
  symmetric double well.
\newblock \emph{Journal of mathematical physics}, 45\penalty0 (8):\penalty0
  3095--3108, 2004.

\bibitem[Argyres et~al.(2001)Argyres, van Hameren, Kleiss, and
  Papadopoulos]{argyres2001zero}
EN~Argyres, AFW van Hameren, RHP Kleiss, and CG~Papadopoulos.
\newblock Zero-dimensional field theory.
\newblock \emph{The European Physical Journal C-Particles and Fields},
  19\penalty0 (3):\penalty0 567--582, 2001.

\bibitem[Banderier and Drmota(2015)]{banderier2015formulae}
C~Banderier and M~Drmota.
\newblock Formulae and asymptotics for coefficients of algebraic functions.
\newblock \emph{Combinatorics, Probability and Computing}, 24\penalty0
  (01):\penalty0 1--53, 2015.

\bibitem[Basar et~al.(2013)Basar, Dunne, and {\"U}nsal]{basar2013resurgence}
G~Basar, GV~Dunne, and M~{\"U}nsal.
\newblock Resurgence theory, ghost-instantons, and analytic continuation of
  path integrals.
\newblock \emph{Journal of High Energy Physics}, 2013\penalty0 (10):\penalty0
  41, 2013.

\bibitem[Bender and Caswell(1978)]{bender1978asymptotic}
CM~Bender and WE~Caswell.
\newblock Asymptotic graph counting techniques in $\psi^{2 N}$ field theory.
\newblock \emph{Journal of Mathematical Physics}, 19\penalty0 (12):\penalty0
  2579--2586, 1978.

\bibitem[Bender and Wu(1969)]{bender1969anharmonic}
CM~Bender and TT~Wu.
\newblock Anharmonic oscillator.
\newblock \emph{Physical Review}, 184\penalty0 (5):\penalty0 1231, 1969.

\bibitem[Bender and Wu(1973)]{bender1973anharmonic}
CM~Bender and TT~Wu.
\newblock Anharmonic oscillator. {II.} a study of perturbation theory in large
  order.
\newblock \emph{Physical Review D}, 7\penalty0 (6):\penalty0 1620, 1973.

\bibitem[Berry and Howls(1991)]{berry1991hyperasymptotics}
MV~Berry and CJ~Howls.
\newblock Hyperasymptotics for integrals with saddles.
\newblock In \emph{Proceedings of the Royal Society of London A: Mathematical,
  Physical and Engineering Sciences}, volume 434, pages 657--675. The Royal
  Society, 1991.

\bibitem[Borinsky(2014)]{borinsky2014feynman}
M~Borinsky.
\newblock {Feynman} graph generation and calculations in the {Hopf} algebra of
  {Feynman} graphs.
\newblock \emph{Computer Physics Communications}, 185\penalty0 (12):\penalty0
  3317--3330, 2014.

\bibitem[Borinsky(2016{\natexlab{a}})]{borinsky2016generating}
M~Borinsky.
\newblock Generating asymptotics for factorially divergent sequences.
\newblock \emph{arXiv preprint arXiv:1603.01236}, 2016{\natexlab{a}}.

\bibitem[Borinsky(2016{\natexlab{b}})]{borinsky2016lattices}
M~Borinsky.
\newblock Algebraic lattices in {QFT} renormalization.
\newblock \emph{Letters in Mathematical Physics}, 106\penalty0 (7):\penalty0
  879--911, 2016{\natexlab{b}}.

\bibitem[Borinsky(2017)]{borinsky2017graph}
M~Borinsky.
\newblock An algebraic approach to graphical enumeration (to be published).
\newblock 2017.

\bibitem[Broadhurst()]{davidexpansion}
DJ~Broadhurst.
\newblock personal communication.

\bibitem[Broadhurst(1999)]{broadhurst1999four}
DJ~Broadhurst.
\newblock Four-loop {Dyson--Schwinger--Johnson} anatomy.
\newblock \emph{Physics Letters B}, 466\penalty0 (2):\penalty0 319--325, 1999.

\bibitem[Cherman et~al.(2014)Cherman, Dorigoni, and
  {\"U}nsal]{cherman2014decoding}
A~Cherman, D~Dorigoni, and M~{\"U}nsal.
\newblock Decoding perturbation theory using resurgence: {Stokes} phenomena,
  new saddle points and {Lefschetz} thimbles.
\newblock \emph{arXiv preprint arXiv:1403.1277}, 2014.

\bibitem[Connes and Kreimer(2001)]{connes2001renormalization}
A~Connes and D~Kreimer.
\newblock Renormalization in quantum field theory and the {Riemann--Hilbert}
  problem ii: The $\beta$-function, diffeomorphisms and the renormalization
  group.
\newblock \emph{Communications in Mathematical Physics}, 216\penalty0
  (1):\penalty0 215--241, 2001.

\bibitem[Cvitanovi{\'c} et~al.(1978)Cvitanovi{\'c}, Lautrup, and
  Pearson]{cvitanovic1978number}
P~Cvitanovi{\'c}, B~Lautrup, and RB~Pearson.
\newblock Number and weights of {Feynman} diagrams.
\newblock \emph{Physical Review D}, 18\penalty0 (6):\penalty0 1939, 1978.

\bibitem[Dingle(1973)]{dingle1973asymptotic}
RB~Dingle.
\newblock \emph{Asymptotic expansions: their derivation and interpretation},
  volume~48.
\newblock Academic Press London, 1973.

\bibitem[Dunne and {\"U}nsal(2012)]{dunne2012resurgence}
GV~Dunne and M~{\"U}nsal.
\newblock Resurgence and trans-series in quantum field theory: the cp (n-1)
  model.
\newblock \emph{arXiv preprint arXiv:1210.2423}, 2012.

\bibitem[Dunne and {\"U}nsal(2014)]{dunne2014generating}
GV~Dunne and M~{\"U}nsal.
\newblock Generating nonperturbative physics from perturbation theory.
\newblock \emph{Physical Review D}, 89\penalty0 (4):\penalty0 041701, 2014.

\bibitem[Dyson(1952)]{dyson1952divergence}
FJ~Dyson.
\newblock Divergence of perturbation theory in quantum electrodynamics.
\newblock \emph{Physical Review}, 85\penalty0 (4):\penalty0 631, 1952.

\bibitem[{\'E}calle(1981)]{ecalle1981fonctions}
J~{\'E}calle.
\newblock Les fonctions r{\'e}surgentes.
\newblock \emph{Publ. math. d'Orsay/Univ. de Paris, Dep. de math.}, 1981.

\bibitem[Eynard and Orantin(2007)]{eynard2007invariants}
B~Eynard and N~Orantin.
\newblock Invariants of algebraic curves and topological expansion.
\newblock \emph{arXiv preprint math-ph/0702045}, 2007.

\bibitem[Flajolet and Sedgewick(2009)]{flajolet2009analytic}
P~Flajolet and R~Sedgewick.
\newblock \emph{Analytic combinatorics}.
\newblock Cambridge University press, 2009.

\bibitem[Garoufalidis et~al.(2012)Garoufalidis, Its, Kapaev, and
  Mariño]{garoufalidis2012asymptotics}
S~Garoufalidis, A~Its, A~Kapaev, and M~Mariño.
\newblock Asymptotics of the instantons of {P}ainlev{\'e} {I}.
\newblock \emph{International Mathematics Research Notices}, 2012\penalty0
  (3):\penalty0 561--606, 2012.

\bibitem[Gell-Mann and Low(1954)]{gell1954quantum}
Murray Gell-Mann and Francis~E Low.
\newblock Quantum electrodynamics at small distances.
\newblock \emph{Physical Review}, 95\penalty0 (5):\penalty0 1300, 1954.

\bibitem[Goldberg and Vaughn(1991)]{goldberg1991tree}
H~Goldberg and MT~Vaughn.
\newblock Tree and nontree multiparticle amplitudes.
\newblock \emph{Physical review letters}, 66\penalty0 (10):\penalty0 1267,
  1991.

\bibitem[Hedin(1965)]{hedin1965new}
L~Hedin.
\newblock New method for calculating the one-particle {Green}'s function with
  application to the electron-gas problem.
\newblock \emph{Physical Review}, 139\penalty0 (3A):\penalty0 A796, 1965.

\bibitem[Hurst(1952)]{hurst1952enumeration}
CA~Hurst.
\newblock The enumeration of graphs in the {Feynman-Dyson} technique.
\newblock In \emph{Proceedings of the Royal Society of London A: Mathematical,
  Physical and Engineering Sciences}, volume 214, pages 44--61. The Royal
  Society, 1952.

\bibitem[Itzykson and Zuber(2006)]{itzykson2006quantum}
C~Itzykson and JB~Zuber.
\newblock \emph{Quantum field theory}.
\newblock Courier Corporation, 2006.

\bibitem[Jackson et~al.(2016)Jackson, Kempf, and Morales]{jackson2016robust}
DM~Jackson, A~Kempf, and AH~Morales.
\newblock A robust generalization of the {Legendre} transform for {QFT}.
\newblock \emph{arXiv preprint arXiv:1612.00462}, 2016.

\bibitem[Kontsevich(1992)]{kontsevich1992intersection}
M~Kontsevich.
\newblock Intersection theory on the moduli space of curves and the matrix
  {Airy} function.
\newblock \emph{Communications in Mathematical Physics}, 147\penalty0
  (1):\penalty0 1--23, 1992.

\bibitem[Kreimer(2006)]{kreimer2006anatomy}
D~Kreimer.
\newblock Anatomy of a gauge theory.
\newblock \emph{Annals of Physics}, 321\penalty0 (12):\penalty0 2757--2781,
  2006.

\bibitem[Kuchinskii and Sadovskii(1998)]{kuchinskii1998combinatorial}
EZ~Kuchinskii and MV~Sadovskii.
\newblock Combinatorial analysis of {Feynman} diagrams in problems with a
  {Gaussian} random field.
\newblock \emph{Journal of Experimental and Theoretical Physics}, 86\penalty0
  (2):\penalty0 367--374, 1998.

\bibitem[Lautrup(1977)]{lautrup1977high}
B~Lautrup.
\newblock On high order estimates in {QED}.
\newblock \emph{Physics letters B}, 69\penalty0 (1):\penalty0 109--111, 1977.

\bibitem[Le~Guillou and Zinn-Justin(2012)]{le2012large}
JC~Le~Guillou and J~Zinn-Justin.
\newblock \emph{Large-order behaviour of perturbation theory}, volume~7.
\newblock Elsevier, 2012.

\bibitem[Lipatov(1977)]{lipatov1977divergence}
LN~Lipatov.
\newblock Divergence of the perturbation theory series and the quasiclassical
  theory.
\newblock \emph{Sov. Phys. JETP}, 45\penalty0 (2):\penalty0 216--223, 1977.

\bibitem[Manchon(2004)]{manchon2004hopf}
D~Manchon.
\newblock Hopf algebras, from basics to applications to renormalization.
\newblock \emph{arXiv preprint math/0408405}, 2004.

\bibitem[Mitschi and Sauzin(2016)]{mitschi2016divergent}
C~Mitschi and D~Sauzin.
\newblock \emph{Divergent Series, Summability and Resurgence I}.
\newblock Springer, 2016.

\bibitem[Molinari(2005)]{molinari2005hedin}
LG~Molinari.
\newblock Hedin's equations and enumeration of {Feynman} diagrams.
\newblock \emph{Physical Review B}, 71\penalty0 (11):\penalty0 113102, 2005.

\bibitem[Molinari and Manini(2006)]{molinari2006enumeration}
LG~Molinari and N~Manini.
\newblock Enumeration of many-body skeleton diagrams.
\newblock \emph{The European Physical Journal B-Condensed Matter and Complex
  Systems}, 51\penalty0 (3):\penalty0 331--336, 2006.

\bibitem[Paris(1992)]{paris1992smoothing}
RB~Paris.
\newblock Smoothing of the {Stokes} phenomenon using {Mellin-Barnes} integrals.
\newblock \emph{Journal of computational and applied mathematics}, 41\penalty0
  (1-2):\penalty0 117--133, 1992.

\bibitem[Penner(1988)]{penner1988perturbative}
RC~Penner.
\newblock Perturbative series and the moduli space of {Riemann} surfaces.
\newblock \emph{Journal of Differential Geometry}, 27\penalty0 (1):\penalty0
  35--53, 1988.

\bibitem[Sloane(2005)]{oeis}
NJA Sloane.
\newblock The on-line encyclopedia of integer sequences. http://oeis.org.,
  2005.

\bibitem[Suslov(2005)]{suslov2005divergent}
IM~Suslov.
\newblock Divergent perturbation series.
\newblock \emph{Journal of Experimental and Theoretical Physics}, 100\penalty0
  (6):\penalty0 1188--1233, 2005.

\bibitem[van Suijlekom(2007)]{van2007renormalization}
WD~van Suijlekom.
\newblock Renormalization of gauge fields: A {Hopf} algebra approach.
\newblock \emph{Communications in Mathematical Physics}, 276\penalty0
  (3):\penalty0 773--798, 2007.

\bibitem[Weinberg(1960)]{weinberg1960high}
S~Weinberg.
\newblock High-energy behavior in quantum field theory.
\newblock \emph{Physical Review}, 118\penalty0 (3):\penalty0 838, 1960.

\bibitem[Whittaker and Watson(1996)]{whittaker1996course}
ET~Whittaker and GN~Watson.
\newblock \emph{A course of modern analysis}.
\newblock Cambridge University Press, 1996.

\bibitem[Wormald(1985)]{wormald1985enumeration}
NC~Wormald.
\newblock Enumeration of cyclically 4-connected cubic graphs.
\newblock \emph{Journal of graph theory}, 9\penalty0 (4):\penalty0 563--573,
  1985.

\bibitem[Wright(1970)]{wright1970asymptotic}
EM~Wright.
\newblock Asymptotic relations between enumerative functions in graph theory.
\newblock \emph{Proceedings of the London Mathematical Society}, 3\penalty0
  (3):\penalty0 558--572, 1970.

\bibitem[Zichichi(1979)]{Zichichi:1979gj}
A~Zichichi.
\newblock The whys of subnuclear physics.
\newblock Springer, 1979.

\end{thebibliography}
\appendix

\section{Singularity analysis of $x(y)$}
\label{sec:singanalysis}

\begin{proof}[Proof of Theorem \ref{thm:comb_int_asymp}]
Starting with Proposition \ref{prop:formalchangeofvar}
\begin{align} \label{eqn:appendixrefF} \mathcal{F}[\Sact(x)](\hbar)&= \sum_{n=0}^\infty \hbar^{n} (2n+1)!! [y^{2n+1}] x(y), \end{align}
we wish to compute the singular expansion of $x(y)$ at the removable singularity $(x,y)= (\tau_i, \rho_i)$ defined as the solution of $\frac{y^2}{2} = -\Sact(x)$ with positive linear coefficient, to obtain the asymptotics of $\mathcal{F}[\Sact(x)](\hbar)$.

Solving for $y$ and shifting the defining equation of the hyperelliptic curve to the point of the singularity gives,
\begin{align*} y &= \sqrt{-2\Sact(x) } \\ \rho_i - y &= \sqrt{-2\Sact(\tau_i) } - \sqrt{-2\Sact(x)} \\ 1 - \frac{y}{\rho_i} &= 1- \sqrt{\frac{\Sact(x)}{\Sact(\tau_i)}} \intertext{where $\rho_i = \sqrt{-2\Sact(\tau_i)}$. The right hand side is expected to be of the form $\approx \frac{\Sact''(\tau_i)}{2} (\tau_i-x)^2$ for $x\rightarrow \tau_i$. Setting $u_i=\sqrt{1-\frac{y}{\rho_i}}$ and $v_i=\tau_i-x$ gives} u_i &= \sqrt{1- \sqrt{\frac{\Sact(\tau_i-v_i)}{\Sact(\tau_i)}}}, \end{align*}
where the branch of the square root which agrees with the locally positive expansion around the origin must be taken.
We would like to solve this equation for $v_i$ to obtain the Puiseux expansion at the singular point:
\begin{align*} v_i(u_i) = \sum_{k=1}^\infty d_{i,k} u_i^k \end{align*}
The coefficients $d_{i,k}$ can be expressed using the Lagrange inversion formula,
\begin{align} \label{eqn:lagrange_singularity} d_{i,k} = [u_i^n] v_i(u_i) = \frac{1}{n} [v_i^{n-1}] \left( \frac{v_i}{\sqrt{1-\sqrt{\frac{\Sact(\tau_i-v_i)}{\Sact(\tau_i)}}}} \right)^n. \end{align}
The asymptotics of $[y^n] x(y)$ are given by the singular expansions around dominant singularities,
\begin{align*} [y^n] x(y) &\underset{n\rightarrow \infty}{\sim} [y^n] \sum_{i \in I} \sum_{k=0} d_{i,k} \left(1 - \frac{y}{\rho_i}\right)^\frac{k}{2}. \intertext{Expanding using the generalized binomial theorem and noting that only odd powers of $k$ contribute asymptotically gives,} [y^n] x(y) &\underset{n\rightarrow \infty}{\sim} \sum_{i\in I} (-1)^n \rho_i^{-n} \sum_{k=0} { k+\frac12 \choose n} d_{i,2k}. \end{align*}
Substituted into eq.\ \eqref{eqn:appendixrefF}, this results in,
\begin{align*} [\hbar^n] \mathcal{F}[S](\hbar) &= - (2n+1)!! \sum_{i \in I} \rho_i^{-2n-1} \sum_{k=0}^{R-1} { k+\frac12 \choose 2n+1 } d_{i,2k+1} \\ &+ \bigO\left(\sum_{i \in I}\left(\frac{2}{\rho_i^2}\right)^{n+\frac12}\Gamma(n-R)\right) \end{align*}
where the asymptotic behavior of the binomial ${ k+\frac12 \choose 2n+1 } \underset{n\rightarrow \infty}{\sim} \frac{C_k}{(2n+1)^{k+\frac32}}$ and the double factorial $(2n+1)!! = 2^{n+\frac32} \frac{\Gamma(n+\frac32)}{\sqrt{2 \pi}}$ were used to derive the form of the rest term. Substituting eq.\ \eqref{eqn:lagrange_singularity} results in 
\begin{align} \begin{split} \label{eqn:Fopsingular} [\hbar^n] \mathcal{F}[S](\hbar) &= \sum_{k=0}^{R-1} \frac{(2n-1)!!}{2} { k-\frac12 \choose 2n } \sum_{i \in I} \rho_i^{-2n-1} [v_i^{2k}] \phi_i(v_i)^{2k+1} \\ &+ \bigO\left(\sum_{i \in I}(-\Sact(\tau_i))^{n}\Gamma(n-R)\right) \end{split} \end{align}
where $\phi_i(v_i):= \frac{-v_i}{\sqrt{1-\sqrt{\frac{\Sact(\tau_i-v_i)}{\Sact(\tau_i)}}}}$.
It is easily checked by the reflection and duplication formulas for the $\Gamma$-function that
\begin{align} \label{eqn:gamma_binom_ident} \frac{(2n-1)!!}{2} { k - \frac12 \choose 2n } &= \frac{(-1)^k 2^{n-k} \Gamma(k+\frac12)}{(2\pi)^{\frac32}} \frac{\Gamma(n-\frac{k}{2} +\frac14) \Gamma(n-\frac{k}{2}+\frac34) }{ \Gamma(n+1) }. \end{align}
The following lemma by Paris \cite[Lemma 1]{paris1992smoothing},
\begin{align*} \frac{\Gamma(n+a)\Gamma(n+b)}{n!} &= \sum_{m=0}^{R-1} (-1)^m \frac{\risefac{(1-a)}{m} \risefac{(1-b)}{m}}{m!} \Gamma(n+a+b-1-m) && \\ &+ \bigO(\Gamma(n+a+b-1-R)) &&\forall R \geq 0, \end{align*}
can be used to expand the product of $\Gamma$ functions. The expression $\risefac{a}{n}=\frac{\Gamma(n+a)}{\Gamma(a)}$ denotes the rising factorial. 
Applying this to eq. \eqref{eqn:gamma_binom_ident} gives,
\begin{gather*} \frac{(2n-1)!!}{2} { k - \frac12 \choose 2n } = \\ \frac{(-1)^k 2^{n-k}}{(2\pi)^\frac32} \sum_{m=0}^{R-k-1} (-1)^m 2^{-2m} \frac{\Gamma(k+\frac12+2m)} { m!} \Gamma(n-k-m) + \bigO(\Gamma(n-R)) \\ = \frac{2^{n}}{2\pi} \sum_{m=0}^{R-k-1} (-1)^{m+k} 2^{-3m-2k-\frac12} (2(m+k)+1)!! \\ \times { 2m + k - \frac12 \choose m } \Gamma(n-k-m) + \bigO(\Gamma(n-R)) \end{gather*}
This can be substituted into eq.\ \eqref{eqn:Fopsingular}:
\begin{gather*} [\hbar^n] \mathcal{F}[S](\hbar) = \sum_{k=0}^{R-1} \frac{2^{n}}{2\pi} \sum_{m=0}^{R-k-1} (-1)^{m+k} 2^{-3m-2k-\frac12} (2(m+k)+1)!! \\ \times { 2m + k - \frac12 \choose m }\Gamma(n-k-m) \sum_{i \in I} \rho_i^{-2n-1} [v_i^{2k}] \phi_i(v_i)^{2k+1} \\ + \bigO\left( \sum_{i \in I}(-\Sact(\tau_i))^{n} \Gamma(n-R)\right) \\ = \sum_{m=0}^{R-1} \frac{2^{n}}{2\pi} (-1)^{m} 2^{-2m-\frac12} (2m+1)!! \Gamma(n-m) \\ \times \sum_{i \in I} \rho_i^{-2n-1} \sum_{k=0}^{m} 2^{-k} { m + k - \frac12 \choose k } [v_i^{2(m-k)}] \phi_i(v_i)^{2(m-k)+1} \\ + \bigO\left(\sum_{i \in I}(-\Sact(\tau_i))^{n}\Gamma(n-R)\right) \end{gather*}
The inner sum evaluates to,
\begin{align*} &\sum_{k=0}^{m} 2^{-k} { m + k - \frac12 \choose k } [v_i^{2(m-k)}] \phi_i(v_i)^{2(m-k)+1} \\ &= [v_i^{2m}] \phi_i(v_i)^{2m+1} \sum_{k=0}^{m} { m + k - \frac12 \choose k } 2^{-k}\left(\frac{v_i}{\phi_i(v_i)} \right)^{2k} \\ &= [v_i^{2m}] \frac{\phi_i(v_i)^{2m+1}}{\left(1-\frac12 \left(\frac{v_i}{\phi_i(v_i)} \right)^2\right)^{m+\frac12}} = [v_i^{2m}] \left( \frac{\phi_i(v_i)}{\sqrt{1-\frac12 \left(\frac{v_i}{\phi_i(v_i)} \right)^2}}\right)^{2m+1} \\ & = 2^{m+\frac12} [v_i^{2m}] \left( \frac{-v_i}{\sqrt{ 1 - \frac{\Sact(\tau_i-v_i)}{\Sact(\tau_i)} } } \right)^{2m+1} \\ & = (-2\Sact(\tau_i))^{m+\frac12} [v_i^{2m}] \left( \frac{-v_i}{\sqrt{ \Sact(\tau_i-v_i) -\Sact(\tau_i) } } \right)^{2m+1} \end{align*}
Therefore,
\begin{align*} [\hbar^n] \mathcal{F}[S](\hbar) &= \frac{1}{2\pi} \sum_{m=0}^{R-1} (-1)^{m} (2m+1)!! \Gamma(n-m) \\ &\times \sum_{i \in I} (-\Sact(\tau_i))^{n-m} [v_i^{2m}] \left( \frac{v_i}{\sqrt{ 2\Sact(\tau_i+v_i) -2\Sact(\tau_i) } } \right)^{2m+1} \\ &+ \bigO\left(\sum_{i \in I} (-\Sact(\tau_i))^{n}\Gamma(n-R)\right) \\ &= \frac{1}{2\pi i} \sum_{m=0}^{R-1} (2m+1)!! \Gamma(n-m) \\ &\times \sum_{i \in I} (-\Sact(\tau_i))^{n-m} [v_i^{2m}] \left( \frac{v_i}{\sqrt{ - 2 \left( \Sact(\tau_i+v_i) -\Sact(\tau_i) \right) } } \right)^{2m+1} \\ &+ \bigO\left(\sum_{i \in I} (-\Sact(\tau_i))^{n}\Gamma(n-R)\right) \end{align*}
which proves the theorem after using the Lagrange inversion formula and Proposition \ref{prop:formalchangeofvar}.
\end{proof}

\end{document}